\documentclass[11pt]{article}
\usepackage[T1]{fontenc}
\usepackage[utf8]{inputenc}
\usepackage[british]{babel}
\usepackage{lmodern,amsmath,amssymb,amsthm,mathtools,microtype}
\usepackage[a4paper,margin=27mm]{geometry}
\usepackage{booktabs,enumitem,graphicx,tikz,pgfplots}
\usepackage{placeins}
\usetikzlibrary{arrows.meta,positioning}
\pgfplotsset{compat=1.18}
\definecolor{ink}{HTML}{203447}
\definecolor{plotblue}{HTML}{176B91}
\definecolor{plotorange}{HTML}{BD6423}
\usepackage[colorlinks=true,linkcolor=ink,citecolor=ink,urlcolor=plotblue]{hyperref}
\setlist{nosep}
\newtheorem{theorem}{Theorem}[section]
\newtheorem{lemma}[theorem]{Lemma}
\newtheorem{proposition}[theorem]{Proposition}

\theoremstyle{remark}

\DeclareMathOperator{\TV}{TV}
\DeclareMathOperator{\Var}{Var}

\newcommand{\SR}{\mathrm{SR}}
\newcommand{\EA}{\mathrm{EA}}
\newcommand{\NS}{\mathrm{NS}}
\newcommand{\E}{\mathbb E}

\newcommand{\id}{\mathrm{id}}

\hypersetup{pdftitle={Channel dichotomies},pdfauthor={Aadil Oufkir and Mario Berta},
pdfsubject={Second-order channel interconversion under non-signalling superchannels}}
\title{Channel dichotomies}
\author{Aadil Oufkir\textsuperscript{1}\qquad Mario Berta\textsuperscript{2}\\[0.6em]
\small\textsuperscript{1}College of Mathematical Sciences \& Intelligent Systems, Mohammed VI Polytechnic University,\\
\small Rocade Rabat-Sal\'e, Technopolis, Morocco\\[0.4em]
\small\textsuperscript{2}Institute for Quantum Information, RWTH Aachen University,\\
\small Aachen, Germany}
\date{}

\begin{document}
\maketitle
\begin{abstract}
We study channel interconversion under non-signalling superchannels, where the asymptotic conversion rate is given by the ratio of the channel capacities.
Our main result determines the second-order expansion of the optimal
target length. Namely, when each channel has a single capacity-achieving
dispersion, the coefficient is given by the inverse sesquinormal
distribution, with parameters determined by the capacities and dispersions.
The result extends state-dichotomy theory to channels: one superchannel
approximates the target uniformly over inputs and maps every reference
channel whose output is independent of its input to another such channel.
As in state dichotomies, there are finite-error resonances where the
second-order term vanishes, leaving an $O(\log n)$ remainder.
Crucially, our new characterization cannot be obtained by concatenating channel
coding and simulation protocols [Cao {\it et al.}~IEEE
Trans.~Inf.~Theory (2024)]. Instead, the proof combines comparisons of
binary hypothesis testing experiments with uniform Gaussian estimates
and a construction that uses symmetry within type classes to satisfy
the non-signalling constraints. Shared randomness assistance already suffices at
first order, but we show that its second-order behaviour can differ
strictly from the non-signalling case.
\end{abstract}

\section{Introduction}
\label{sec:introduction}
\noindent\textbf{AI disclaimer.} {\it Based on previous tight second-order results for the interconversion of specific channels, the general characterization given here was developed with ChatGPT (5.6 Sol and 6 Astra). The manuscript proposes a solution to an open problem in quantum information theory that we believe is of interest to the community. We also see an urgent need for a community discussion of appropriate formats for the communication of AI-generated scientific content.}

\medskip
Channel coding \cite{shannon1948communication} uses a noisy channel
to transmit messages reliably. Channel simulation
\cite{bennett2002reverse,cao2024channel} uses noiseless communication
to reproduce a noisy channel. Channel interconversion combines these
two tasks \cite{cubitt2011zero,oufkir2026interconversion}.
Given a resource channel $T$ and a target channel $W$, we ask how many
uses of $W$ can be simulated using $n$ uses of $T$, at a prescribed error.

With unlimited shared randomness, a channel code for $T$ followed by
a simulation protocol for $W$ achieves the first-order rate $C_T/C_W$
\cite{shannon1948communication,bennett2002reverse,cao2024channel}.
The reverse conversion has the reciprocal rate. Thus capacity
determines reversibility at first order. At finite blocklength, however,
optimizing coding and simulation separately need not give the best
interconversion protocol. This approach first transmits information
reliably and then adds noise. A direct protocol can instead use the
noise already present in the resource. For example, a channel
simulates itself exactly by direct use. The capacity ratio alone
therefore does not determine the second-order correction.

\subsection{Overview of main results}
We characterize this correction under \emph{non-signalling superchannels}.
The assistance is a non-signalling channel $N(x,z\mid s,y)$, also
called a non-signalling box. The sender supplies $s$ and receives $x$.
The receiver supplies $y$ and receives $z$. The distribution of $x$
does not depend on $y$, and the distribution of $z$ does not depend
on $s$. Connecting the resource channel to this box gives
\[
 \Theta_N[T](z\mid s)=\sum_{x,y}T(y\mid x)N(x,z\mid s,y).
\]
Non-signalling correlations include those obtainable from shared
randomness or shared entanglement. Their optimal performance therefore
gives a converse bound for both assistance classes
\cite{cubitt2011zero}. These correlations do not themselves transmit a
message: all communication passes through the resource channel.
Figure~\ref{fig:superchannel} shows the corresponding transformation.
The sender's resource input is chosen without access to the resource
output. No feedback is supplied. We measure error by the maximum,
over target input words, of the total variation distance between the
simulated and target output laws. This ensures that the approximation
holds for every target input, without assuming an input distribution.

\begin{figure}[t]
\centering
\begin{tikzpicture}[>=Latex, font=\small,
 resource/.style={draw=ink,thick,minimum width=1.7cm,minimum height=.75cm,fill=blue!7}]
 \draw[draw=ink,thick,rounded corners=3pt,fill=blue!3]
       (0,0) rectangle (7,1.45);
 \node at (3.5,1.08) {joint NS channel $N(x^n,z^m\mid s^m,y^n)$};
 \node at (1.15,.45) {sender};
 \node at (5.85,.45) {receiver};
 \draw[densely dotted,ink!45] (3.5,.15) -- (3.5,.72);
 \draw[->,thick] (-1.3,.45) node[left] {$s^m$} -- (0,.45);
 \draw[->,thick] (7,.45) -- (8.3,.45) node[right] {$z^m$};
 \node[resource] (t) at (3.5,-1.15) {$T^{\otimes n}$};
 \draw[->,thick] (1.15,0) -- node[left] {$x^n$}
       (1.15,-1.15) -- (t.west);
 \draw[->,thick] (t.east) -- (5.85,-1.15)
       -- node[right] {$y^n$} (5.85,0);
 \node at (3.5,-1.95)
 {$\Theta_N[T^{\otimes n}]\ \approx\ W^{\otimes m}$};
\end{tikzpicture}
\caption{A non-signalling superchannel. The upper box is one bipartite
conditional distribution, not two local maps supplied with a shared
classical variable. Its sender marginal is independent of $y^n$, and
its receiver marginal is independent of $s^m$. Connecting its output
$x^n$ to its input $y^n$ through $T^{\otimes n}$ gives the simulated
channel. The return wire is not feedback to the sender.}
\label{fig:superchannel}
\end{figure}

The general result allows all dispersions in
$[V_{\min}(K),V_{\max}(K)]$, provided
$C_K>0$ and $V_{\min}(K)>0$ for $K=T,W$
(Theorem~\ref{thm:general}), but to keep the main text simple,
we assume (the general result and its proof are given in Appendix~\ref{app:nonunique})
\[
 V_{\min}(T)=V_{\max}(T)=V_T>0,\qquad
 V_{\min}(W)=V_{\max}(W)=V_W>0.
\]
This specialization includes nontrivial binary symmetric and binary
erasure channels and does not require a unique capacity-achieving input.
Let $m^*_{\NS}(n,\epsilon;T,W)$ be the largest target length attainable
with error at most $\epsilon$.

\begin{theorem}[Second-order channel interconversion]
\label{thm:main}
Under these assumptions, for every fixed $0<\epsilon<1$,
\begin{equation}
 m^*_{\NS}(n,\epsilon;T,W)
 =rn+\frac{d_\epsilon}{C_W}\sqrt n+O(\log n),\qquad
 r=\frac{C_T}{C_W},
 \label{eq:main-length}
\end{equation}
where the coefficient is the one-dimensional Gaussian optimization
\begin{equation}
 d_\epsilon=
 \inf_{\epsilon<\eta<1}
 \left\{\sqrt{V_T}\,\Phi^{-1}(\eta)
       -\sqrt{rV_W}\,\Phi^{-1}(\eta-\epsilon)\right\}.
 \label{eq:quantile}
\end{equation}
Here $\Phi$ is the standard normal distribution function.
\end{theorem}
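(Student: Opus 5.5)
We prove the two matching bounds separately. Both rest on rewriting the non-signalling interconversion problem as a comparison of binary hypothesis testing experiments, the channel analogue of state dichotomies. The resource side is the family of experiments $\{T^{\otimes n}(\cdot\mid x^n)$ versus a product reference $q^{\otimes n}\}$. The target side is the corresponding family for $W^{\otimes m}$. The reference channels are those whose output is independent of the input, and a non-signalling superchannel maps every such channel to another one. So the natural figure of merit is the hypothesis-testing relative entropy between a channel and its best constant reference, and by the meta-converse this equals the non-signalling coding/simulation quantities.

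\textbf{Converse.} Fix an NS superchannel $\Theta_N$ with $\max_{s^m}\TV(\Theta_N[T^{\otimes n}](\cdot\mid s^m),W^{\otimes m}(\cdot\mid s^m))\le\epsilon$.
\begin{enumerate}
\item Apply $\Theta_N$ to a constant channel $R_q$ with output law $q^{\otimes n}$. Then $\Theta_N[R_q]$ is again constant; call its output law $\tilde q$.
\item Use data processing of the hypothesis-testing divergence $D_H^\eta$ under the joint NS box. This gives, for every $\eta\in(\epsilon,1)$ and with the input of the target chosen adversarially,
\[
\sup_{P_{X^n}}\inf_q D_H^{\eta}\bigl(T^{\otimes n}\|q\bigr)\ \ge\ \inf_{\tilde q}\max_{s^m} D_H^{\eta-\epsilon}\bigl(W^{\otimes m}(\cdot\mid s^m)\|\tilde q\bigr).
\]
The shift $\eta\to\eta-\epsilon$ comes from the triangle inequality with the TV error.
\item Expand the left side with Strassen-type second-order asymptotics for channels: $nC_T+\sqrt{nV_T}\,\Phi^{-1}(\eta)+O(\log n)$. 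Because $V_{\min}=V_{\max}$, no optimisation over dispersions is needed.
\item Expand the right side, choosing $s^m$ of a capacity-achieving type and $\tilde q$ essentially the optimal output distribution: $mC_W+\sqrt{mV_W}\,\Phi^{-1}(\eta-\epsilon)+O(\log m)$.
\item Substitute $m=rn+O(\sqrt n)$, so that $\sqrt{mV_W}=\sqrt{nrV_W}+o(1)$, solve for $m$, and take the infimum over $\eta$. This yields $m\le rn+d_\epsilon\sqrt n/C_W+O(\log n)$.
\end{enumerate}
The remainders must be uniform in $\eta$ near the optimiser. We handle this by using Berry--Esseen with explicit constants and by noting that the infimum in $d_\epsilon$ is attained in the interior of $(\epsilon,1)$, or else equals $-\infty$ trivially.

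\textbf{Achievability.} We build the protocol directly rather than concatenating coding and simulation.
\begin{enumerate}
\item By symmetry, restrict to target inputs in a type class. Permutation covariance within type classes lets us average the box over permutations, which makes the NS constraints linear and checkable type by type.
\item For each target input type, approximate $W^{\otimes m}(\cdot\mid s^m)$ by accepting and rejecting on the log-likelihood-ratio statistic $\log\frac{W^{\otimes m}}{\tilde q}$.
\item Couple this with the event that the resource information density $\log\frac{T^{\otimes n}}{q^{\otimes n}}$ exceeds a threshold. The resource input is drawn i.i.d.\ from a capacity-achieving distribution, or uniformly on its type class.
\item The NS box is the flat ``position-based'' construction of the meta-converse achievability: a joint law that is uniform on appropriately sized sets. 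It is non-signalling because the receiver marginal is the reference $\tilde q$ independently of $s^m$.
\item The resulting error is $\Pr[\text{resource test fails}]$ minus the mass the target law can absorb. Optimising the threshold gives exactly the $\eta$-tradeoff: error $\approx\eta-(\eta-\epsilon)$ with lengths matching the converse expansion.
\end{enumerate}

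\textbf{Main obstacle.} The hardest step is this achievability. It must attain error $\epsilon$ uniformly over all $s^m$ while realising the joint Gaussian tradeoff rather than the cruder sum of two separate dispersions. The plan is to use type-class symmetry to reduce to finitely many (polynomially many) input types. We then apply uniform Berry--Esseen estimates over types near capacity-achieving ones; the assumption $V_{\min}>0$ keeps these non-degenerate. Types far from capacity are handled by a cruder large-deviation bound. The polynomial number of types and the threshold discretisation contribute only the $O(\log n)$ remainder.
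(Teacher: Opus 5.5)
Your converse skeleton is essentially the paper's: evaluating your single-threshold comparison at the interior minimiser $\eta^*$ suffices for the length bound. The achievability plan, however, has a genuine gap, and converse step 4 needs repair.

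\textbf{Achievability.} As you describe it, the box uses the resource output only through the indicator that $\log(T^{\otimes n}/q^{\otimes n})$ exceeds one threshold $h$, with a resource input independent of $s^m$. Its induced state map then factors through the binary experiment $(\operatorname{Ber}(1-\eta_c),\operatorname{Ber}(2^{-h}))$. By Proposition~\ref{prop:converse} and \eqref{eq:testing-gap}, its error is at least the testing-curve gap of that experiment. With $h=nC_T+\xi_0\sqrt n$, this gap tends to $\max\{\Phi(\xi_0/\sqrt{V_T}),\,1-\Phi((\xi_0-d)/\sqrt{rV_W})\}$. The failure mass $\eta_c$ is never recovered, and the target mass whose log-likelihood ratio exceeds $h$ cannot be produced under the exact reference constraint. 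So at your budget the two failure modes are $\eta$ and $1-\eta+\epsilon$; you do not get ``$\eta$ minus $(\eta-\epsilon)$''. Optimising $\xi_0$ gives at best the coefficient $(\sqrt{V_T}+\sqrt{rV_W})\Phi^{-1}(\epsilon)$, which is strictly below $d_\epsilon$. For $T=W$, $m=n$, where direct use is exact, the error is asymptotically at least $1/2$.

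The condition $\mathcal E(d)\le\epsilon$ requires matching the testing curves at every budget simultaneously. That is exactly what the optimal stochastic map from the comparison theorem \eqref{eq:testing-gap} provides. Turning that state map into an NS channel, uniformly over target inputs, is the real work:
\begin{itemize}
\item target-permutation averaging within a type class forces $\sum_zM_{(s,z),(x,y)}=U_\tau(s)$, which gives the sender marginal;
\item resource-permutation averaging against the universal reference gives $B_s(z\mid y)\le J_k(\tau W)^{\otimes m}(z)$;
\item Lemma~\ref{lem:completion} then completes the receiver marginal on mismatched auxiliary symbols;
\item a type header, with its own completed mismatched branches, combines the types without multiplying the error.
\end{itemize}
All of this is paid for with $O(\log n)$ coded resource uses. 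Saying the box is NS ``because the receiver marginal is $\tilde q$'' skips precisely this step; preserving one reference is not enough (Section~\ref{sec:universal-lifting}).

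\textbf{Converse step 4.} The right side of your step-2 inequality is an infimum over $\tilde q$. Since $\tilde q=D_Nq$ is produced by the protocol, you may not choose it to be the capacity-achieving output law. For a single fixed $s^m$ the infimum is even $O(1)$: take $\tilde q=W^{\otimes m}(\cdot\mid s^m)$. The lower bound must hold for every reference and must exploit that one $\tilde q$ serves a whole type class.

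The paper does this by testing the joint law with $U_{\tau_m}$ on a rounded capacity-achieving type. Its output marginal is at most $(m+1)^{|\mathcal S|}(\tau_mW)^{\otimes m}$ pointwise. Markov's inequality then shows that the log-likelihood ratio against any reference is at least the one against $(\tau_mW)^{\otimes m}$ minus $O(\log m)$, outside an event of probability $O(1/m)$ (Lemma~\ref{lem:uniform-target}). Using $U_{\tau_m}$ instead of a maximum over $s^m$ also justifies the $\sup_{P_{X^n}}\inf_q$ order on your left side; the pointwise-in-$s^m$ argument only yields $\inf_q\sup_{P_{X^n}}$.
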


Equivalently, the optimal rate is
$r+d_\epsilon/(C_W\sqrt n)+O(\log n/n)$. The capacities determine the
first-order conversion rate. The two total information variances
$nV_T$ and $rnV_W$ determine the correction.
The formula is a difference of Gaussian quantiles optimized over one
parameter, rather than the sum of the separate coding and simulation
penalties. Section~\ref{sec:second-order} proves the theorem.
Appendix~\ref{app:additive} also gives an exact characterization for
additive channels without a positive-dispersion assumption, including
their zero-dispersion cases. General channels with $V_{\min}=0$ are
not covered by the Gaussian theorem.

\subsection{Relation to state dichotomies}
The comparison of statistical experiments goes back to Blackwell,
Sherman, and Stein. Le Cam's deficiency theory treats approximate
comparison
\cite{blackwell1953equivalent,sherman1951theorem,lecam1964sufficiency}.
A classical state dichotomy is a binary experiment, specified by a pair
of distributions $(p,q)$. Its conversion to $(p',q')$ asks for a
stochastic map that approximates $p'$ from $p$ while mapping $q$ to $q'$
exactly. Renes' relative submajorization
theorem expresses its optimal error as a gap between testing curves
\cite{renes2016relative}. The corresponding second-order conversion
law is determined in \cite{lipkabartosik2024dichotomies}.
For channels, the corresponding reference objects are the replacer
channels $\mathcal R_Q(y\mid x)=Q(y)$, which carry no information
about their input. A single NS superchannel satisfies
\begin{equation}
 \begin{aligned}
  \Theta_N[T]&=\widetilde W\approx W,\\
  \Theta_N[\mathcal R_Q]&=\mathcal R_{D_NQ}
       &&\text{for every output law }Q,
 \end{aligned}
 \label{eq:intro-null-family}
\end{equation}
where $D_N(z\mid y)=\sum_xN(x,z\mid s,y)$ is a single stochastic
map fixed by the protocol. Thus the same transformation maps every
input-independent channel to another input-independent channel,
rather than preserving just one
chosen reference. The output reference $D_NQ$ is determined by the
protocol and is not prescribed independently. This also motivates the
choice of NS assistance: among classical superchannels,
these are precisely the transformations that preserve the family of
channels carrying no information \cite{takagi2020resource}.
Section~\ref{sec:setup} proves this directly from the two marginal
conditions.

For each target input law $P_S$, this channel transformation induces
a stochastic map between the joint input--output experiments
\[
 \bigl(P_X(x)T(y\mid x),P_X(x)Q(y)\bigr)
 \quad\longmapsto\quad
 \bigl(P_S(s)\widetilde W(z\mid s),P_S(s)(D_NQ)(z)\bigr),
\]
with $P_X$ induced by the sender marginal. The same induced map works
for every $Q$ (Proposition~\ref{prop:converse}). The channel problem therefore has two requirements. The approximation
must hold for every target input, and the same protocol must map every
replacer channel to another replacer channel. A transformation of one
chosen binary experiment need not satisfy either condition.

The distinction is operational. A state map can generate the target
input label itself and correlate it with the output. A channel protocol
must instead respond to an input supplied from outside. For example,
a map for one fixed pair can appear to create a noiseless bit from a
channel with constant output, while violating the receiver NS condition
(Section~\ref{sec:universal-lifting}). Our construction enforces the
channel constraints at logarithmic communication cost. This is the
sense in which the theory of state dichotomies extends to channels.
It does not give a one-shot equivalence with the conversion of a fixed
pair. The NS conditions must still be imposed.

\subsection{From binary experiments to universal channel protocols}
\label{sec:intro-lifting}
Our starting point is the comparison of testing curves in
\cite[Theorem~2]{renes2016relative}. Information spectrum estimates
and Gaussian limits then suggest the second-order profile
\cite{hayashi2009information,polyanskiy2010channel,
lipkabartosik2024dichotomies}. To prove a channel theorem, we need a
protocol attaining that profile and a converse valid for every allowed
protocol. Classical comparison theory already treats whole families
of distributions. The extra requirement here is that a transformation
of joint input--output distributions must respect the separation
between sender and receiver. In particular, it must respond to the
target input rather than generate that input itself.

\paragraph{Converse.}
This is more standard and the state map induced by a channel protocol is due to Oufkir, Yao, and
Berta \cite[Lemma~24]{oufkir2026interconversion}. It works for every
resource reference law, but the target reference is fixed by the
protocol and cannot be chosen independently. Resource inputs may also
be correlated. Appendix~\ref{app:gaussian} handles these choices using
uniform type-class estimates, bounds on output distributions, and
Markov's inequality. Near capacity-achieving inputs, we use the
quadratic decay of mutual information
\cite{caotomamichel2023quadratic}. A central limit theorem for one
fixed iid input would not justify optimization over all these choices.

\paragraph{Achievability.}
We use the method of types \cite{csiszar1998types}. First, restrict
the target inputs to one type class. Averaging over target coordinate
permutations makes the target input label uniform on that class.
This gives the sender marginal needed to turn the state map into a
conditional distribution for a channel protocol. The same symmetry
makes the error equal for every input in the class. Next, average independently over resource permutations. The receiver
marginal is then constant on each output type class. A reference
distribution that assigns equal mass to these classes bounds this
marginal at every output, up to a polynomial factor. We then enforce the receiver NS condition. The construction extends
the diagonal and off-diagonal completion used in NS simulation
\cite[Proposition~18]{cao2024channel}, while retaining the noisy
resource. When the auxiliary symbols agree, we use the desired
conditional distribution. When they differ, we add a nonnegative
correction. Both NS conditions then hold for every input to the box. A second auxiliary message identifies the target type. Its mismatched
branches must also be completed so that the full protocol remains NS.
Ordinary coding below capacity sends both auxiliary messages using
$O(\log n)$ of the prescribed resource uses, with error at most $n^{-2}$.
The total error is bounded by the largest typewise error plus the
auxiliary coding error. There is no factor equal to the number of
types. Section~\ref{sec:universal-lifting} gives the NS channels.
Proposition~\ref{prop:paid} accounts for their communication cost.

\subsection{Finite-error resonance and short blocklengths}
The limiting error at length $m=\lfloor rn+(d/C_W)\sqrt n\rfloor$ is
\begin{equation}
 \mathcal E(d)=\sup_{\xi\in\mathbb R}
 \left\{\Phi\!\left(\frac{\xi}{\sqrt{V_T}}\right)
       -\Phi\!\left(\frac{\xi-d}{\sqrt{rV_W}}\right)\right\}.
 \label{eq:profile}
\end{equation}
This is the sesquinormal profile $S_\nu(d/\sqrt{V_T})$ of
\cite[Lemma~1]{lipkabartosik2024dichotomies}, with $\nu=rV_W/V_T$.
Here we show that this profile is also attainable by universal channel
protocols.
The coefficient in Theorem~\ref{thm:main} is obtained by inverting
this limiting error. In particular, define the threshold error
\begin{equation}
 \epsilon_*:=\mathcal E(0)
 =\left|\Phi(x_\nu)-\Phi(x_\nu/\sqrt\nu)\right|,
 \qquad
 x_\nu=\sqrt{\frac{\nu\ln\nu}{\nu-1}},\quad
 \nu=\frac{rV_W}{V_T},
 \label{eq:threshold-error}
\end{equation}
with $\epsilon_*=0$ at $\nu=1$. If $\nu\ne1$, then
$0<\epsilon_*<1/2$ and
\begin{equation}
 d_{\epsilon_*}=0,\qquad
 \frac{m^*_{\NS}(n,\epsilon_*;T,W)}{n}
 =r+O\!\left(\frac{\log n}{n}\right).
 \label{eq:finite-error-resonance}
\end{equation}
Thus, at a fixed and possibly small error, the $n^{-1/2}$ rate
correction vanishes and only higher-order terms remain. This is the
threshold-error interpretation of finite-error resonance illustrated
for state dichotomies in Fig.~2 of
\cite{lipkabartosik2024dichotomies}. Such a threshold exists for every
pair with $\nu\ne1$. Its size depends on the difference between the two
ratios of dispersion to capacity. For $\epsilon<\epsilon_*$ the coefficient
is negative, and for $\epsilon>\epsilon_*$ it is positive.

The total information fluctuations match when
\begin{equation}
 \frac{V_T}{C_T}=\frac{V_W}{C_W}.
 \label{eq:resonance-condition}
\end{equation}
In this case,
\begin{equation}
 \mathcal E(d)=\left[2\Phi\!\left(\frac{d}{2\sqrt{V_T}}\right)-1\right]_+,
 \qquad
 d_\epsilon=2\sqrt{V_T}\,
                 \Phi^{-1}\!\left(\frac{1+\epsilon}{2}\right).
 \label{eq:resonance}
\end{equation}
Here the error at the capacity-ratio length tends to zero. This
zero-threshold resonance is not restricted to identical channels.
In this case, $d_\epsilon>0$ at every fixed $\epsilon>0$, so
zero-threshold resonance is distinct from cancellation at a positive
threshold error.

\paragraph{Non-signalling resonance.}
Fix $W=\mathrm{BSC}(q)$ with $q=0.01$ and take
$T=\mathrm{BEC}(\delta)$, where $\delta$ is the \emph{erasure}
probability. With $C_q=1-h(q)$ and
$V_q=q(1-q)\log^2((1-q)/q)$, resonance occurs at
\[
 \delta_*=\frac{V_q}{C_q}\simeq0.473,
 \qquad r=\frac{1-\delta_*}{C_q}\simeq0.573.
\]
Figure~\ref{fig:resonance}(a) plots the NS threshold as $\delta$
varies. The two $1\%$ intersections are approximately
$\delta=0.436$ and $\delta=0.514$: at either endpoint
the second-order term vanishes at $1\%$ error, while it is positive
between them. The SR threshold curve is not known for this family, unlike the
packet example below. The stronger common-encoder converse in
Appendix~\ref{app:sr-converse} proves that SR has strictly positive
limiting error at the NS zero $\delta_*$, and a strictly smaller
second-order rate at a sufficiently small fixed error
(Proposition~\ref{app-sr:gap}).

The asymptotic remainder alone does not guarantee that the capacity
ratio is attainable at a short blocklength. For additive channels, we
can check this using an exact finite-blocklength formula. For $T=\mathrm{BSC}(0.1)$ and
$W=\mathrm{BSC}(0.05)$,
\[
 r\simeq0.744,\qquad \epsilon_*\simeq0.042,
 \qquad e_{\NS}(25,\lfloor25r\rfloor)
 =e_{\NS}(25,18)<0.036<\epsilon_*.
\]
Thus the capacity-ratio target length, rounded down, is attainable
already with $25$ resource uses at the error where the second-order
term is exactly zero. Figure~\ref{fig:finite-bsc} compares the
finite-blocklength rates with the second-order predictions.
Appendix~\ref{app:finite-resonance} gives the exact certificate,
retaining all higher-order and lattice effects, and an explicit
Gaussian upper bound. It also certifies
$e_{\NS}(20,81;\mathrm{BSC}(0.4),\mathrm{BSC}(0.45))<0.008$:
at $1\%$ error this pair attains a rate $81/20=4.05$ above its
capacity ratio $r\simeq4.020$. In the latter example the error
exceeds $\epsilon_*\simeq0.002$, so the second-order coefficient
is positive, rather than zero.

\paragraph{Shared-randomness resonance.}
Let $E_2(\delta)$ transmit a two-bit packet intact with probability
$1-\delta$, and otherwise erase both bits. Fix the target
$W=\mathrm{BEC}(1/2)$ and take $T=E_2(\delta)$, $0<\delta<1/2$.
These are fixed memoryless channels. One use of the resource is not
equivalent to two independent BEC uses. Their information parameters are
\[
 C_T=2(1-\delta),\quad V_T=4\delta(1-\delta),\quad
 C_W=\tfrac12,\quad V_W=\tfrac14,
 \qquad r=4(1-\delta),\quad\nu=\frac1{4\delta}.
\]
Proposition~\ref{app-sr:packet} proves that SR, EA, and NS have the
same limiting error profile throughout this family. Consequently
Figure~\ref{fig:resonance}(b) is a shared-randomness analogue of the
threshold-error plot in Fig.~2 of \cite{lipkabartosik2024dichotomies}.
It has an interior zero at $\delta=1/4$, where $r=3$ and
$V_T=rV_W=3/4$. At the two values
\[
 \delta\simeq0.230,\qquad \delta\simeq0.272,
\]
the threshold is $1\%$: the second-order coefficient at this error is
exactly zero for two distinct channels. Between these values it is
positive. At $\delta=1/4$ it is the \emph{threshold error}, not the
coefficient at fixed positive error, that vanishes.

The SR construction proceeds by shared random placement of the
transmitted packet bits among the target positions. The receiver
observes the number of surviving bits and selects a subset of their
positions to match the target's visible-bit count. At resonance, the
two counts have matching means and variances, and the coupling failure
probability tends to zero. This uses neither feedback nor NS-to-SR rounding.
It also gives finite guarantees: at $\delta=1/4$, $250$ packet uses
simulate $750$ target uses with error below $9.719\times10^{-3}$. Already
$16$ packet uses simulate $47$ target uses, just one below the
capacity-ratio length $48$, with error below $9.971\times10^{-3}$.
These statements follow from exact binomial arithmetic, not a
prescribed higher-order term (Appendix~\ref{app-sr:packet-resonance}).

\begin{figure}[!htbp]
\centering
\setlength{\abovecaptionskip}{4pt}
\setlength{\belowcaptionskip}{0pt}
\includegraphics[width=\linewidth]{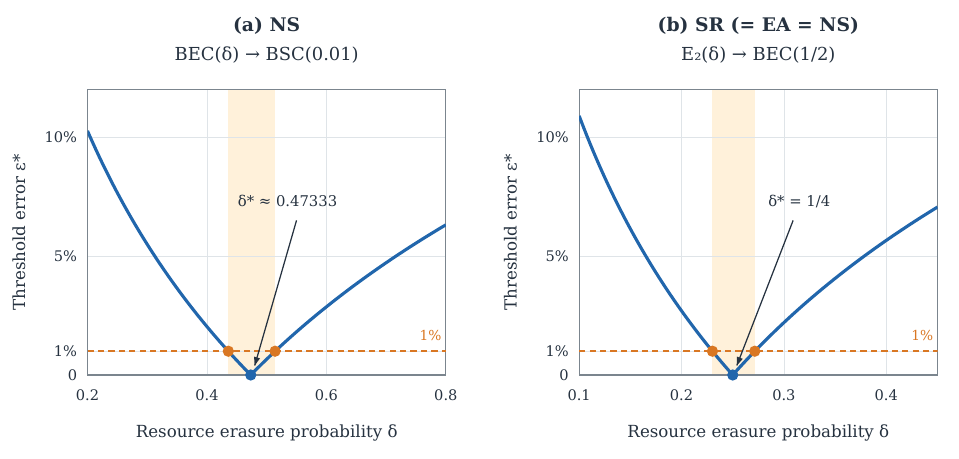}
\caption{Threshold errors at the capacity ratio, analogous to Fig.~2
of \cite{lipkabartosik2024dichotomies}.
(a) NS conversion from a BEC to a BSC. The SR curve is not known.
(b) SR conversion from a two-bit packet-erasure channel to a BEC.
Here the SR, EA, and NS limits coincide.
At the orange points, the second-order coefficient at $1\%$ error
vanishes. It is nonnegative in the shaded intervals.
At the blue zeros, the error at the capacity ratio tends to zero.
These are limiting curves, not finite-blocklength error bounds.}
\label{fig:resonance}

\vspace{1mm}
\includegraphics[width=\linewidth]{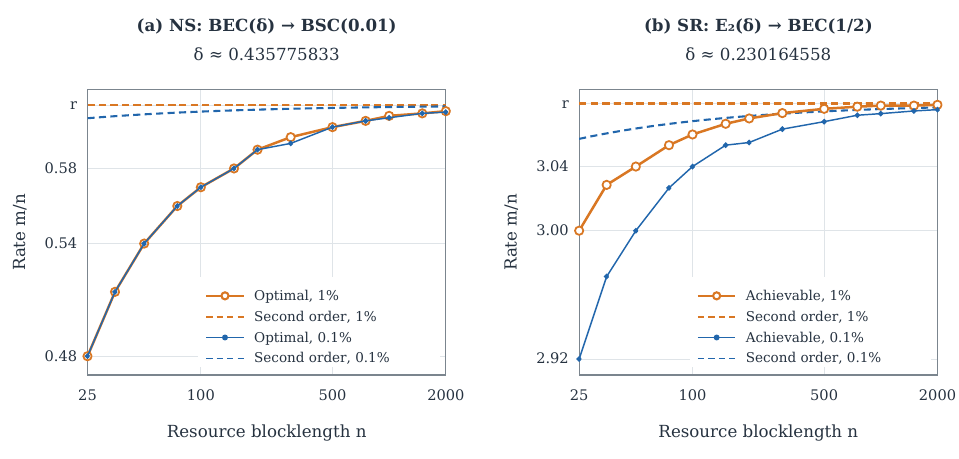}
\caption{Rates at the smaller $1\%$ threshold roots of
Figure~\ref{fig:resonance}.
Solid curves evaluate exact finite-blocklength formulas, giving
optimal NS rates in (a) and achievable SR rates in (b).
Dashed curves show $r+b_\epsilon/\sqrt n$, with $b_{1\%}=0$.
They are not finite-blocklength bounds.
The two finite curves in (a) often coincide because lengths and
Hamming weights are integers.}
\label{fig:resonance-rates}
\end{figure}

Figure~\ref{fig:resonance-rates} shows the corresponding rates as a
function of blocklength for both examples. We fix the smaller $1\%$
threshold root in each family and compare errors $1\%$ and $0.1\%$.
Writing $b_\epsilon=d_\epsilon/C_W$, the second-order rate is
$r+b_\epsilon/\sqrt n$: at $1\%$ it is the constant $r$, whereas
at $0.1\%$ the coefficients are approximately $-0.110$ for the
SR packet example and $-0.035$ for the NS BEC-to-BSC example.
The finite curves retain all higher-order and integer effects.
They are protocol-achievable rates in the SR panel and optimal rates
in the NS panel, computed from the exact formulas in
Appendices~\ref{app-sr:packet-resonance} and~\ref{app:bec-bsc-exact}.
In particular, cancellation of the second-order term need not remove
the visible finite-blocklength deficit for the near-identity target.

\subsection{Shared randomness and related work}
The NS characterization gives a converse bound for shared-randomness
assistance, but the NS construction need not be implementable with
shared randomness.
Appendix~\ref{app:sr} establishes agreement of the second-order
coefficients for a direction of erasure-channel conversion, including
the nonidentical packet-to-bit resonance in Figure~\ref{fig:resonance}(b),
and for BSC self-conversion. It also proves a strict separation for
the dispersion-matched BEC-to-BSC pair in panel (a).
Shared randomness can therefore exhibit resonance, but
\eqref{eq:resonance-condition} is not a sufficient condition in general.
In particular, panel (a) is not a shared-randomness formula: the full
SR threshold curve for BEC-to-BSC conversion remains open.

\paragraph{Related work and proof ingredients.}
At the noiseless endpoints, the relevant results are the
finite-blocklength coding theorems of Hayashi, and Polyanskiy, Poor, and Verd\'u
\cite{hayashi2009information,polyanskiy2010channel}, its NS
interpretation due to Matthews \cite{matthews2012linear}, and the
channel simulation results of Cao et al.\ \cite{cao2024channel}.
Our testing comparison uses Renes' relative submajorization theorem
\cite{renes2016relative}. Its Gaussian profile is the sesquinormal
profile derived in \cite{lipkabartosik2024dichotomies}.
The induced state map in the converse is due to Oufkir, Yao, and Berta
\cite[Lemma~24]{oufkir2026interconversion}. The marginal completion
extends the NS simulation construction of Cao et al.\
\cite[Proposition~18]{cao2024channel}. The additional step is to implement these ingredients uniformly over
target types, charge all communication to the prescribed resource, and
allow every reference law in the converse.
The strong converse exponent of channel interconversion
\cite{oufkir2026interconversion} concerns fixed rates above capacity,
where success decays exponentially. It does not determine the
fixed-error correction studied here.

At the noiseless endpoints, NS relaxations can already be rounded
with a sub-second-order communication cost: for coding by
Barman and Fawzi \cite{barman2018algorithmic}, and for simulation by
rejection sampling \cite{cao2024channel,oufkir2026optimality}.
These results preserve the fixed-error second-order terms at the
respective endpoints. They do not, however, provide a rounding of a
general conversion protocol that retains a noisy resource. Such a
rounding would address the corresponding obstacle for SR
interconversion.

There is also a close connection with joint source--channel coding
\cite{kostina2013lossy,yaguchi2019second,kochman2020distortion}.
In both problems, concatenation through noiseless communication can
be suboptimal at second order. Channel conversion additionally requires reproducing the conditional
output distribution uniformly over inputs, rather than satisfying a
single decoding or distortion criterion.
Source--channel constructions suggest achievable bounds. The
binary-source broadcast converse of Kochman, Ordentlich, and Polyanskiy
\cite{kochman2020distortion} supplies the idea of introducing an
auxiliary variable for the stronger SR converse in
Appendix~\ref{app:sr-converse}. We also condition on success and
estimate the probabilities of intervals of counts using Gaussian limits.
A matching general characterization has not been established.

\subsection{Conclusion and further questions}
\label{sec:conclusion}
We characterize the second-order interconversion of finite classical
channels with positive capacities and positive minimum
capacity-achieving dispersions under non-signalling superchannels.
The optimal coefficient is governed by Gaussian information
fluctuations, with a mixture over resource dispersions and a worst-case
target variance in the general case (Theorem~\ref{thm:general}).
When each channel has equal minimum and maximum dispersions, the
characterization
reduces to the sesquinormal expression for state dichotomies.
The channel extension requires uniformity over target inputs and
both non-signalling marginal constraints. Our explicit construction enforces these constraints at logarithmic
communication cost.
For additive channels, the exact finite-blocklength characterization
also covers zero-dispersion cases (Appendix~\ref{app:additive}).

The corresponding problem under shared randomness remains open in
general. Appendix~\ref{app:sr} establishes both agreement with the NS
coefficient and a strict separation, without determining the separated
SR coefficient. Packet-to-bit erasure conversion gives an SR resonance between
distinct channels: random placement and local thinning couple the
numbers of available and required bits.
The BEC-to-BSC separation shows why matching the total information
variances at the capacity-ratio length does not give a universal SR
resonance criterion.
Joint source--channel coding suggests constructions
that preserve resource noise, while broadcast converses impose
compatibility across different noise realizations. Determining whether
these approaches yield a matching characterization is an open problem.
Poisson-matching constructions \cite{li2021poisson} offer a possible
alternative to separate coding and simulation, but would additionally
have to reproduce the conditional output distribution uniformly over
inputs.

Entanglement assistance lies between shared randomness and NS
assistance. Its coefficient agrees with both whenever their coefficients coincide,
but remains undetermined for the separated BEC-to-BSC example.
Extending the common-encoder converse to quantum decoders would
require a separate argument. Further questions concern general
nonadditive zero-dispersion channels, including intervals with
$V_{\min}=0<V_{\max}$, and third-order refinements of the
positive-dispersion expansion.

\subsection*{Acknowledgements}
\addcontentsline{toc}{subsection}{Acknowledgements}
We acknowledge discussions with
Marco Tomamichel and Rabsan Galib Ahmed. MB acknowledges support from the European Research Council
(ERC Grant Agreement No.~948139) and the Excellence Cluster Matter and
Light for Quantum Computing (ML4Q-2). This work was supported in part
by a grant of access to OpenAI models through the ChatGPT for Academic
Researchers program.

\clearpage
\tableofcontents
\clearpage

\section{Channel interconversion and information parameters}
\label{sec:setup}
All alphabets are finite. Logarithms are base two, except $\ln$.
Write $\mathcal P(\mathcal A)$ for probability laws on $\mathcal A$,
$\TV(P,R)=\frac12\sum_a|P(a)-R(a)|$, and $[u]_+=\max(u,0)$.
The induced output law is $(PK)(b)=\sum_aP(a)K(b\mid a)$.
When $PK$ denotes a \emph{joint} law below, this will be stated explicitly.
Its value is $P(a)K(b\mid a)$. A stochastic matrix acts on column
probability vectors and has column sums one.
Let $\Phi,\phi$ denote the standard normal distribution function and density.

\subsection{Operational task and non-signalling constraints}
An NS protocol is described by a bipartite channel $N(x,z\mid s,y)$ satisfying
\begin{equation}
 \sum_zN(x,z\mid s,y)=E(x\mid s),\qquad
 \sum_xN(x,z\mid s,y)=D(z\mid y).
 \label{eq:ns-marginals}
\end{equation}
Connecting the resource channel $K$ gives the simulated channel
\begin{equation}
 (\Theta_N[K])(z\mid s)=\sum_{x,y}K(y\mid x)N(x,z\mid s,y).
 \label{eq:contraction}
\end{equation}
The first constraint ensures normalization for every $K$.
The second shows that, for every replacer
$\mathcal R_R(y\mid x)=R(y)$,
\begin{equation}
 \Theta_N[\mathcal R_R](z\mid s)=\sum_yR(y)D(z\mid y),
 \label{eq:replacer}
\end{equation}
which is independent of $s$.

Conversely, suppose a nonnegative array in
\eqref{eq:contraction} gives a valid channel for every inserted channel
and preserves every replacer. These properties imply the NS constraints.
See also
\cite{takagi2020resource}. Indeed, set
$A_s(x,y)=\sum_zN(x,z\mid s,y)$. Normalization requires
$\sum_{x,y}K(y\mid x)A_s(x,y)=1$ for every $K$.
Varying one column $K(\cdot\mid x)$ while holding the others fixed
shows that $A_s(x,y)$ is independent of $y$, giving the sender
marginal. For each fixed $y$, preservation of the point-mass
replacer $\mathcal R_{\delta_y}$ says that
$\sum_xN(x,z\mid s,y)$ is independent of $s$, giving the receiver
marginal. Thus, among valid transformations of this form, preserving
the entire family of replacers characterizes NS transformations.
Preserving one chosen reference channel is not enough.

Let $T:\mathcal X\to\mathcal Y$ be the resource channel and
$W:\mathcal S\to\mathcal Z$ the target. Using block alphabets in
\eqref{eq:ns-marginals}, define
\begin{align}
 e_{\NS}(n,m;T,W)
 &=\min_N\max_{s^m}
   \TV\bigl(\Theta_N[T^{\otimes n}](\cdot\mid s^m),
                         W^{\otimes m}(\cdot\mid s^m)\bigr),
 \label{eq:error}\\
 m^*_{\NS}(n,\epsilon;T,W)&=\max\{m:e_{\NS}(n,m;T,W)\le\epsilon\}.
 \label{eq:length}
\end{align}
At each fixed pair of blocklengths the NS feasible set is a compact
polytope, so the error minimum is attained. Under the hypotheses of
Theorem~\ref{thm:main}, the length maximum is finite for large $n$.
The bounds below also recover the first-order theorem.
We omit $T,W$ when they are fixed. Error is nondecreasing in $m$:
fixing unused target input coordinates and discarding the corresponding
outputs gives a protocol for any shorter target block.
Equivalently, the error is the largest joint-law TV distance over all
input laws $P_{S^m}$. The criterion therefore controls every conditional
output distribution, not just an average over a prescribed source.

An SR protocol has
$N(x,z\mid s,y)=\sum_\rho p(\rho)E_\rho(x\mid s)D_\rho(z\mid y)$.
EA protocols use local measurements on a preshared quantum state.
For $A\in\{\SR,\EA\}$, define $e_A$ by \eqref{eq:error} with an
infimum over the corresponding assistance class, and $m_A^*$ by
\eqref{eq:length}. The infimum convention does not assume that an
optimal unrestricted EA protocol is attained.
No feedback or additional communication is allowed, and
\[
 \SR\subseteq\EA\subseteq\NS,\qquad
 e_{\NS}\le e_{\EA}\le e_{\SR}.
\]
The resource input coordinates may be correlated. Every auxiliary
channel used in a proof will be implemented within the $n$ resource uses.

\subsection{Capacity and a single dispersion}
For $A\sim P$ and $B\sim K(\cdot\mid A)$ define
\[
 i_{P,K}(a;b)=\log\frac{K(b\mid a)}{(PK)(b)},\qquad
 I(P,K)=\E i_{P,K}(A;B),\qquad
 V(P,K)=\E\Var(i_{P,K}(A;B)\mid A).
\]
Define the capacity and the set of capacity-achieving input distributions by
$C_K=\max_P I(P,K)$ and
$\Pi_K=\{P:I(P,K)=C_K\}$. All $P\in\Pi_K$ induce the same output
law $P_B^*(K)$, and
$D(K(\cdot\mid a)\Vert P_B^*(K))=C_K$ for $a$ in their support.
Consequently, on $\Pi_K$ the conditional variance equals the full
variance, and
\begin{equation}
 V(P,K)=\sum_{a,b}P(a)K(b\mid a)
       \left(\log\frac{K(b\mid a)}{P_B^*(K)(b)}-C_K\right)^2.
 \label{eq:dispersion}
\end{equation}
This expression is affine in $P$ on $\Pi_K$.
In the main text we assume that it has one value $V_K>0$ and that
$C_K>0$. This assumption allows multiple capacity-achieving inputs,
provided their dispersions agree. The interval-valued case is treated
in Appendix~\ref{app:nonunique}.

We use $\mathrm{BSC}(q)$ for crossover probability $q$ and
$\mathrm{BEC}(\delta)$ for erasure probability $\delta$.
Write
\[
 h(x)=-x\log x-(1-x)\log(1-x),\quad
 C_x=1-h(x),\quad \ell_x=\log\frac{1-x}{x},\quad V_x=x(1-x)\ell_x^2.
\]
For $0<q<1/2$ and $0<\delta<1$,
\begin{center}
\begin{tabular}{lll}
\toprule
Channel & Capacity & Dispersion\\
\midrule
$\mathrm{BSC}(q)$ & $C_q$ & $V_q$\\
$\mathrm{BEC}(\delta)$ & $1-\delta$ & $\delta(1-\delta)$\\
\bottomrule
\end{tabular}
\end{center}
Thus these basic channels satisfy the singleton-dispersion assumption
throughout the stated parameter ranges. Noiseless channels have zero
dispersion and require separate endpoint results.

A word $a^k$ has empirical law
$P_{a^k}(a)=k^{-1}|\{i:a_i=a\}|$, called its type.
We use the standard type-class estimates of \cite{csiszar1998types}.
Write $\mathcal P_k(\mathcal A)$ for the set of types,
$\mathcal T_P^k$ for a type class, and $U_{\mathcal T_P^k}$ for its
uniform law. There are at most $(k+1)^{|\mathcal A|}$ types.
Every permutation-invariant law is a mixture of these uniform laws.
We use the following universal output reference:
\begin{equation}
 R_{\mathcal B,k}^{\mathrm u}
 =\frac1{|\mathcal P_k(\mathcal B)|}
       \sum_{P\in\mathcal P_k(\mathcal B)}U_{\mathcal T_P^k}.
 \label{eq:universal-reference}
\end{equation}
In particular $R^{\otimes k}\le
|\mathcal P_k(\mathcal B)|R_{\mathcal B,k}^{\mathrm u}$ pointwise
for every one-letter output law $R$.

\section{Binary experiments and non-signalling interconversion}
\label{sec:oneshot}
We first derive a converse from testing curves and then construct
non-signalling protocols from transformations of binary experiments.
These finite-blocklength statements require no dispersion assumption.

\subsection{Converse: the induced dichotomy}
For probability vectors $p,q$, define the testing curve
\[
 \alpha_y(p,q)=\max\{t\cdot p:0\le t\le1,\ t\cdot q\le y\},
 \qquad 0\le y\le1,
\]
and define
$\delta(p,q;p',q')=\min_{M:\,Mq=q'}\TV(Mp,p')$, where $M$ is
stochastic. The comparison theorem for binary experiments is
\begin{equation}
 \delta(p,q;p',q')
 =\sup_{0\le y\le1}\{\alpha_y(p',q')-\alpha_y(p,q)\}.
 \label{eq:testing-gap}
\end{equation}
This is the normalized, exact-reference form of the classical
randomization criterion \cite{torgersen1991comparison}. See also
\cite[Theorem~2 and Proposition~6]{renes2016relative}.
We recall a finite-alphabet proof for completeness. Pulling tests back
through $M$ gives the lower bound on the left side of
\eqref{eq:testing-gap}. For the reverse bound, minimax expresses that side as
$\max_{0\le f\le1}\{p'\cdot f-\max_{M:Mq=q'}(Mp)\cdot f\}$.
Order the source atoms by $p/q$ and the target atoms by $f$.
By splitting the reference mass, a monotone transport maximizes the
assigned $p$-mass simultaneously for every upper level set of $f$.
Integrating over these level sets bounds the expression by the right
side of \eqref{eq:testing-gap}. Atoms with $q=0$ can be assigned to
a target atom where $f$ is maximal.

To pass from testing curves to Gaussian limits, we use the following
elementary information-spectrum bounds. For $j=\log(p/q)$, $h\ge0$
and $\gamma>0$,
\begin{equation}
 \Pr_p\{j\ge h\}\le\alpha_{2^{-h}}(p,q)
 \le\Pr_p\{j\ge h-\gamma\}+2^{-\gamma}.
 \label{eq:spectrum}
\end{equation}
The threshold test in the left inequality has $q$-mass at most $2^{-h}$.
For the right inequality, the contribution of any feasible test on
$j<h-\gamma$ has $p$-mass at most $2^{-\gamma}$.

\begin{proposition}[One-shot converse]
\label{prop:converse}
Let $N$ convert $T$ to a channel $\widetilde W$ whose worst-row error
from $W$ is at most $\epsilon$. Choose any target input law $P_S$ and any
resource output reference $R_Y$. Set
\[
 P_X(x)=\sum_sP_S(s)E(x\mid s),\qquad
 R_Z(z)=\sum_yR_Y(y)D(z\mid y).
\]
With joint signal laws $P_{XY}=P_XT$ and $P_{SZ}=P_SW$,
\begin{equation}
 \epsilon\ge\delta(P_{XY},P_X\otimes R_Y;
                         P_{SZ},P_S\otimes R_Z).
 \label{eq:one-shot-converse}
\end{equation}
The statement applies unchanged to block channels.
\end{proposition}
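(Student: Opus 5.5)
The plan is to exhibit one explicit stochastic map $M$ from the joint resource alphabet $\mathcal X\times\mathcal Y$ to the joint target alphabet $\mathcal S\times\mathcal Z$. It should satisfy $M(P_X\otimes R_Y)=P_S\otimes R_Z$ exactly and $\TV(MP_{XY},P_{SZ})\le\epsilon$. Since $\delta$ is a minimum over such maps, \eqref{eq:one-shot-converse} then follows immediately. The natural candidate comes from Bayes-inverting the sender's marginal. Define the posterior $B(s\mid x)=P_S(s)E(x\mid s)/P_X(x)$ when $P_X(x)>0$, and any fixed law otherwise. Then set
\[
 M(s,z\mid x,y)=B(s\mid x)\,\frac{N(x,z\mid s,y)}{E(x\mid s)},
\]
with the convention that the ratio is replaced by any fixed law on $z$ when $E(x\mid s)=0$. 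These terms carry zero weight under $B$. Stochasticity of $M$ follows from the sender marginal in \eqref{eq:ns-marginals}: for each $(s,x)$ with $E(x\mid s)>0$, the ratio $N(x,\cdot\mid s,y)/E(x\mid s)$ is a probability law on $z$.

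Next I would compute $M$ applied to a general product-form joint law $P_X(x)K(y\mid x)$. The weights $P_X(x)B(s\mid x)/E(x\mid s)=P_S(s)$ collapse, giving
\[
 \sum_{x,y}P_X(x)K(y\mid x)M(s,z\mid x,y)=P_S(s)\sum_{x,y}K(y\mid x)N(x,z\mid s,y)=P_S(s)\,\Theta_N[K](z\mid s).
\]
With $K=T$ this yields $P_S\widetilde W$ as a joint law. Then
\[
 \TV(MP_{XY},P_{SZ})=\sum_sP_S(s)\TV(\widetilde W(\cdot\mid s),W(\cdot\mid s))\le\epsilon,
\]
using the worst-row hypothesis. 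With $K=\mathcal R_{R_Y}$, the replacer identity \eqref{eq:replacer}, which is where the receiver NS marginal enters, gives $P_S(s)\sum_yR_Y(y)D(z\mid y)=P_S(s)R_Z(z)$. So the reference is mapped exactly, for every choice of $R_Y$.

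The only delicate step is the bookkeeping at zeros: $x$ outside the support of $P_X$, and pairs with $E(x\mid s)=0$. I would check that the arbitrary completions there contribute zero mass to both the signal and reference images, because both input laws carry the factor $P_X(x)$ and the relevant $N$ entries vanish. Finally, nothing in the argument uses a product structure, so replacing $\mathcal X,\mathcal Y,\mathcal S,\mathcal Z$ by block alphabets and $T,W$ by $T^{\otimes n},W^{\otimes m}$ gives the block statement verbatim. The resource inputs may be correlated through $E$, which is already allowed.
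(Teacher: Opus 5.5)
Your proposal is correct and is essentially the paper's proof. After cancelling $E(x\mid s)$, your Bayes-posterior map is exactly the paper's induced map $M_{(s,z),(x,y)}=P_S(s)N(x,z\mid s,y)/P_X(x)$, with stochasticity coming from the sender marginal and exact preservation of $P_X\otimes R_Y$ for every $R_Y$ coming from the receiver marginal; your handling of zero-probability columns agrees with the paper's remark that those columns may be defined arbitrarily.
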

\begin{proof}
We use the induced stochastic map of
\cite[Lemma~24]{oufkir2026interconversion}.
On columns with $P_X(x)>0$, put
$M_{(s,z),(x,y)}=P_S(s)N(x,z\mid s,y)/P_X(x)$.
The map depends on $N$ and $P_S$, but not on $R_Y$.
The first NS constraint makes $M$ stochastic, and the second gives,
simultaneously for every $R_Y$,
\[
 M(P_X\otimes R_Y)=P_S\otimes R_Z,\qquad
 MP_{XY}=P_S\widetilde W.
\]
Columns corresponding to zero-probability inputs can be defined arbitrarily.
The last joint law is within $\epsilon$ of $P_SW$ in TV.
\end{proof}
The reference $R_Z$ is induced by the converter. It cannot be fixed
to a convenient product law in the converse.

\subsection{Construction of non-signalling protocols}
\label{sec:universal-lifting}
We now construct a channel protocol from a state comparison.
Two additional requirements must be met: the state map must give an
NS channel, and its error must be controlled uniformly over target inputs.

\paragraph{One reference pair is insufficient.}
Consider a resource with a single input and deterministic output
$Y=0$, with output alphabet $\{0,1\}$. Choose the reference $U_Y$,
and let the target be the noiseless binary channel with uniform input
$U_S$. The stochastic map
\[
 M(s,z\mid y)=\tfrac12\mathbf1_{\{z=s\oplus y\}}
\]
satisfies
\[
 M\delta_0=U_S(s)\mathbf1_{\{z=s\}},\qquad
 MU_Y=U_S\otimes U_Z.
\]
Thus the two selected state dichotomies convert exactly. Nevertheless,
the resource is a replacer, and every NS image of it is a replacer.
Its worst-row error from the noiseless bit is at least $1/2$.
The conditional distribution $H(z\mid s,y)=2M(s,z\mid y)$ shows why:
its receiver marginal depends on $s$. Preserving the single averaged
reference $U_Y$ concealed this dependence. The state map generated
the label $S$ jointly with $Z$. It did not implement a channel for an
externally chosen $S$.

More generally, suppose a stochastic map $M$ has
$\sum_zM(s,z\mid x,y)=\pi(s)>0$, and define
\begin{equation}
 H(x,z\mid s,y)=\frac{P_X(x)M(s,z\mid x,y)}{\pi(s)}.
 \label{eq:reverse-state-map}
\end{equation}
It has sender marginal $P_X$. Its receiver marginal is independent of
$s$ if and only if there is one stochastic map $D$ such that
\begin{equation}
 M(P_X\otimes R_Y)=\pi\otimes(DR_Y)
 \quad\text{for every }R_Y.
 \label{eq:all-reference-lifting}
\end{equation}
Indeed, taking $R_Y=\delta_y$ tests the receiver marginal at each $y$.
The converse follows by linearity. A single reference-preservation
identity is strictly weaker. The completion below obtains this full
marginal identity from a pointwise domination bound.

Controlling average error is a separate issue. For any fixed input law,
\[
 \TV(P_S\widetilde W,P_SW)
 =\sum_sP_S(s)\TV(\widetilde W(\cdot\mid s),W(\cdot\mid s)).
\]
For example, the constant-zero channel is within $\eta$ of the
noiseless bit under the prior $P_S(1)=\eta$, but its worst-row error
is one. We will use type-class symmetry and a message identifying the
input type to obtain a uniform bound without multiplying the error by
the number of inputs.

\paragraph{Completing the receiver marginal.}
The following lemma extends the diagonal/off-diagonal completion
underlying the NS simulation linear program
\cite[Proposition~18]{cao2024channel}.
Here the construction retains the noisy resource while enforcing
both non-signalling marginals.

\begin{lemma}[Completion]
\label{lem:completion}
Suppose $H(x,z\mid s,y)\ge0$ has sender marginal
$\sum_zH(x,z\mid s,y)=P_X(x)$.
Let $B_s(z\mid y)=\sum_xH(x,z\mid s,y)$ and suppose
$B_s(z\mid y)\le cR_Z(z)$ for distributions $P_X,R_Z$
and an integer $c\ge2$.
Then $T\otimes\id_{[c]}$ can simulate the channel
$\sum_{x,y}T(y\mid x)H(x,z\mid s,y)$ with NS assistance.
\end{lemma}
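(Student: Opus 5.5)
The plan is to write down an explicit NS box for the composite resource $T\otimes\id_{[c]}$ using the diagonal/off-diagonal completion, and then to check the contraction and the two marginal conditions \eqref{eq:ns-marginals} directly. In the composite resource the sender inputs a pair $(x,k)$ with $k\in[c]$, and the receiver obtains $(y,k')$, where $y\sim T(\cdot\mid x)$ and $k'=k$. The box therefore has the form $N'(x,k,z\mid s,y,k')$. The sender will draw $k$ uniformly, so the diagonal branch $k=k'$ carries the desired conditional distribution, while the off-diagonal branches $k\neq k'$ occur with probability zero under the actual resource. Their only role is to repair the receiver marginal.

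Concretely, I would set
\[
 N'(x,k,z\mid s,y,k')=
 \begin{cases}
 \frac1c H(x,z\mid s,y), & k=k',\\[2pt]
 \frac{1}{c-1}\,P_X(x)\Bigl(R_Z(z)-\frac1c B_s(z\mid y)\Bigr), & k\ne k'.
 \end{cases}
\]
The hypothesis $B_s(z\mid y)\le cR_Z(z)$ makes the off-diagonal entries nonnegative, and $c\ge2$ keeps the factor $1/(c-1)$ finite. The diagonal entries are nonnegative because $H\ge0$.

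The verification then proceeds in three steps.
\begin{enumerate}
\item \emph{Contraction.} Connecting $T\otimes\id_{[c]}$ imposes $k'=k$. Summing over $k$ gives $\sum_{x,y}T(y\mid x)H(x,z\mid s,y)$, so only the diagonal branch contributes and the simulated channel is exactly the target one.
\item \emph{Sender marginal.} On the diagonal, $\sum_z\frac1cH=P_X(x)/c$ by the assumed sender marginal of $H$. Off the diagonal,
\[
 \sum_z\frac{P_X(x)}{c-1}\Bigl(R_Z(z)-\tfrac1cB_s(z\mid y)\Bigr)=\frac{P_X(x)}{c-1}\Bigl(1-\frac1c\Bigr)=\frac{P_X(x)}{c},
\]
using $\sum_zB_s(z\mid y)=1$. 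Hence the sender marginal equals $P_X(x)/c$ for every $(s,y,k')$, which is independent of the receiver's inputs.
\item \emph{Receiver marginal.} Fix $(y,k')$ and sum over $(x,k)$. The single diagonal term $k=k'$ contributes $\frac1cB_s(z\mid y)$. The $c-1$ off-diagonal terms contribute $R_Z(z)-\frac1cB_s(z\mid y)$. The total is $R_Z(z)$, which is independent of $s$.
\end{enumerate}
Together these show that $N'$ is a valid NS box whose contraction with $T\otimes\id_{[c]}$ is the required channel.

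The only delicate point is choosing the off-diagonal correction so that both marginals hold simultaneously. Making it a product of $P_X$ with the residual $R_Z-\frac1cB_s$ handles this. The $P_X$ factor fixes the sender marginal, since the residual has total mass $1-1/c$. The residual cancels the $s$-dependence of $B_s$ in the receiver marginal. The domination hypothesis is exactly what guarantees nonnegativity of this correction.
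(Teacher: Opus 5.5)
Your proof is correct and matches the paper's: your off-diagonal entry $\frac{1}{c-1}P_X(x)\bigl(R_Z(z)-\frac1cB_s(z\mid y)\bigr)$ is the paper's $P_X(x)[cR_Z(z)-B_s(z\mid y)]/(c(c-1))$ rewritten. Your three checks — nonnegativity from the domination bound, both NS marginals on every auxiliary input including mismatched ones, and contraction keeping only the diagonal — are the paper's verification step for step.
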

\begin{proof}
For auxiliary input $i$ and output $j$ in $[c]$, define
\begin{equation}
 N(x,i,z\mid s,y,j)=
 \begin{cases}
 H(x,z\mid s,y)/c,&i=j,\\[1mm]
 \displaystyle\frac{P_X(x)[cR_Z(z)-B_s(z\mid y)]}{c(c-1)},&i\ne j.
 \end{cases}
 \label{eq:completion}
\end{equation}
The domination assumption makes every entry nonnegative.
Since $\sum_zB_s(z\mid y)=1$, summing over $z$
gives $P_X(x)/c$ on both diagonal and off-diagonal branches.
For the other marginal, sum over both sender outputs $x,i$:
\[
 \sum_{x,i}N(x,i,z\mid s,y,j)
 =\frac{B_s(z\mid y)}c+
   \frac{cR_Z(z)-B_s(z\mid y)}c=R_Z(z).
\]
Both identities hold on the entire enlarged input alphabet, including
mismatched auxiliary symbols. Hence $N$ is NS. Connecting the identity
channel leaves only the $c$ diagonal terms, whose sum gives exactly
the desired channel.
\end{proof}

The diagonal entries retain the statistical transformation.
The off-diagonal entries correct its marginals without affecting its
action on the ideal auxiliary channel. Defining only the diagonal
entries would not suffice: an NS box must be valid for all of its
inputs before it is connected to a particular channel.
This construction need not define an SR protocol. The auxiliary channel
will be implemented using the prescribed resource channel.

\paragraph{Obtaining the pointwise bound from a state converter.}
Fix a target input type $\tau$ of length $m$ and a
permutation-invariant resource input law $P_{X^k}$.
Put $U_\tau=U_{\mathcal T_\tau^m}$ and consider the dichotomies
\begin{align}
 &(P_{X^k}T^{\otimes k},\
                  P_{X^k}\otimes R_{\mathcal Y,k}^{\mathrm u}),\nonumber\\
 &(U_\tau W^{\otimes m},\
                  U_\tau\otimes(\tau W)^{\otimes m}).
 \label{eq:type-pairs}
\end{align}
Here the first member of each pair denotes its joint input--output law.
Choose an optimal stochastic map $M$ between these experiments. Its
target alphabet is $\mathcal T_\tau^m\times\mathcal Z^m$ on every source
column, including zero-probability columns.
Average $M$ independently over resource and target coordinate
permutations, acting simultaneously on the input and output word of
each channel. The signal and reference laws are invariant under these
permutations. Convexity of TV therefore preserves the error bound, and
the reference is still mapped exactly. The two averages serve different
purposes.

Write $x=x^k,y=y^k,s=s^m,z=z^m$ temporarily.
For each fixed $(x,y)$, transitivity on the target type class makes
$\sum_zM_{(s,z),(x,y)}$ constant in $s$ on $\mathcal T_\tau^m$.
Stochasticity therefore implies
$\sum_zM_{(s,z),(x,y)}=U_\tau(s)$.
Consequently
\[
 H(x,z\mid s,y)=\frac{P_{X^k}(x)M_{(s,z),(x,y)}}{U_\tau(s)}
\]
has sender marginal $P_{X^k}$.
Resource permutation invariance makes $B_s(z\mid y)$ constant on
each resource output type class. Let its value on the $j$th class
be $B_s^{(j)}(z)$ and let $J_k=|\mathcal P_k(\mathcal Y)|$.
The universal law $R_{\mathcal Y,k}^{\mathrm u}$ assigns each output
type class weight $1/J_k$. Exact reference preservation consequently
yields, separately for every $s$,
\begin{equation}
 (\tau W)^{\otimes m}(z)=\frac1{J_k}\sum_{j=1}^{J_k}B_s^{(j)}(z),
 \qquad B_s(z\mid y)\le J_k(\tau W)^{\otimes m}(z).
 \label{eq:type-domination}
\end{equation}
Lemma~\ref{lem:completion} applies with $c_k=\max(2,J_k)$.
The same target symmetry makes all row errors equal, so joint-law TV
is precisely the worst-row error on this class. The polynomial factor
$J_k$ appears in the auxiliary alphabet, not as a multiplier of the
approximation error. This gives a state conversion that can be completed
to a channel protocol while preserving its error bound.

\paragraph{One protocol for all target inputs.}
The resource law may depend on $\tau$. To combine the protocols for
different types while preserving the NS constraints, send the type index
through an additional identity channel of size $h_m=|\mathcal P_m(\mathcal S)|$.
Simply choosing the box indexed by $\tau(s)$ would not work: its
receiver marginal could then be $R_{\tau(s)}$, which depends on the
sender's input. The receiver marginal must instead be indexed by the
received header symbol, even when that symbol is incorrect.
If $N_\tau$ has marginals $P_\tau(x)/c_k$ and $R_\tau(z)$, define
\begin{equation}
 \widehat N(x,i,a,z\mid s,y,j,b)
 =\mathbf1_{\{a=\tau(s)\}}
 \begin{cases}
 N_{\tau(s)}(x,i,z\mid s,y,j),&b=\tau(s),\\
 P_{\tau(s)}(x)R_b(z)/c_k,&b\ne\tau(s).
 \end{cases}
 \label{eq:type-selection}
\end{equation}
Its sender marginal is
$\mathbf1_{\{a=\tau(s)\}}P_{\tau(s)}(x)/c_k$, and its receiver
marginal is $R_b(z)$. Thus the NS constraints hold even on mismatched
header symbols. The combined auxiliary alphabet has size
\begin{equation}
 L_{k,m}=c_kh_m\le2(k+1)^{|\mathcal Y|}(m+1)^{|\mathcal S|}.
 \label{eq:aux-size}
\end{equation}
The combined protocol has error equal to the largest typewise error,
without a factor equal to the number of types.

\subsection{Accounting for auxiliary communication}
\begin{proposition}[One-shot achievability with auxiliary cost]
\label{prop:paid}
If $T^{\otimes\ell}$ has a code for $L_{k,m}$ messages with maximal
error $\eta_\ell$, then
\begin{equation}
 e_{\NS}(k+\ell,m)
 \le\eta_\ell+\max_{\tau\in\mathcal P_m(\mathcal S)}
       \inf_{P_{X^k}\ {\rm permutation\ invariant}}
       \delta\bigl(\text{the two pairs in \eqref{eq:type-pairs}}\bigr).
 \label{eq:paid}
\end{equation}
If $C_T>0$ and $m=O(k)$, one can take
$\ell=O(\log k)$ and $\eta_\ell\le k^{-2}$.
\end{proposition}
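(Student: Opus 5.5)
The plan is to assemble the typewise protocols from Section~\ref{sec:universal-lifting} and replace the ideal auxiliary identity channel by an actual code over $\ell$ further resource uses.

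\textbf{Typewise boxes.} Fix $\tau\in\mathcal P_m(\mathcal S)$ and any permutation-invariant $P_{X^k}$. Take an optimal map $M$ for the pairs in \eqref{eq:type-pairs} and symmetrise it over resource and target coordinate permutations. The resulting $H$ has sender marginal $P_{X^k}$ and satisfies the domination \eqref{eq:type-domination}. Lemma~\ref{lem:completion} then gives an NS box $N_\tau$ with auxiliary alphabet of size $c_k$. Its diagonal part implements $\sum T^{\otimes k}H$, whose worst-row error on $\mathcal T_\tau^m$ equals the joint-law error $\delta_\tau$ by target symmetry.

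\textbf{Combining types.} Combine the $N_\tau$ through the header construction \eqref{eq:type-selection}. This gives one NS box $\widehat N$ with auxiliary alphabet of size $L_{k,m}$. If the auxiliary channel were ideal, each input $s^m$ would have error $\delta_{\tau(s)}\le\max_\tau\delta_\tau$. Since each $P_{X^k}$ can be chosen separately per $\tau$, we may take the infimum inside the maximum. If the infimum is not attained, approximate it and pass to the limit, using compactness of the feasible polytope for fixed blocklengths.

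\textbf{Replacing the ideal channel.} Let $(f,g)$ be the code for $T^{\otimes\ell}$, with encoder $f:[L]\to\mathcal X^\ell$ (possibly randomized) and decoder $g:\mathcal Y^\ell\to[L]$. Compose $\widehat N$ with $f$ on its auxiliary sender output and with $g$ on its auxiliary receiver input. Local pre- and post-processing of an NS box keeps it NS: the sender marginal is $\sum_{i}f(x'\mid i)\cdot$(marginal), independent of $y$, and the receiver side similarly averages $D$ over $g$. The simulated channel is then $\Theta_{\widehat N}[T^{\otimes k}\otimes K]$, where $K=g\circ T^{\otimes\ell}\circ f$ satisfies $K(\cdot\mid i)=(1-\eta_i)\delta_i+\eta_i Q_i$ with $\eta_i\le\eta_\ell$. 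Decompose $K=\id+(K-\id)$. For each fixed $s$, the output law equals the ideal output plus a signed term. That term is a difference of two subprobability-weighted output laws, each carrying the weight of the events $\{$auxiliary symbol decoded incorrectly$\}$.

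\textbf{Main obstacle.} The delicate step is showing this signed term contributes at most $\eta_\ell$ in TV, uniformly in $s$. The sender marginal of $\widehat N$ fixes the distribution of the transmitted auxiliary symbol $(i,a)$ independently of $y$. Conditional on $(x,i,a)$, the channel flips the symbol with probability at most $\eta_\ell$. So the two processes can be coupled to agree except on an event of probability at most $\eta_\ell$: run the same $x^k,y^k$ and the same sender randomness, and differ only when the decoded symbol is wrong. This coupling argument gives TV at most $\eta_\ell$. Adding it to $\max_\tau\delta_\tau$ via the triangle inequality yields \eqref{eq:paid}.

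\textbf{The last clause.} $L_{k,m}$ is polynomial in $k$ when $m=O(k)$, so $\log L_{k,m}=O(\log k)$. Standard maximal-error coding below capacity, for example a random-coding error exponent with expurgation, gives $\eta_\ell\le 2^{-\ell E}$ at a rate $\log L/\ell<C_T$. Choosing $\ell=\lceil c\log k\rceil$ with $c$ large enough then gives $\eta_\ell\le k^{-2}$.
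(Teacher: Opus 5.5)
Your proposal is correct and follows essentially the same route as the paper. You use the symmetrized typewise maps completed via Lemma~\ref{lem:completion}, the header combination \eqref{eq:type-selection}, local composition with the code's encoder and decoder, and random coding with expurgation for the last clause. Your coupling step is a sampling-order rephrasing of the paper's contractivity bound \eqref{eq:contractivity}. Both rest on the fact that the box's sender marginal is independent of all receiver inputs, including the decoded auxiliary symbol and not only $y$.
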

\begin{proof}
Combine the two auxiliary inputs and outputs into single symbols
$i,j\in[L_{k,m}]$. Compose these ports of $\widehat N$ with the
encoder $E_{\rm c}(v\mid i)$ and decoder $D_{\rm c}(j\mid w)$ of the
ordinary channel code:
\[
 \overline N(x,v,z\mid s,y,w)
 =\sum_{i,j}E_{\rm c}(v\mid i)D_{\rm c}(j\mid w)
                         \widehat N(x,i,z\mid s,y,j).
\]
Summing the composed conditional distribution over either party's outputs verifies that
both NS marginal constraints are preserved.
For any NS channel and any two resource channels on the same alphabets,
\begin{equation}
 \max_s\TV(\Theta_N[K](\cdot\mid s),\Theta_N[K'](\cdot\mid s))
 \le\max_x\TV(K(\cdot\mid x),K'(\cdot\mid x)),
 \label{eq:contractivity}
\end{equation}
because the triangle inequality followed by the sender marginal
bounds the left side by
$\max_s\sum_xE(x\mid s)\TV(K(\cdot\mid x),K'(\cdot\mid x))$.
Replacing the auxiliary identity channel by the code therefore adds
at most $\eta_\ell$ to the error.

Finally, ordinary random coding at any fixed rate below $C_T$
has a strictly positive error exponent \cite{polyanskiy2010channel}.
Expurgating half the messages gives maximal error at most
$2^{1-c\ell}$ for a fixed $c>0$. By \eqref{eq:aux-size},
$2L_{k,m}$ messages and error at most $k^{-2}$ can both be achieved with
$\ell=A\log k$ uses for a sufficiently large constant $A$.
\end{proof}

\section{Second-order characterization}
\label{sec:second-order}
We prove Theorem~\ref{thm:main} by establishing matching converse and
achievability bounds. Let $r=C_T/C_W$ and
$d_n=(mC_W-nC_T)/\sqrt n$.

\begin{proposition}[Quantitative Gaussian comparison]
\label{prop:gaussian}
For every fixed $\Lambda<\infty$, uniformly over integer $m$ with
$|d_n|\le\Lambda$,
\begin{equation}
 \left|e_{\NS}(n,m)-\mathcal E(d_n)\right|
 =O\!\left(\frac{\log n}{\sqrt n}\right),
 \label{eq:quantitative}
\end{equation}
where $\mathcal E$ is given by \eqref{eq:profile}.
\end{proposition}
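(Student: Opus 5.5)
We prove the two inequalities separately: a lower bound on $e_{\NS}(n,m)$ from Proposition~\ref{prop:converse}, and an upper bound from Proposition~\ref{prop:paid}. In both directions the testing-curve formula \eqref{eq:testing-gap} reduces the problem to comparing two testing curves. The information-spectrum bounds \eqref{eq:spectrum}, with slack $\gamma=O(\log n)$, turn each curve into a tail probability of a sum of independent log-likelihood ratios. A Berry--Esseen estimate then approximates each tail by a Gaussian. After reparametrizing the level $y=2^{-h}$ with $h=nC_T+\xi\sqrt n$, the gap becomes $\Phi(\xi/\sqrt{V_T})-\Phi((\xi-d_n)/\sqrt{rV_W})$ up to $O(\log n/\sqrt n)$. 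Taking the supremum over $\xi$ gives $\mathcal E(d_n)$.

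\textbf{Achievability.} Take $k=n-\ell$ with $\ell=O(\log n)$, as in Proposition~\ref{prop:paid}. Removing $\ell$ resource uses shifts $d_n$ by $O(\log n/\sqrt n)$, and $\mathcal E$ is Lipschitz, so this costs only $O(\log n/\sqrt n)$. For each target type $\tau$, let $P_{X^k}$ be uniform on the type class of a type closest to a capacity-achieving $P^*\in\Pi_T$.
\begin{itemize}
\item \emph{Resource curve.} With reference $P_{X^k}\otimes R^{\mathrm u}_{\mathcal Y,k}$, the log-ratio is $\log T^{\otimes k}(y|x)-\log R^{\mathrm u}(y)$. Since $R^{\mathrm u}\ge (P^*T)^{\otimes k}/J_k$ pointwise, and symmetrically up to polynomial factors, this log-ratio is within $O(\log k)$ of $\sum_i i_{P^*,T}(x_i;y_i)$. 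The latter is a sum of independent terms conditional on $x$, with mean $kC_T$ and variance $kV_T$ (the single-dispersion assumption). Berry--Esseen then gives the lower bound on $\alpha_y$ from the left inequality in \eqref{eq:spectrum}.
\item \emph{Target curve.} The log-ratio is $\sum_i\log W(z_i|s_i)/(\tau W)(z_i)$ with $s\in\mathcal T^m_\tau$ fixed. Its mean is $mI(\tau,W)\le mC_W$ and its variance is $mV(\tau,W)$. The upper bound on $\alpha_y$ comes from the right inequality in \eqref{eq:spectrum}.
\end{itemize}
The gap must be bounded by $\mathcal E(d_n)+O(\log n/\sqrt n)$ uniformly in $\tau$, and types far from capacity-achieving must be handled. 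If $I(\tau,W)\le C_W-\kappa$, the target curve sits $\Omega(\sqrt n)$ lower in the threshold variable, so the gap is tiny. For $\tau$ near $\Pi_W$, the quadratic decay \cite{caotomamichel2023quadratic} gives $I(\tau,W)\le C_W-c\,\mathrm{dist}(\tau,\Pi_W)^2$. Continuity of $V$ gives $V(\tau,W)=V_W+O(\mathrm{dist})$. A mean deficit only helps, since a smaller target mean shrinks the gap. A variance perturbation of order $\mathrm{dist}$ changes the Gaussian term by $O(\mathrm{dist})$, while the mean deficit contributes $\sqrt n\,\mathrm{dist}^2$. Balancing the two cases $\mathrm{dist}\lessgtr n^{-1/4}$ needs care, and a uniform monotonicity argument should give a uniform $O(\log n/\sqrt n)$. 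This uniform-in-$\tau$ step is the main obstacle on this side.

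\textbf{Converse.} Choose $P_S$ uniform on $\mathcal T^m_\tau$ for a type $\tau$ closest to a capacity-achieving input of $W$; this fixes the target curve up to Berry--Esseen. The protocol fixes $R_Z=DR_Y$, and $R_Z$ cannot be chosen. We therefore use that the target-curve lower bound holds against any reference: by Markov's inequality, $\log(W^{\otimes m}/R_Z)\ge\log(W^{\otimes m}/(\tau W)^{\otimes m})-\gamma$ except on mass $2^{-\gamma}$. On the resource side we choose $R_Y=(P_Y^*)^{\otimes n}$ with $P_Y^*=P_B^*(T)$. The law $P_X$ is arbitrary and correlated, so we first symmetrize, which costs nothing, and decompose $P_X$ into type classes. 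For each input type $\rho$, the log-ratio given $x$ is a sum of independent terms with mean $nI'(\rho)\le nC_T$, using $D(T(\cdot|a)\|P_Y^*)\le C_T$, and variance near $V_T$ for $\rho$ near $\Pi_T$. Mixing over types makes $\alpha_y(P_{XY},\cdot)$ a mixture of curves, each no larger than the ideal Gaussian curve up to $O(\log n/\sqrt n)$, again by the quadratic-decay estimate. Far types are handled by Markov's inequality, which is the uniformity issue mentioned in the introduction. Combining the two curves gives $e_{\NS}(n,m)\ge\mathcal E(d_n)-O(\log n/\sqrt n)$.
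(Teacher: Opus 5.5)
Your architecture matches the paper's: Proposition~\ref{prop:converse} and Proposition~\ref{prop:paid}, with testing curves reduced to tails via \eqref{eq:spectrum} and then Berry--Esseen. However, the converse has a genuine gap, caused by the fixed reference $R_Y=(P_Y^*)^{\otimes n}$. Conditional on an input type $\rho$, the log-ratio $\log T^{\otimes n}(y|x)/(P_Y^*)^{\otimes n}(y)$ has mean $n\sum_a\rho(a)D(T(\cdot|a)\Vert P_Y^*)=n[I(\rho,T)+D(\rho T\Vert P_Y^*)]$. This is affine in $\rho$. It equals $nC_T$ on the whole face of inputs supported on $\{a:D(T(\cdot|a)\Vert P_Y^*)=C_T\}$, which is in general much larger than $\Pi_T$. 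The quadratic decay of \cite{caotomamichel2023quadratic} controls $I(\rho,T)$, not this quantity. On that face there is no mean deficit for Chebyshev or Markov to exploit, and the conditional variance $\sum_a\rho(a)\Var_{T(\cdot|a)}[\log(T(\cdot|a)/P_Y^*)]$ need not equal $V_T$.

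For instance, a Z-channel with $T(0|0)=1$ satisfies all hypotheses. Yet the input $0^n$ gives the deterministic log-ratio $nC_T$, so $\alpha_{2^{-h}}=\min\{1,2^{-\xi\sqrt n}\}$. This equals one for $\xi<0$ and exceeds $1-\Phi(\xi/\sqrt{V_T})$ by a constant. For a protocol whose induced $P_{X^n}$ concentrates there, your bound certifies only about $\Phi(d_n/\sqrt{rV_W})$, which can be strictly below $\mathcal E(d_n)$ (e.g.\ $d_n<0$ with $rV_W\ll V_T$).

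The missing idea is a resource reference that dominates every product output law $(\rho T)^{\otimes n}$ simultaneously up to a polynomial factor, such as $R^{\mathrm u}_{\mathcal Y,n}$. Conditional on type $\rho$, the log-ratio is then at most $J_{\rho,n}+\log|\mathcal P_n(\mathcal Y)|$, whose mean $nI(\rho,T)$ is matched, so Lemma~\ref{lem:uniform-type} applies type by type; this is Lemma~\ref{lem:uniform-source}. You should also apply the spectrum bound to the whole pair before conditioning on the type. The testing curve of the mixture is not the mixture of per-type curves at a common budget.

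On the achievability side, the uniform-in-$\tau$ target estimate, which you identify as the main obstacle, is left unproved. A split at $\mathrm{dist}\lessgtr n^{-1/4}$ is at the wrong scale. For $t=\mathrm{dist}(\tau,\Pi_W)\le n^{-1/4}$, bounding the variance perturbation by itself costs $O(|\xi|t)$, which can be of order $n^{-1/4}$. Keep both effects in the standardized threshold instead:
\[
 \frac{\xi+\sqrt m\,(C_W-I(\tau,W))}{\sqrt{V(\tau,W)}}-\frac{\xi}{\sqrt{V_W}}
 \ge c_1\sqrt m\,t^2-c_2|\xi|t\ge-c_3\frac{\xi^2}{\sqrt m}.
\]
So the relevant scale is $t\sim|\xi|/\sqrt m$, and only $|\xi|\le O(\sqrt{\log m})$ matters. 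For larger $\xi$ the tail is $O(m^{-1/2})$ because the mean is at most $mC_W$. For very negative $\xi$ the Gaussian term is within $m^{-2}$ of one. This is Lemma~\ref{lem:uniform-type}; far types are handled by Chebyshev, as you say.

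Finally, on the resource side, $R^{\mathrm u}_{\mathcal Y,k}\le\mathrm{poly}(k)\,(P^*T)^{\otimes k}$ is false pointwise, since the ratio is exponential on atypical output types. It holds only outside an event of probability $O(1/k)$ under the actual output law, by Markov's inequality, and that is enough for the lower spectrum bound. With your type-class input this is exactly Lemma~\ref{lem:uniform-target} applied to $T$ with reference $R^{\mathrm u}_{\mathcal Y,k}$.
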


The uniform estimates needed below are proved in
Appendix~\ref{app:gaussian}. In the singleton-dispersion case, they give
the following bounds at testing budget $2^{-kC_K-\xi\sqrt k}$:
\begin{enumerate}
 \item Every permutation-invariant resource input has testing
 probability at most $1-\Phi(\xi/\sqrt{V_K})+O(\log k/\sqrt k)$.
 An iid capacity-achieving input attains this expression to the same
 precision.
 \item For every input type, the testing probability against its
 canonical product output reference is at most the same expression.
 \item For a rounded capacity-achieving type, the testing probability
 against \emph{every} output reference is at least the same expression.
\end{enumerate}
All remainders are uniform in the threshold, input types, and references
to which they apply. The second statement also applies to types with
a capacity deficit. It is an upper bound. It does not assert that all
types have the same Gaussian limit.

\subsection{Converse}
Take any NS protocol with error $\epsilon_n$. Round a capacity-achieving
input of $W$ to an $m$-type $\tau_m$ and restrict the target to this
class. Average the protocol over resource and target coordinate
permutations. The averaged channel is still NS. Memorylessness and
convexity of TV ensure that its error does not increase.
Under uniform target input $U_{\tau_m}$, the induced resource input
$P_{X^n}$ is permutation-invariant, but need not be a product law.

Apply Proposition~\ref{prop:converse} using
$R_{\mathcal Y,n}^{\mathrm u}$. At $h=nC_T+\xi\sqrt n\ge0$,
Lemmas~\ref{lem:uniform-source} and~\ref{lem:uniform-target} give
\begin{align*}
 \alpha_{2^{-h}}(\text{resource pair})
   &\le1-\Phi(\xi/\sqrt{V_T})+O(\log n/\sqrt n),\\
 \alpha_{2^{-h}}(\text{target pair})
   &\ge1-\Phi((\xi-d_n)/\sqrt{(m/n)V_W})
                           -O(\log n/\sqrt n).
\end{align*}
The second bound is valid for the protocol's induced target reference.
Since $m/n=r+O(n^{-1/2})$, the change from $(m/n)V_W$ to $rV_W$
costs $O(n^{-1/2})$ uniformly in $\xi$. This follows by differentiating
the normal CDF with respect to its positive variance.
Subtracting the two bounds and taking the supremum proves
\[
 \epsilon_n\ge\mathcal E(d_n)-O(\log n/\sqrt n).
\]
The restriction $\xi\ge-C_T\sqrt n$ excludes only an exponentially
small Gaussian tail. The bounds are uniform in the protocol. Taking
the infimum over protocols therefore proves the converse in
\eqref{eq:quantitative}.

\subsection{Achievability}
Reserve $\ell=\lceil A\log n\rceil$ of the resource uses for the
auxiliary messages, with $A$ large enough for
Proposition~\ref{prop:paid}, and put $k=n-\ell$.
The code error is $O(n^{-2})$.
For each target type, use an iid capacity-achieving resource input
in \eqref{eq:type-pairs}.
Write
\[
 d_k=\frac{mC_W-kC_T}{\sqrt k}
     =d_n+O(\log n/\sqrt n),\qquad
 \frac mk=r+O(n^{-1/2}).
\]
At $h=kC_T+\xi\sqrt k\ge0$, the first two uniform estimates give
\begin{align*}
 \alpha_{2^{-h}}(\text{resource pair})
   &\ge1-\Phi(\xi/\sqrt{V_T})-O(\log n/\sqrt n),\\
 \alpha_{2^{-h}}(\text{target pair})
   &\le1-\Phi((\xi-d_k)/\sqrt{(m/k)V_W})
                                      +O(\log n/\sqrt n).
\end{align*}
Normal CDFs are uniformly Lipschitz in their location and in their
variance on compact positive variance intervals. The last expression
may therefore be replaced by
$1-\Phi((\xi-d_n)/\sqrt{rV_W})$ at the stated precision.
Equation~\eqref{eq:testing-gap} bounds each type's deficiency by
$\mathcal E(d_n)+O(\log n/\sqrt n)$.
The testing budget zero causes no extra term: both signal laws are
absolutely continuous with respect to their canonical references.
The remainder is uniform in the target type, so
Proposition~\ref{prop:paid} proves achievability.
This proves the achievability bound in Proposition~\ref{prop:gaussian}.

\subsection{Inversion and resonance}
The function $\mathcal E$ is continuous and nondecreasing, with
limits zero and one at the two ends of the real line.
For $d$ in a fixed compact interval and $\mathcal E(d)$ bounded below
by a positive constant, every maximizing threshold in
\eqref{eq:profile} lies in one bounded interval.
On that interval the derivative
\[
 \frac1{\sqrt{rV_W}}\,
 \phi\!\left(\frac{\xi-d}{\sqrt{rV_W}}\right)
\]
has a positive lower bound for $d$ in a compact interval.
Evaluating the objective at a maximizing threshold for the smaller
displacement therefore gives
$\mathcal E(d_2)-\mathcal E(d_1)\ge c(d_2-d_1)$ locally at every
positive error level. Hence $\mathcal E(d_\epsilon)=\epsilon$
has a unique solution, with a locally Lipschitz inverse.

The inequality $\mathcal E(d)\le\epsilon$ is equivalent, after putting
$\eta=\Phi(\xi/\sqrt{V_T})$, to
\[
 d\le\sqrt{V_T}\Phi^{-1}(\eta)
          -\sqrt{rV_W}\Phi^{-1}(\eta-\epsilon)
 \quad\hbox{for every }\epsilon<\eta<1.
\]
Taking the infimum over $\eta$ gives \eqref{eq:quantile}.
Let $M_n=rn+(d_\epsilon/C_W)\sqrt n$.
For a sufficiently large fixed $H$, \eqref{eq:quantitative} and the
positive local slope imply
\[
 e_{\NS}(n,\lfloor M_n-H\log n\rfloor)<\epsilon
 <e_{\NS}(n,\lfloor M_n+H\log n\rfloor).
\]
Monotonicity in target length proves \eqref{eq:main-length}.
This argument bounds the remainder. It does not identify a third-order
logarithmic coefficient.

For comparison with the state-dichotomy literature, define
\[
 S_\nu(t)=\sup_x\{\Phi(x)-\Phi((x-t)/\sqrt\nu)\},
 \qquad \nu=\frac{rV_W}{V_T}.
\]
Then $\mathcal E(d)=S_\nu(d/\sqrt{V_T})$ and
$d_\epsilon=\sqrt{V_T}S_\nu^{-1}(\epsilon)$:
this is the sesquinormal form of the result
\cite{lipkabartosik2024dichotomies}.
When $\nu=1$, the maximizing threshold is $\xi=d/2$ for $d>0$,
and the supremum is zero for $d\le0$. This proves
\eqref{eq:resonance}. Moreover,
\[
 e_{\NS}(n,\lfloor rn\rfloor)
 =O(\log n/\sqrt n)
 \quad\text{under \eqref{eq:resonance-condition}}.
\]
This last estimate does not assert exact conversion at finite blocklength.

\appendix
\section{Nonunique capacity-achieving dispersions}
\label{app:nonunique}
We now allow different capacity-achieving dispersions, while retaining
positive capacities and positive minimum dispersions. Thus the equality
$V_{\min}=V_{\max}$ is no longer assumed.
Since \eqref{eq:dispersion} is affine
on the compact convex set of capacity-achieving inputs, its image is
the interval
\[
 \mathcal V_K=[V_{\min}(K),V_{\max}(K)].
\]
For a probability measure $\lambda$ on $\mathcal V_T$, set
$F_\lambda(\xi)=\int\Phi(\xi/\sqrt v)\lambda(dv)$ and define
\begin{equation}
 D(d,w)=\min_{\lambda\in\mathcal P(\mathcal V_T)}
 \sup_{\xi\in\mathbb R}
 \left\{F_\lambda(\xi)-\Phi\!\left(\frac{\xi-d}{\sqrt{rw}}\right)\right\},
 \qquad
 \mathcal E_{\rm gen}(d)=\max_{w\in\mathcal V_W}D(d,w).
 \label{eq:general-profile}
\end{equation}
Here probability measures on the compact interval are Borel measures.

\begin{theorem}[General positive-dispersion formula]
\label{thm:general}
If $C_T,C_W,V_{\min}(T),V_{\min}(W)>0$, then, uniformly for bounded
$d_n=(mC_W-nC_T)/\sqrt n$,
\[
 e_{\NS}(n,m)=\mathcal E_{\rm gen}(d_n)+O(\log n/\sqrt n).
\]
For every $0<\epsilon<1$ there is a unique
$d_\epsilon$ with $\mathcal E_{\rm gen}(d_\epsilon)=\epsilon$, and
\[
 m^*_{\NS}(n,\epsilon)
 =rn+\frac{d_\epsilon}{C_W}\sqrt n+O(\log n).
\]
\end{theorem}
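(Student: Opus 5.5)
We follow the proof of Proposition~\ref{prop:gaussian} and keep its architecture: Proposition~\ref{prop:converse} together with the testing-gap formula \eqref{eq:testing-gap} for the converse, and Proposition~\ref{prop:paid} for achievability. The new ingredient is a version of the uniform Gaussian estimates in which the resource testing curve is a \emph{mixture} of Gaussian tails. The target testing curve is controlled by a worst-case variance. Once the quantitative estimate is proved, the inversion step is the same as in Section~\ref{sec:second-order}.

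\emph{Uniform estimates.} First we extend the three estimates from Appendix~\ref{app:gaussian}. For a permutation-invariant resource input $P_{X^n}=\sum_\sigma\mu(\sigma)U_{\mathcal T_\sigma^n}$ tested against $R^{\mathrm u}_{\mathcal Y,n}$ at budget $2^{-nC_T-\xi\sqrt n}$, we split the mixture over input types. Types at distance $\gg n^{-1/2}\sqrt{\log n}$ from $\Pi_T$ have a mutual-information deficit larger than $\Omega(\log n/\sqrt n)$ per letter, by the quadratic decay of \cite{caotomamichel2023quadratic}. Their testing probability is therefore at most $1-\Phi(\cdot)$ shifted by a large amount, which is at most the Gaussian bound for some nearby capacity-achieving variance. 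Types within $O(n^{-1/2})$ of a capacity-achieving $P$ with $I(P)\approx C_T-\Delta/\sqrt n$ contribute at most $1-\Phi((\xi+\Delta)/\sqrt{V(P)})+O(\log n/\sqrt n)\le 1-\Phi(\xi/\sqrt{v})+O(\log n/\sqrt n)$ with $v=V(P)\in\mathcal V_T$, up to continuity of $V$. Here the positive minimum dispersion gives uniform Berry--Esseen constants. Pushing $\mu$ forward to a measure $\lambda$ on $\mathcal V_T$ then bounds the resource testing curve by $1-F_\lambda(\xi)$ plus error. For achievability, choosing the mixture of iid capacity-achieving inputs with dispersions distributed by $\lambda$ attains $1-F_\lambda(\xi)$ from below. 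On the target side, for a rounded capacity-achieving type $\tau$ with dispersion $w$, statement 3 yields the lower bound $1-\Phi((\xi-d_n)/\sqrt{rw})$ against every reference. Statement 2 gives, for every type, the upper bound with the variance of that type. For types $\tau$ far from $\Pi_W$, the product output reference is more favourable, and the deficit only helps.

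\emph{Converse.} We choose $w\in\mathcal V_W$ to attain the maximum in \eqref{eq:general-profile}, round the corresponding capacity-achieving input of $W$ to a type, symmetrize, and apply Proposition~\ref{prop:converse}. The induced $P_{X^n}$ yields some $\lambda$, so
\[
\epsilon_n\ge\sup_\xi\{F_\lambda(\xi)-\Phi((\xi-d_n)/\sqrt{rw})\}-O(\log n/\sqrt n)\ge D(d_n,w)-O(\log n/\sqrt n).
\]

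\emph{Achievability.} For each target type $\tau$ we must bound the deficiency by $\max_w D(d_n,w)$. Here lies the main obstacle. A type $\tau$ whose dispersion lies outside $\mathcal V_W$ or whose capacity is deficient must still be handled, and the target bound for $\tau$ involves its own variance $V(\tau,W)$ and deficit $\Delta_\tau$. I would show that $\Phi((\xi-d_n-\Delta_\tau)/\sqrt{rV(\tau)})$ is dominated uniformly in $\xi$, over the range of relevant $\xi$, by its value at the nearest capacity-achieving dispersion. This uses quadratic decay: $V$ changes by $O(\|\tau-\Pi_W\|)$ while $\Delta_\tau\gtrsim\|\tau-\Pi_W\|^2\sqrt n$, so the shift dominates once the distance exceeds $n^{-1/4}$. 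The intermediate regime needs care and may only give $O(\log n/\sqrt n)$ after a case split at distance $n^{-1/2}\sqrt{\log n}$. Then we pick the optimal $\lambda$ for that $w$ as the resource law for type $\tau$; this is allowed since Proposition~\ref{prop:paid} lets the resource law depend on $\tau$. Existence of a minimizing $\lambda$ follows from weak compactness and lower semicontinuity.

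\emph{Inversion.} Finally, $D(\cdot,w)$ is nondecreasing. Repeating the local-slope argument for each $\lambda,w$ with uniform constants, which is possible since the variances are bounded away from zero and infinity, shows that $\mathcal E_{\rm gen}$ is continuous and strictly increasing with locally Lipschitz inverse on $(0,1)$. This gives a unique $d_\epsilon$ and the $O(\log n)$ expansion exactly as before.
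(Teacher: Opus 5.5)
Your proposal is correct and takes essentially the same route as the paper. The converse fixes $w\in\mathcal V_W$, rounds to a target type, and bounds the protocol-induced resource law by one threshold-independent mixture $\lambda_n$, compared against every induced target reference. Achievability pairs each target type $\tau$ with its nearest capacity-achieving variance $w_\tau$ and a minimizing $\lambda_\tau$, realized by a mixture of iid capacity-achieving inputs, and the inversion uses the same $\max_w\min_\lambda$ local-slope argument.

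The estimate you flag as the main obstacle is exactly Lemma~\ref{lem:uniform-type}. The paper proves it via the bound $c_1\sqrt k\,t^2-c_2|\xi|t\ge-c_3\xi^2/\sqrt k$ for $|\xi|=O(\sqrt{\log k})$, which has the same effect as your split at distance $\sqrt{\log n/n}$. For the resource mixture, apply the spectrum bound \eqref{eq:spectrum} to the whole mixture before conditioning on input types: per-type testing curves at a common budget average only to a lower bound.
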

\begin{proof}
We follow the proof in Section~\ref{sec:second-order}, now allowing the
variance choices to depend on the target type.
For achievability, associate each target type $\tau$ with a closest
capacity-achieving input $Q_\tau$ and its variance $w_\tau$.
Choose a minimizing $\lambda_\tau$ in $D(d_n,w_\tau)$.
Lemma~\ref{lem:uniform-source} realizes it by a mixture of iid
capacity-achieving inputs. For a given target type, the same mixture
works at every testing threshold. Lemma~\ref{lem:uniform-type} bounds the target
testing curve using $w_\tau$. Thus its deficiency is at most
$D(d_n,w_\tau)+O(\log n/\sqrt n)$.
The type header permits these choices to depend on $\tau$. The largest
type error is therefore at most $\mathcal E_{\rm gen}(d_n)$ plus the
same remainder.

For the converse, fix $w\in\mathcal V_W$, round a capacity-achieving
input with variance $w$ to a target type, and average the protocol as before.
Lemma~\ref{lem:uniform-source} assigns its correlated resource input
one measure $\lambda_n$ independent of the testing threshold.
Lemma~\ref{lem:uniform-target} bounds the target curve for every
induced reference. Consequently its error is at least
$D(d_n,w)-O(\log n/\sqrt n)$. Taking the maximum over $w$ gives the
converse bound.
All errors are uniform over the compact variance intervals.
No exchange of minimum and maximum is used.

It remains to justify inversion of the limiting error profile.
For bounded $d$, the difference in braces in \eqref{eq:general-profile}
tends uniformly to zero as $|\xi|\to\infty$, over all $w,\lambda$.
On compact $\xi$ intervals it is continuous in $d,w$ and in $\lambda$
for weak convergence. Compactness of probability measures on
$\mathcal V_T$ therefore gives attainment of the minimum and
continuity of $D$ and $\mathcal E_{\rm gen}$.
As $d\to-\infty$, splitting at $\xi=d/2$ bounds the supremum by
vanishing Gaussian tails. As $d\to+\infty$, evaluation at $\xi=d/2$
makes the difference tend uniformly to one.

To establish the local inverse bound where
$\mathcal E_{\rm gen}\ge\epsilon_0>0$, take a maximizing $w_*$ at
$d_1$. For every $\lambda$, its threshold supremum is at least
$\epsilon_0$ and is attained in one common compact interval.
Increasing $d_1$ to $d_2$ raises the objective at that maximizing
threshold by at least $c(d_2-d_1)$, with $c>0$ uniform in
$w_*,\lambda$. Minimize over $\lambda$ and use the same $w_*$ at $d_2$.
This proves the local inverse bound. The integer bracketing argument
of Section~\ref{sec:second-order} then gives the length expansion.
\end{proof}
The optimization order $\max_w\min_\lambda\sup_\xi$ has an operational
meaning: the target type is known to the encoder and communicated
to the decoder, whereas the resource input must be chosen before the
testing threshold. The optimization order retains this dependence.

\section{Uniform Gaussian estimates}
\label{app:gaussian}

This appendix proves the uniform estimates used in the converse and
achievability arguments. Let $K:\mathcal A\to\mathcal B$ be a finite channel
with $C_K>0$ and $V_{\min}(K)>0$. Constants below may depend on $K$,
but not on input types, testing thresholds, or reference distributions.
Products such as $P_{A^k}K^{\otimes k}$ in this appendix denote joint input--output laws, not output marginals.
For a $k$-type $P$ and any $a^k\in\mathcal T_P^k$, define
\begin{equation}
 J_{P,k}:=\sum_{i=1}^k\log\frac{K(B_i|a_i)}{(PK)(B_i)},
 \qquad B^k\sim K^{\otimes k}(\cdot|a^k).
 \label{eq:appendix-type-density}
\end{equation}
Its distribution depends only on $P$, not on the chosen word. Its mean
and variance are $kI(P,K)$ and $kV(P,K)$, respectively.

We first establish the moment and regularity bounds needed below.
The second and third absolute moments of $i_{P,K}(A;B)$ are bounded
uniformly over all $P$.
Indeed, if $K_{\min}$ is the smallest positive channel entry, then on
events of positive probability
\[
 \log K_{\min}\leq i_{P,K}(a;b)\leq\log(1/P(a)),
\]
and $\sum_aP(a)\log^j(1/P(a))$ is bounded for $j=2,3$.
Thus $V(P,K)$ is bounded on the simplex. The Berry--Esseen theorem applied to the
independent summands in \eqref{eq:appendix-type-density} gives
\begin{equation}
 \sup_u\left|\Pr\{J_{P,k}\leq kI(P,K)+u\sqrt k\}
       -\Phi\!\left(\frac u{\sqrt{V(P,K)}}\right)\right|
 \leq\frac{c_K}{\sqrt k\,V(P,K)^{3/2}}
 \label{eq:appendix-berry-esseen}
\end{equation}
whenever $V(P,K)>0$. See also
\cite[Theorem~44 and Lemma~62]{polyanskiy2010channel}.
Taking left limits gives the corresponding upper-tail bounds, with
equality at the threshold included.

All capacity-achieving inputs have the same output law, which is
positive on every output reachable through $K$. After deleting
unreachable outputs, this law is strictly positive. Consequently,
$I(P,K)$ and $V(P,K)$ have bounded derivatives on a fixed neighbourhood
of the capacity-achieving set $\Pi_K$. The variance here is the
conditional variance, whose finite-sum formula remains smooth even
when some input probabilities vanish. Finally, the quadratic capacity-loss
bound of \cite{caotomamichel2023quadratic} gives constants $c,L,\delta>0$
such that, with $t=\operatorname{dist}_2(P,\Pi_K)$ and a closest
$P^*\in\Pi_K$,
\begin{equation}
 C_K-I(P,K)\geq ct^2,\qquad
 |V(P,K)-V(P^*,K)|\leq Lt,\qquad
 V(P,K)\geq V_{\min}(K)/2
 \label{eq:appendix-capacity-neighbourhood}
\end{equation}
for $t\leq\delta$. These estimates provide uniform constants below.
The resource converse will not require a product-input assumption.

\begin{lemma}[Uniform conditional-type tail]
\label{lem:uniform-type}
For each input distribution $P$, choose a closest $P^*(P)\in\Pi_K$ and
put $v(P)=V(P^*(P),K)$. There is a constant $c$ such that, for every
$k\geq2$, every $k$-type $P$, and every real $\xi$,
\begin{equation}
 \Pr\{J_{P,k}\geq kC_K+\xi\sqrt k\}
 \leq1-\Phi\!\left(\frac\xi{\sqrt{v(P)}}\right)
       +c\frac{\log k}{\sqrt k}.
 \label{eq:uniform-type-tail}
\end{equation}
\end{lemma}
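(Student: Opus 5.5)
We split into cases according to $t=\operatorname{dist}_2(P,\Pi_K)$, measured against the scale $k^{-1/2}\sqrt{\log k}$. Throughout, write $\theta=kC_K+\xi\sqrt k$. The distribution of $J_{P,k}$ depends only on $P$, with mean $kI(P,K)$ and variance $kV(P,K)$. We may assume $\xi$ lies in a range where the right side of \eqref{eq:uniform-type-tail} is below one. In particular, for $\xi\le -C\sqrt{\log k}$ the Gaussian term already exceeds $1-k^{-1/2}$, so the bound is trivial with $c\ge1$.

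\textbf{Far types, $t\ge\delta$ or $ct^2\ge a\sqrt{\log k/k}$ for a suitable constant $a$.} Here the mean satisfies $kI(P,K)\le kC_K-ct^2k$, since by compactness $C_K-I(P,K)$ is bounded below by a positive constant when $t\ge\delta$. We use Chebyshev or Hoeffding together with the uniform second-moment bound. Bounded increments are not literally available, because $\log(1/(PK)(b))$ can be large, but $J_{P,k}$ is a sum of independent terms with uniformly bounded second moments. The deviation required is $\theta-kI(P,K)\ge ct^2k+\xi\sqrt k$.

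If $\xi\ge -\tfrac12 ct^2\sqrt k$, Chebyshev gives probability at most $O(1/(t^4k))\le O(1/\log k)$, which is too weak. We therefore use a Bernstein/Chernoff bound instead. The $i_{P,K}$ terms have uniformly bounded exponential moments $\E 2^{\lambda i}$ for small $\lambda$, because $\E[P(A)^{-\lambda}]\le|\mathcal A|$ for $\lambda<1$. This gives a tail of order $\exp(-c'\min(t^4k,t^2k))\le k^{-1/2}$ once $a$ is chosen large.

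If $\xi<-\tfrac12ct^2\sqrt k\le -\tfrac12 a\sqrt{\log k}$, the Gaussian term on the right is at least $1-k^{-1/2}$ and the bound is trivial. The same dichotomy handles $t\ge\delta$.

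\textbf{Near types, $t\le \min(\delta,(a\log k/k)^{1/4}c^{-1/2})$.} Now \eqref{eq:appendix-capacity-neighbourhood} gives $V(P,K)\ge V_{\min}/2$. The Berry--Esseen bound \eqref{eq:appendix-berry-esseen} then applies with a uniform constant, giving
\[
\Pr\{J_{P,k}\ge\theta\}\le 1-\Phi\!\left(\frac{\xi+\sqrt k(C_K-I(P,K))}{\sqrt{V(P,K)}}\right)+O(k^{-1/2}).
\]
Since $C_K-I\ge0$, we may drop that term, which only increases the bound provided the numerator stays on the correct side; monotonicity of $\Phi$ justifies this. It remains to replace $V(P,K)$ by $v(P)$, where $|V(P,K)-v(P)|\le Lt$.

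The main obstacle is this replacement, because $t$ may be as large as $(\log k/k)^{1/4}$, which is much larger than $\log k/\sqrt k$. We handle it by keeping the capacity-deficit term. Set $u=\xi+ct^2\sqrt k$, so that the bound is $1-\Phi(u/\sqrt{V})$. Changing the variance by $Lt$ alters $\Phi(u/\sqrt V)$ by at most $O(t\,|u|\phi(u/\sqrt{V'}))$, which is $O(t)$ uniformly. That is too big in general, so we compare with the target expression directly. The difference between $\Phi(u/\sqrt V)$ and $\Phi(\xi/\sqrt v)$ has two parts. The shift $u-\xi=ct^2\sqrt k$ gains at least $\asymp t^2\sqrt k\,\phi$, while the variance change loses at most $\asymp t|\xi|\phi$.

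When $|\xi|\le C\sqrt{\log k}$, the loss $t|\xi|$ is dominated by the gain $t^2\sqrt k$ whenever $t\ge C'\sqrt{\log k}/\sqrt k$. For smaller $t$, the loss is $O(\log k/\sqrt k)$ directly. For $|\xi|>C\sqrt{\log k}$, both Gaussian tails are below $k^{-1/2}$, since the variances are bounded below, and the bound is immediate. Combining the cases yields \eqref{eq:uniform-type-tail} with a constant depending only on $K$.
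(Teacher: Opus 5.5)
Your proof is correct, but it is organized differently from the paper's, and one of your cases is unnecessary.

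The paper splits only at $t=\delta$. On the whole neighbourhood $t\le\delta$ it applies Berry--Esseen, which is valid there because $V(P,K)\ge V_{\min}(K)/2$. It combines this with the single estimate
\[
 \frac{\xi+\sqrt k\,(C_K-I(P,K))}{\sqrt{V(P,K)}}-\frac{\xi}{\sqrt{v(P)}}
 \ge c_1\sqrt k\,t^2-c_2|\xi|t\ge-c_3\frac{\xi^2}{\sqrt k},
\]
obtained by minimizing the quadratic in $t$. It then restricts to $|\xi|\le H_k=\sqrt{4M\ln k}$, so the loss is $O(\log k/\sqrt k)$. For $t>\delta$, a fixed capacity gap and Chebyshev suffice.

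Your near-type step keeps the deficit $ct^2\sqrt k$ and splits at $t\asymp\sqrt{\log k/k}$. It is the same estimate, handled by cases instead of by completing the square. It is cleanest when stated for the arguments of $\Phi$, as above, rather than with density factors $\phi$ at unspecified points.

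What you missed is that this step never uses $t\lesssim(\log k/k)^{1/4}$: it works verbatim for every $t\le\delta$. So the part of your far region with $t\le\delta$ can be dropped, and with it the Chernoff/Bernstein bound. For $t\ge\delta$, Chebyshev with the fixed gap is enough, exactly as in the paper.

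Dropping it is fortunate, because the Bernstein step is the least justified part of your write-up. A single summand with $P(a)=1/k$ can have $\E 2^{\lambda i}$ of order $k^{\lambda}$, so exponential moments are not uniformly bounded per summand. The bound $\E_P[P(A)^{-\lambda}]\le|\mathcal A|$ gives a centred log-moment-generating bound $O(k\lambda^2)$ only after you sum the per-letter cumulants with the type weights $kP(a)$. Alternatively, use Bernstein with increments $O(\log k)$ and variance $O(k)$. The claim is true, but the paper's route needs only second and third moments.

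Two small slips, both harmless:
\begin{itemize}
\item For $\xi>C\sqrt{\log k}$, and in your trivial cases with very negative $\xi$, the Gaussian tails are controlled because the variances are bounded \emph{above} uniformly (the paper's constant $M$), not below.
\item Your far and near thresholds, $(a/c)^{1/2}(\log k/k)^{1/4}$ and $a^{1/4}c^{-1/2}(\log k/k)^{1/4}$, do not match. The resulting gap is covered by the near-type argument anyway.
\end{itemize}
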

\begin{proof}
Choose a constant $M$ bounding all conditional variances and all
capacity-achieving variances, and put $H_k=\sqrt{4M\ln k}$.
For $t\leq\delta$, the Berry--Esseen remainder in
\eqref{eq:appendix-berry-esseen} is $O(k^{-1/2})$ uniformly in $P$.
Lipschitz continuity of the reciprocal square-root on the positive
variance interval and \eqref{eq:appendix-capacity-neighbourhood} give
\[
 \frac{\xi+\sqrt k(C_K-I(P,K))}{\sqrt{V(P,K)}}
       -\frac\xi{\sqrt{v(P)}}
 \geq c_1\sqrt k\,t^2-c_2|\xi|t
 \geq-c_3\frac{\xi^2}{\sqrt k}.
\]
For $|\xi|\leq H_k$, the Lipschitz bound for $\Phi$ therefore proves
\eqref{eq:uniform-type-tail}. For $\xi>H_k$, the actual mean is at most
$kC_K$, so the probability on the left is at most
$1-\Phi(H_k/\sqrt M)+O(k^{-1/2})=O(k^{-1/2})$.
For $\xi<-H_k$, the Gaussian tail on the right is at least $1-k^{-2}$,
and the trivial upper bound one suffices.

Outside this neighbourhood, compactness and continuity give a fixed
capacity gap $C_K-I(P,K)\geq g>0$. For $\xi\geq-H_k$, the threshold
exceeds the mean by at least $kg/2$ once $k$ is large. The variance
bound and Chebyshev's inequality give a probability $O(k^{-1})$.
For $\xi<-H_k$, the trivial bound again suffices. Enlarging $c$ covers
the finitely many smaller blocklengths.
\end{proof}

\begin{lemma}[Resource estimates and realization of Gaussian mixtures]
\label{lem:uniform-source}
Let $k\geq2$, let $P_{A^k}$ be any permutation-invariant input law, and set
\[
 p_k=P_{A^k}K^{\otimes k},\qquad
 q_k=P_{A^k}\otimes R^{\mathrm u}_{\mathcal B,k}.
\]
There is a probability measure $\lambda_k$ on $\mathcal V_K$ such that,
for all $\xi$ with $h=kC_K+\xi\sqrt k\geq0$,
\begin{equation}
 \alpha_{2^{-h}}(p_k,q_k)
 \leq1-\int_{\mathcal V_K}\Phi(\xi/\sqrt v)\,\lambda_k(dv)
       +c\frac{\log k}{\sqrt k}.
 \label{eq:uniform-source-upper}
\end{equation}
One measure works for all thresholds simultaneously. Conversely, for every
probability measure $\lambda$ on $\mathcal V_K$, there is a
permutation-invariant input $P_{A^k}$ for which
\begin{equation}
 \left|\alpha_{2^{-h}}(p_k,q_k)
       -1+\int_{\mathcal V_K}\Phi(\xi/\sqrt v)\,\lambda(dv)\right|
 \leq c\frac{\log k}{\sqrt k}.
 \label{eq:uniform-source-realization}
\end{equation}
The constant $c$ is independent of the input law, $\lambda$, and $\xi$.
\end{lemma}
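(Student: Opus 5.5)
The plan is to decompose the permutation-invariant input $P_{A^k}$ into its type components, $P_{A^k}=\sum_P w_P\,U_{\mathcal T_P^k}$. Then I bound the testing curve of the mixture by a mixture of single-type curves, estimate each of those with Lemma~\ref{lem:uniform-type}, and read off $\lambda_k$ as the pushforward of the weights $w_P$ under $P\mapsto v(P)$.

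\emph{Upper bound.} The first step is to pass from $q_k$ to a product reference. For each type $P$, the pointwise bound after \eqref{eq:universal-reference} gives $(PK)^{\otimes k}\le J_k R^{\mathrm u}_{\mathcal B,k}$ with $J_k=|\mathcal P_k(\mathcal B)|$. Hence, on the block $U_{\mathcal T_P^k}\otimes\cdot$, the log-likelihood ratio against $q_k$ is at most $J_{P,k}+\log J_k$, and $\log J_k=O(\log k)$. I then apply the right inequality of \eqref{eq:spectrum} to the whole pair $(p_k,q_k)$ with $\gamma=\tfrac12\log k$:
\[
\alpha_{2^{-h}}(p_k,q_k)\le\sum_P w_P\Pr\{J_{P,k}\ge h-\gamma-\log J_k\}+k^{-1/2}.
\]
This holds because the density $p_k/q_k$ on the block of $a^k\in\mathcal T_P^k$ is $\frac{K^{\otimes k}(b|a)}{R^{\mathrm u}(b)}\le J_k\frac{K^{\otimes k}(b|a)}{(PK)^{\otimes k}(b)}$. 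The threshold shift is $O(\log k)$, which is $O(\log k/\sqrt k)$ in $\xi$. Next, Lemma~\ref{lem:uniform-type} bounds each term by $1-\Phi(\xi'/\sqrt{v(P)})+c\log k/\sqrt k$, where $\xi'=\xi-O(\log k/\sqrt k)$. Since $\Phi(\cdot/\sqrt v)$ is uniformly Lipschitz for $v\ge V_{\min}(K)>0$, I can replace $\xi'$ by $\xi$. Setting $\lambda_k=\sum_P w_P\delta_{v(P)}$ then gives \eqref{eq:uniform-source-upper}. This $\lambda_k$ does not depend on $\xi$, and $v(P)\in\mathcal V_K$ by construction.

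\emph{Realization.} Given $\lambda$, I first approximate it by a discrete measure. Because $v\mapsto\Phi(\xi/\sqrt v)$ is Lipschitz in $v$ uniformly in $\xi$ on $\mathcal V_K$, I can discretize $\mathcal V_K$ into $O(\sqrt k)$ points. Each such point is $V(P^*,K)$ for some capacity-achieving $P^*$, since $V$ is affine on the convex set $\Pi_K$ and its image is the interval. I round each $P^*$ to a $k$-type $P$ at distance $O(1/k)$. By \eqref{eq:appendix-capacity-neighbourhood}, $kI(P,K)=kC_K-O(1/k)$ and $V(P,K)=V(P^*,K)+O(1/k)$. I take $P_{A^k}=\sum w_P U_{\mathcal T_P^k}$ with the discretized weights. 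The upper bound is the estimate just proved, with Berry--Esseen \eqref{eq:appendix-berry-esseen} used in place of the looser Lemma~\ref{lem:uniform-type}.

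For the lower bound I need the matching inequality. Applying the left inequality of \eqref{eq:spectrum} requires the log-likelihood ratio against $q_k$ to be at least $J_{P,k}-O(\log k)$. This would follow from $R^{\mathrm u}_{\mathcal B,k}(b^k)\le \mathrm{poly}(k)\,(PK)^{\otimes k}(b^k)$, but that bound fails for atypical output types. I would handle this by restricting to output words whose type is within $O(\sqrt{\log k/k})$ of $PK$. On such words $U_{\mathcal T_Q}(b)=2^{-kH(Q)+O(\log k)}$ and $(PK)^{\otimes k}(b)=2^{-kH(Q)-kD(Q\|PK)}$ with $kD=O(\log k)$. Hence the ratio is polynomial, and the excluded set has $p$-mass $O(k^{-1})$ by Hoeffding. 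Intersecting the threshold test with this typical set costs only $O(k^{-1})$ in $p$-mass and does not increase the $q$-mass. Gluing the per-type tests together, with $q$-mass controlled blockwise, gives the lower bound $1-\int\Phi(\xi/\sqrt v)\lambda(dv)-c\log k/\sqrt k$.

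I expect the main obstacle to be this last lower bound, which is the uniform control of $R^{\mathrm u}$ against the product output law on typical outputs. I would try the typical-type restriction above first, making sure that the $O(\log k)$ losses stay uniform in $\lambda$ and $\xi$; the $h\ge0$ restriction is what makes the threshold arithmetic valid.
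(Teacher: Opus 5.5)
Your upper bound is the paper's argument: type decomposition, domination of each product output law by $|\mathcal P_k(\mathcal B)|R^{\mathrm u}_{\mathcal B,k}$, the upper spectrum bound applied to the whole pair, conditioning on the input type, Lemma~\ref{lem:uniform-type}, and $\lambda_k=\sum_Pw_P\delta_{v(P)}$. Taking $\gamma=\tfrac12\log k$ instead of $\log k$ makes no difference.

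Your realization is correct but takes a different and heavier route. The paper uses $P_{A^k}=\int P_v^{\otimes k}\,\lambda(dv)$, with $P_v\in\Pi_K$ interpolating between the endpoint-variance inputs. All capacity-achieving inputs share the output law $P_B^*$, so the output marginal is exactly $(P_B^*)^{\otimes k}$. Conditional on $v$, the density against it is an iid sum with mean exactly $C_K$ and variance exactly $v$. Berry--Esseen is then integrated against $\lambda$ directly, with no discretization, no type rounding, and no perturbation of means or variances. The lower bound on the log-likelihood ratio comes from a single Markov inequality, $\Pr\{R^{\mathrm u}_{\mathcal B,k}(B^k)>k(P_B^*)^{\otimes k}(B^k)\}\le k^{-1}$.

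Your type-class mixture has to pay for the rounding. You need $k(C_K-I(P,K))=O(1)$ and $V(P,K)=v+O(1/k)$, and these come from the bounded derivatives of $I$ and $V$ near $\Pi_K$. The first inequality in \eqref{eq:appendix-capacity-neighbourhood} is only a lower bound on the capacity gap, so it does not supply them. Your sharper $O(1/k)$ for $k(C_K-I)$ also requires rounding inside the support of $P^*$, though $O(1)$ suffices.

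Your typical-set step is sound, but it needs $(PK)(b)$ bounded below on its support. This holds because $P_B^*$ is positive on reachable outputs and $P$ is $O(1/k)$-close to $\Pi_K$. That lower bound is what turns $|Q-PK|=O(\sqrt{\log k/k})$ into $kD(Q\Vert PK)=O(\log k)$.

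The obstacle you anticipated can be bypassed even within your route, because only an in-probability bound is needed. Lemma~\ref{lem:uniform-target} uses the domination $U_{\mathcal T_P^k}K^{\otimes k}\le(k+1)^{|\mathcal A|}(PK)^{\otimes k}$. Combined with Markov's inequality on each type block, it gives $R^{\mathrm u}_{\mathcal B,k}(B^k)\le k(k+1)^{|\mathcal A|}(PK)^{\otimes k}(B^k)$ outside probability $k^{-1}$, with no typicality analysis. Discretizing $\lambda$ is also unnecessary: pushing $\lambda$ forward through the map that sends $v$ to a $k$-type rounding of $P_v$ already gives a finite mixture of type classes.
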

\begin{proof}
Decompose the permutation-invariant input law into its type classes:
\[
 P_{A^k}=\sum_{P\in\mathcal P_k(\mathcal A)}
       \theta(P)U_{\mathcal T_P^k},
 \qquad \theta(P)=P_{A^k}(\mathcal T_P^k).
\]
Write $N_k=|\mathcal P_k(\mathcal B)|$. Every product output law is
bounded pointwise by $N_k R^{\mathrm u}_{\mathcal B,k}$. Conditional
on input type $P$, the log-likelihood-ratio for $(p_k,q_k)$ therefore satisfies
\[
 \widetilde J:=\log\frac{K^{\otimes k}(B^k|A^k)}
                                  {R^{\mathrm u}_{\mathcal B,k}(B^k)}
 \leq J_{P,k}+\log N_k.
\]
Use the upper spectrum bound \eqref{eq:spectrum} with $\gamma=\log k$,
then condition its likelihood-ratio event on the input type. This gives
\[
 \alpha_{2^{-h}}(p_k,q_k)
 \leq\sum_P\theta(P)\Pr\{J_{P,k}\geq h-\log k-\log N_k\}+k^{-1}.
\]
Lemma~\ref{lem:uniform-type} and the common Lipschitz constant of the
Gaussian CDFs, whose variances are at least $V_{\min}(K)$, now prove
\eqref{eq:uniform-source-upper} with
$\lambda_k=\sum_P\theta(P)\delta_{v(P)}$. We have used
$\log N_k=O(\log k)$. Here we condition only the likelihood-ratio event
in the spectrum bound. We do not optimize separate tests for different
input types.

To realize a prescribed mixture, choose capacity-achieving inputs
$P_-,P_+$ attaining the two endpoint variances. Their output laws agree, and the conditional
variance is affine on $\Pi_K$. Linear interpolation therefore supplies
a continuous choice $P_v\in\Pi_K$ with $V(P_v,K)=v$ for every
$v\in\mathcal V_K$. When the interval is a singleton, use a constant
choice. Set
\[
 P_{A^k}=\int_{\mathcal V_K}P_v^{\otimes k}\,\lambda(dv).
\]
Its output law is $(P_B^*)^{\otimes k}$, where $P_B^*$ is the common
capacity-achieving output law. Conditional on the sampled $v$, the
random variable
\[
 J=\log\frac{K^{\otimes k}(B^k|A^k)}{(P_B^*)^{\otimes k}(B^k)}
\]
is a sum of iid information densities with mean $C_K$ and variance $v$.
Here unconditional and conditional variances coincide: each input
symbol used by a capacity-achieving law has row divergence $C_K$.
The uniform third-moment bound and $v\geq V_{\min}(K)>0$ give
\begin{equation}
 \sup_u\left|\Pr\{J\leq kC_K+u\sqrt k\}
       -\int\Phi(u/\sqrt v)\,\lambda(dv)\right|
 \leq c_0k^{-1/2}.
 \label{eq:appendix-mixture-normal}
\end{equation}
Apply the Berry--Esseen theorem conditionally on $v$ and then integrate. Its
constant is uniform in the measure $\lambda$.

Product domination gives $\widetilde J\leq J+\log N_k$. Since the actual
output law is $(P_B^*)^{\otimes k}$, Markov's inequality also gives
\[
 \Pr\!\left\{
 \frac{R^{\mathrm u}_{\mathcal B,k}(B^k)}{(P_B^*)^{\otimes k}(B^k)}>k
 \right\}\leq k^{-1}.
\]
Thus $\widetilde J\geq J-\log k$ outside an event of probability $k^{-1}$.
The two spectrum bounds imply
\[
 \Pr\{J\geq h+\log k\}-k^{-1}
 \leq\alpha_{2^{-h}}(p_k,q_k)
 \leq\Pr\{J\geq h-\log k-\log N_k\}+k^{-1}.
\]
Apply \eqref{eq:appendix-mixture-normal} to these tails. The Gaussian
mixture distribution functions have Lipschitz constant at most
$(2\pi V_{\min}(K))^{-1/2}$, so the threshold shifts change their values
by $O(\log k/\sqrt k)$ uniformly in $\lambda$ and $\xi$. This proves
\eqref{eq:uniform-source-realization}.
\end{proof}

\begin{lemma}[Target estimate for every output reference]
\label{lem:uniform-target}
Fix $\eta<\infty$ and $k\geq2$. Let $P^*\in\Pi_K$ and let $P_k$ be a $k$-type
with $\|P_k-P^*\|_2\leq\eta/k$. Put
\[
 p_k=U_{\mathcal T_{P_k}^k}K^{\otimes k},\qquad
 q_k=U_{\mathcal T_{P_k}^k}\otimes R_k,
\]
where $R_k$ is any output distribution. With $v=V(P^*,K)$, uniformly
for $h=kC_K+\xi\sqrt k\geq0$,
\begin{equation}
 \alpha_{2^{-h}}(p_k,q_k)
 \geq1-\Phi(\xi/\sqrt v)-c_{K,\eta}\frac{\log k}{\sqrt k}.
 \label{eq:uniform-target-lower}
\end{equation}
\end{lemma}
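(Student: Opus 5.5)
We lower-bound $\alpha_{2^{-h}}(p_k,q_k)$ with the left spectrum inequality in \eqref{eq:spectrum}. For that we need a lower-tail estimate on the log-likelihood ratio
\[
 \log\frac{p_k}{q_k}(a^k,b^k)=\log\frac{K^{\otimes k}(b^k\mid a^k)}{R_k(b^k)}.
\]
The reference $R_k$ is arbitrary and may be correlated, so it cannot be controlled directly. Instead we compare it with the canonical product reference $(P_kK)^{\otimes k}$ by Markov's inequality, as in the proof of Lemma~\ref{lem:uniform-source}.

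\textbf{Step 1 (remove $R_k$).} Under $p_k$ the output marginal is $\bar p(b^k)=\sum_{a^k}U_{\mathcal T_{P_k}^k}(a^k)K^{\otimes k}(b^k\mid a^k)$. Since $R_k$ is a probability law,
\[
 \Pr_{p_k}\{R_k(B^k)>k\,\bar p(B^k)\}\le k^{-1}.
\]
Next compare $\bar p$ with $(P_kK)^{\otimes k}$. Because $(P_kK)^{\otimes k}=\sum_P P^{\otimes k}K^{\otimes k}$-mixture over types, with the type class $\mathcal T_{P_k}^k$ carrying mass at least $(k+1)^{-|\mathcal A|}$ under $P_k^{\otimes k}$, we get the pointwise bound
\[
 \bar p\le (k+1)^{|\mathcal A|}(P_kK)^{\otimes k}.
\]
Hence, outside an event of probability $k^{-1}$,
\[
 \log\frac{p_k}{q_k}\ge J_{P_k,k}-\log k-|\mathcal A|\log(k+1),
\]
where $J_{P_k,k}$ is the type-conditional density \eqref{eq:appendix-type-density} with $B^k\sim K^{\otimes k}(\cdot\mid a^k)$. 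The law of $J_{P_k,k}$ is the same for every $a^k$ in the class, so this holds under $p_k$.

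\textbf{Step 2 (Gaussian tail).} Put $h'=h+O(\log k)$. Then
\[
 \alpha_{2^{-h}}\ge\Pr\{J_{P_k,k}\ge h'\}-k^{-1}.
\]
Apply Berry--Esseen \eqref{eq:appendix-berry-esseen} to $J_{P_k,k}$. Since $\|P_k-P^*\|_2\le\eta/k$, the estimates \eqref{eq:appendix-capacity-neighbourhood} and the smoothness of $I$ give
\[
 kI(P_k,K)\ge kC_K-O(1/k),\qquad |V(P_k,K)-v|=O(\eta/k),\qquad V(P_k,K)\ge V_{\min}/2.
\]
The Berry--Esseen remainder is therefore $O(k^{-1/2})$ uniformly. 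The threshold becomes
\[
 \xi'=\frac{\xi+O(\log k/\sqrt k)}{\sqrt{V(P_k,K)}}.
\]
We replace $V(P_k,K)$ by $v$ and absorb the location shift. This uses the Lipschitz property of $\Phi$, uniformly in $\xi$. For the variance change we need $|\xi|\cdot O(1/k)$ to be small; for $|\xi|>\sqrt{k}$ the Gaussian term is already within $k^{-2}$ of its limit, so those $\xi$ are harmless. The result is \eqref{eq:uniform-target-lower}. The condition $h\ge0$ only ensures that the budget $2^{-h}\le1$ is legitimate.

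\textbf{Main obstacle.} The delicate step is Step 1: making the bound hold for every reference $R_k$, including non-product, protocol-induced ones, at a cost of only $O(\log k)$ in the threshold. The Markov trick handles this because it compares $R_k$ with the true output marginal, not with a product law. The type-class factor then passes to the product law at polynomial cost. I would double-check the direction of that domination: the type class carries polynomially large mass under $P_k^{\otimes k}$, so $\bar p\le\mathrm{poly}(k)(P_kK)^{\otimes k}$ holds as claimed.
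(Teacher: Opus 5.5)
Your proof is correct and is essentially the paper's argument. The paper also dominates the type-class output law by $(k+1)^{|\mathcal A|}(P_kK)^{\otimes k}$ and applies a Markov-type bound to get $\alpha_{2^{-h}}(p_k,q_k)\ge\Pr\{J_{P_k,k}\ge h+|\mathcal A|\log(k+1)+\log k\}-k^{-1}$, doing your two comparisons in one step, and then uses Berry--Esseen; there the uniform boundedness of $\partial_u\Phi(u/\sqrt s)$ and $\partial_s\Phi(u/\sqrt s)$ over all $u$ makes your case split on $|\xi|$ unnecessary. One small correction: Lipschitz continuity of $I$ only gives $k(C_K-I(P_k,K))=O(\eta)=O(1)$, not $O(1/k)$, since $I$ can decrease linearly in directions leaving the support of $P^*$, but this $O(1)$ shift in the threshold is harmless at the $O(\log k/\sqrt k)$ precision.
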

\begin{proof}
For any $k$-type $P$, the product law with marginal $P$ assigns its type
class mass at least $(k+1)^{-|\mathcal A|}$. Conditioning this law on its type class
gives the uniform law. The corresponding channel output marginal
therefore satisfies
\[
 (U_{\mathcal T_P^k}K^{\otimes k})(b^k)
 \leq(k+1)^{|\mathcal A|}(PK)^{\otimes k}(b^k).
\]
Write $a_k=|\mathcal A|\log(k+1)$. For every $R_k$ and $\gamma>0$,
summing over the exceptional outputs yields
\begin{equation}
 \Pr\!\left\{
 \log\frac{R_k(B^k)}{(PK)^{\otimes k}(B^k)}>a_k+\gamma
 \right\}\leq2^{-\gamma}.
 \label{eq:appendix-arbitrary-reference}
\end{equation}
Outputs outside the support of $(PK)^{\otimes k}$ have zero actual
probability. Where $R_k(B^k)=0$, the log-likelihood-ratio for $(p_k,q_k)$
is $+\infty$, and the following lower bound still holds. The lower
spectrum bound and \eqref{eq:appendix-arbitrary-reference} give
\[
 \alpha_{2^{-h}}(p_k,q_k)
 \geq\Pr\{J_{P_k,k}\geq h+a_k+\log k\}-k^{-1}.
\]
The derivative bounds near $\Pi_K$ imply, uniformly in $P^*$ and $P_k$,
\[
 k(C_K-I(P_k,K))=O(1),\qquad
 V(P_k,K)=v+O(k^{-1}).
\]
In particular, the actual variance stays bounded away from zero for
large $k$. Equation~\eqref{eq:appendix-berry-esseen} therefore compares
the last tail with
$1-\Phi((\xi+\Delta_k)/\sqrt{V(P_k,K)})$, up to $O(k^{-1/2})$,
where $0\leq\Delta_k=O(\log k/\sqrt k)$.
This comparison is uniform even for unbounded $\xi$: for variances
$s$ in a compact subset of $(0,\infty)$, both
\[
 \left|\frac{\partial}{\partial u}\Phi(u/\sqrt s)\right|
 =\frac{\phi(u/\sqrt s)}{\sqrt s},\qquad
 \left|\frac{\partial}{\partial s}\Phi(u/\sqrt s)\right|
 =\frac{|u|\phi(u/\sqrt s)}{2s^{3/2}}
\]
are uniformly bounded over $u\in\mathbb R$. Replacing the location
and variance thus costs $O(\log k/\sqrt k)$, proving the claim for
large $k$. Increasing the constant covers the finitely many smaller
$k\geq2$. Every bound is independent of $R_k$, as required for the
protocol-dependent reference in the converse.
\end{proof}

\section{Shared randomness: agreement and separation}\label{app:sr}
We give examples with positive capacities and dispersions for which
the shared-randomness second-order coefficient agrees with the
non-signalling value, and prove a strict second-order separation for
another pair.
The agreement examples also show that resonance is possible with
shared randomness. Matching the total information variances, however,
does not suffice in general. Whenever the SR and NS values agree,
the EA value agrees as well, since
$\mathrm{SR}\subseteq\mathrm{EA}\subseteq\mathrm{NS}$.

\subsection{Shared-randomness resonance for packet-erasure channels}
\label{app-sr:packet-resonance}
Let $E_a(\delta)$ be the erasure channel with input alphabet
$\{0,1\}^a$: it transmits the entire $a$-bit input with probability
$1-\delta$ and otherwise outputs one erasure symbol. Thus
\[
 C_{E_a(\delta)}=a(1-\delta),\qquad
 V_{E_a(\delta)}=a^2\delta(1-\delta).
\]
For $a>1$, the channel erases all $a$ bits together, unlike a product
of $a$ independent binary erasure channels.

\begin{proposition}\label{app-sr:packet}\label{app-sr:bec}
Fix an integer $a\ge1$ and $0<\delta_T<\delta_W<1$. Let
$T=E_a(\delta_T)$ and $W=\mathrm{BEC}(\delta_W)$, and set
\[
 \alpha=1-\delta_T,\quad \beta=1-\delta_W,\quad
 r=\frac{a\alpha}{\beta},\quad
 \sigma_T=a\sqrt{\alpha(1-\alpha)},\quad
 \sigma_W=\sqrt{a\alpha(1-\beta)}.
\]
For $A\in\{\SR,\EA,\NS\}$ and every fixed $0<\epsilon<1$,
\begin{align}
 m_A^*(n,\epsilon;T,W)
 &=rn+\frac{d_\epsilon}{\beta}\sqrt n+o(\sqrt n),
 \label{app-sr:packet-rate}\\
 d_\epsilon&=\inf_{\epsilon<u<1}
 \{\sigma_T\Phi^{-1}(u)-\sigma_W\Phi^{-1}(u-\epsilon)\}.
 \nonumber
\end{align}
For every $m\ge an$, there is an SR protocol with error exactly
\begin{equation}\label{app-sr:packet-finite}
 \Delta_{n,m}=\max_{t\in\mathbb Z}
 \{\Pr[aK\le t]-\Pr[L\le t]\},
 \qquad K\sim\operatorname{Bin}(n,\alpha),\quad
 L\sim\operatorname{Bin}(m,\beta).
\end{equation}
The displayed binomial laws specify the two marginal distributions.
The protocol couples these counts dependently.
\end{proposition}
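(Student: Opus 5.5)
The plan is to sandwich all three assistance classes between an NS converse and an explicit SR protocol. Since $\SR\subseteq\EA\subseteq\NS$, we have $m^*_{\SR}\le m^*_{\EA}\le m^*_{\NS}$. It therefore suffices to (i) obtain the upper bound for NS from Theorem~\ref{thm:main}, and (ii) construct SR protocols with error exactly $\Delta_{n,m}$ and show that $\Delta_{n,m}$ has the same Gaussian limit.

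\textbf{Step (i): NS upper bound.} Both channels are nontrivial erasure channels with positive capacity, and their capacity-achieving dispersion is a singleton. For $T=E_a(\delta_T)$ we have $C_T=a\alpha$ and $V_T=a^2\alpha(1-\alpha)=\sigma_T^2$. For $W$ we have $C_W=\beta$ and $V_W=\beta(1-\beta)$. Hence $r=a\alpha/\beta$ and $rV_W=a\alpha(1-\beta)=\sigma_W^2$. Substituting these into \eqref{eq:quantile} gives exactly the stated $d_\epsilon$, with the prefactor $1/C_W=1/\beta$. Theorem~\ref{thm:main} therefore yields \eqref{app-sr:packet-rate} for NS, even with an $O(\log n)$ remainder, and in particular the upper bound for SR and EA.

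\textbf{Step (ii): SR protocol with error $\Delta_{n,m}$.} Let $m\ge an$. Shared randomness supplies a uniformly random injective placement of the $an$ packet-bit slots into the $m$ target positions. The sender fills packet $i$ with the target input bits $s$ at the $a$ positions assigned to it. The receiver learns $K\sim\operatorname{Bin}(n,\alpha)$ unerased packets. Conditionally on $K$, these reveal the correct values at a uniformly random $aK$-subset of positions, independently of $s$.

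The receiver then samples $L$ from the conditional law of the monotone (quantile) coupling of $aK$ and $\operatorname{Bin}(m,\beta)$.
\begin{itemize}
\item If $L\le aK$, it reveals a uniformly random $L$-subset of the known positions and erases the rest. By exchangeability, this is a uniformly random $L$-subset of $[m]$ carrying the correct bits, which is exactly $W^{\otimes m}(\cdot\mid s)$ conditioned on $L$ visible bits.
\item If $L>aK$, it reveals all known positions.
\end{itemize}
The simulated output therefore agrees with the target on the success event, for every $s$. Hence the error is at most $\Pr[L>aK]$, and for the monotone coupling this equals $\max_t\{\Pr[aK\le t]-\Pr[L\le t]\}$. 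Conversely, the simulated visible count never exceeds $aK$. Testing the events $\{\text{visible count}\le t\}$ therefore shows that the TV distance is at least this maximum. So the error is exactly $\Delta_{n,m}$, for every input word. The condition $m\ge an$ holds for the relevant lengths, since $r=a\alpha/\beta>a$ follows from $\delta_T<\delta_W$.

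\textbf{Step (iii): Gaussian limit and inversion.} Take $m=rn+(d/\beta)\sqrt n$. Then $\E L-\E[aK]=d\sqrt n$, $\Var(aK)=n\sigma_T^2$ and $\Var L=n\sigma_W^2+O(\sqrt n)$. Berry--Esseen for both binomials, uniform over $t$, gives
\[
\Delta_{n,m}=\sup_\xi\bigl\{\Phi(\xi/\sigma_T)-\Phi((\xi-d)/\sigma_W)\bigr\}+O(n^{-1/2}),
\]
uniformly for $d$ in compacts. This limit is $\mathcal E(d)$ from \eqref{eq:profile} with the present parameters. The inversion argument of Section~\ref{sec:second-order} (positive local slope, monotonicity in $m$) then gives $m^*_{\SR}\ge rn+(d_\epsilon/\beta)\sqrt n-o(\sqrt n)$. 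Combined with Step (i), this proves the claim for all three classes.

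\textbf{Main obstacle.} The delicate point is verifying exactness in Step (ii). The output must be correct conditionally on each input word, not just on average, which relies on the shared random placement making the revealed set independent of $s$. One must also check that the lower bound through count events applies to the worst-row TV. The remaining steps are routine CLT bookkeeping.
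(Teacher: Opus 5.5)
Your overall route is the paper's. You take the NS converse from Theorem~\ref{thm:main} and transfer it via $\SR\subseteq\EA\subseteq\NS$. You place the $an$ packet bits by a shared random injection, reveal $J=\min(aK,L)$ correct bits, get the lower bound from $J\le aK$, and finish with Berry--Esseen and inversion.

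The gap is the coupling in Step~(ii). The monotone quantile coupling $aK=Q_{aK}(U)$, $L=Q_L(U)$ does not minimize $\Pr[L>aK]$. In general its failure probability is not $\max_t\{\Pr[aK\le t]-\Pr[L\le t]\}$.
\begin{itemize}
\item At $m=\lfloor rn\rfloor$ with $\sigma_T\ne\sigma_W$, it gives $L-aK=(\sigma_W-\sigma_T)\sqrt n\,\Phi^{-1}(U)+O(1)$ for $U$ in the bulk. Hence $\Pr[L>aK]\to1/2$, whereas $\Delta_{n,m}\to\epsilon_*<1/2$.
\item At $\sigma_T=\sigma_W$ and $d>0$, it fails with probability tending to one, although $\Delta_{n,m}\to2\Phi(d/(2\sigma_T))-1$.
\end{itemize}
So your bound ``error $\le\Pr[L>aK]$'' does not yield the sesquinormal profile. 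The achievability half of Step~(iii), which rests on the claimed exactness, is therefore unsupported.

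The missing ingredient is a cyclically shifted quantile coupling. Write $F$ for distribution functions, take $U$ uniform on $[0,1)$, and put $\Delta=\Delta_{n,m}$. Set $aK=Q_{aK}(U)$ and $L'=Q_L((U-\Delta)\bmod1)$. Since $(U-\Delta)\bmod1$ is again uniform, $L'\sim\operatorname{Bin}(m,\beta)$. For $U\ge\Delta$,
\[
 F_L(Q_{aK}(U))\ge F_{aK}(Q_{aK}(U))-\Delta\ge U-\Delta ,
\]
hence $L'\le aK$, and so $\Pr[L'>aK]\le\Delta$. The receiver implements this locally, with no feedback: it draws $U$ uniformly from the quantile interval of its observed count and then computes $L'$. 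With this coupling in place of the monotone one, the rest of your argument goes through and coincides with the paper's proof.
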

\begin{proof}
Using shared randomness, choose a uniform ordered sample of $an$
distinct target positions. Pack their input bits, $a$ at a time, into
the $n$ resource symbols.
Conditional on $K=k$ surviving resource packets, the receiver knows
the correct bits on a uniform $ak$-subset of the $m$ target positions.
This statement holds for every target input word, since the random
injection and the erasure pattern are independent of its values.

Write $Q_{aK},Q_L$ for generalized quantiles, and put
$\Delta=\Delta_{n,m}$. The coupling
\[
 aK=Q_{aK}(U),\qquad L'=Q_L((U-\Delta)\bmod1),
 \qquad U\text{ uniform on }[0,1),
\]
satisfies $\Pr\{L'>aK\}=\Delta$. Indeed, for $U\ge\Delta$,
$F_L(Q_{aK}(U))\ge U-\Delta$, hence $L'\le aK$. Every coupling
has failure probability at least $F_{aK}(t)-F_L(t)$ for each $t$.
The receiver implements this coupling by sampling $U$ uniformly within
the quantile interval corresponding to its observed count. It then
reveals a uniformly chosen subset of size $J=\min(aK,L')$ of the
received bits. This requires no communication back to the encoder.
Conditional on $J$, the revealed set is uniform and all its values
are correct, exactly as for the target channel conditional on its
visible count. Thus the worst-input channel error equals
$\TV(\mathcal L(J),\mathcal L(L))=\Delta$: the upper bound follows
from the coupling, and the reverse bound from $J\le aK$.

Because $r>a$, the condition $m\ge an$ holds throughout any fixed
second-order window for all sufficiently large $n$. At
$m=\lfloor rn+d\sqrt n/\beta\rfloor$, uniform Berry--Esseen estimates
give
\begin{equation}\label{app-sr:packet-profile}
 \Delta_{n,m}\longrightarrow
 D(d):=\sup_x\{\Phi(x/\sigma_T)
                         -\Phi((x-d)/\sigma_W)\}.
\end{equation}
Theorem~\ref{thm:main} gives the same NS converse, since
$\sigma_T^2=V_T$ and $\sigma_W^2=rV_W$. The inclusions
$\SR\subseteq\EA\subseteq\NS$ therefore give the same limiting
error for all three assistance classes. The positive branch of $D$
is strictly increasing, and inversion gives
\eqref{app-sr:packet-rate}.
\end{proof}

Thus shared randomness attains the threshold error at the capacity
ratio. To evaluate this threshold, put
\begin{equation}\label{app-sr:packet-threshold}
 \nu=\frac{\sigma_W^2}{\sigma_T^2}
     =\frac{\delta_W}{a\delta_T},\qquad
 \epsilon_*=D(0)=
 \left|\Phi(x_\nu)-\Phi(x_\nu/\sqrt\nu)\right|,
 \qquad
 x_\nu=\sqrt{\frac{\nu\ln\nu}{\nu-1}}.
\end{equation}
At $\nu=1$ set $\epsilon_*=0$. If $\nu\ne1$, then
$d_{\epsilon_*}=0$, and the optimal rate at this fixed positive
threshold satisfies
\[
 \frac{m_A^*(n,\epsilon_*)}{n}=r+o(n^{-1/2}),
 \qquad A\in\{\SR,\EA,\NS\}.
\]
For $a>1$, the error at $m=\lfloor rn\rfloor$ tends to zero when
$\delta_T=\delta_W/a$. This gives zero-threshold resonance for two
fixed, distinct channels at an interior point of
$0<\delta_T<\delta_W$. The statement concerns vanishing asymptotic
error. It does not require exact conversion at finite blocklength.

For example, fix $W=\mathrm{BEC}(1/2)$ and let
$T=E_2(\delta)$, $0<\delta<1/2$. Then
\[
 r=4(1-\delta),\qquad \nu=\frac1{4\delta},
\]
and \eqref{app-sr:packet-threshold} has its zero at $\delta=1/4$.
The interval where $\epsilon_*\le0.01$ has endpoints approximately
$0.230$ and $0.272$, the solutions of
\eqref{app-sr:packet-threshold} with $\epsilon_*=0.01$.
The resulting threshold-error curve is analogous to Fig.~2 of
\cite{lipkabartosik2024dichotomies}, with the same value for SR, EA,
and NS assistance.

At the resonance point, the resource has capacity $3/2$ and variance
$3/4$, whereas the target has capacity $1/2$ and variance $1/4$.
Thus $r=3$ and $V_T=rV_W=3/4$. In particular,
\[
 m_A^*(n,\epsilon)
 =3n+2\sqrt3\,
   \Phi^{-1}\!\left(\frac{1+\epsilon}{2}\right)\sqrt n+o(\sqrt n)
\]
for every fixed positive $\epsilon$. The finite-blocklength protocol
at $m=3n$ uses only random injection and thinning. Its error has the
exact expression
\begin{equation}\label{app-sr:packet-exact-counts}
 \Delta_{n,3n}=2^{-3n}\max_{0\le t\le3n}
 \left\{2^n\sum_{k=0}^{\min(n,\lfloor t/2\rfloor)}
                    \binom nk3^k
              -\sum_{j=0}^{t}\binom{3n}j\right\}.
\end{equation}
All terms in braces are integers. Exact evaluation gives
\[
 9.718\times10^{-3}<\Delta_{250,750}<9.719\times10^{-3}<0.01.
\]
Thus $250$ resource uses simulate $750$ target uses at the capacity
ratio with error below one percent. At a shorter blocklength, a target
length one below the capacity-ratio length also suffices:
\[
 \begin{gathered}
 \Delta_{16,47}=\frac{701606824869}{70368744177664},\\
 9.970\times10^{-3}<\Delta_{16,47}<9.971\times10^{-3}<0.01.
 \end{gathered}
\]
The target length is $47=3\cdot16-1$, giving rate $47/16\simeq2.938$,
only $1/48$ below the capacity ratio in relative terms.
The script \texttt{verify\_packet\_resonance.py} verifies both maxima
and decimal brackets using rational arithmetic. These are the exact
errors of the specified protocol. They give achievable SR bounds.
Optimality at finite blocklength is not claimed.

The case $a=1$ recovers conversion between distinct binary erasure
channels in the direction $\delta_T<\delta_W$. The reverse direction
remains outside this packing argument, since the capacity-ratio
target length is then smaller than $an$ and the required injection
is unavailable.

\subsection{Binary-symmetric self-conversion}
\begin{proposition}\label{app-sr:self}
For $T=W=\mathrm{BSC}(q)$, $0<q<1/2$, and fixed $b\in\mathbb R$,
\begin{equation}\label{app-sr:self-error}
 \lim_{n\to\infty}e_A(n,\lfloor n+b\sqrt n\rfloor)
 =\left[2\Phi\!\left(\frac{bC_q}{2\sqrt{V_q}}\right)-1\right]_+,
 \qquad A\in\{\SR,\EA,\NS\}.
\end{equation}
Consequently all three assistance classes have the expansion
\begin{equation}\label{app-sr:self-rate}
 m_A^*(n,\epsilon)=n+
 \frac{2\sqrt{V_q}}{C_q}
 \Phi^{-1}\!\left(\frac{1+\epsilon}{2}\right)\sqrt n+o(\sqrt n).
\end{equation}
\end{proposition}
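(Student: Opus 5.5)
The plan is to sandwich the SR error between the NS converse and an explicit SR protocol, and to show both have the same limit. For the lower bound, we use $e_{\SR}\ge e_{\EA}\ge e_{\NS}$ together with Proposition~\ref{prop:gaussian} for $T=W=\mathrm{BSC}(q)$. In that case $r=1$, $V_T=V_W=V_q$ and the resonance condition \eqref{eq:resonance-condition} holds. At $m=\lfloor n+b\sqrt n\rfloor$ we have $d_n=(m-n)C_q/\sqrt n=bC_q+O(n^{-1/2})$. So \eqref{eq:resonance} gives the limiting NS error $[2\Phi(bC_q/(2\sqrt{V_q}))-1]_+$, with continuity of $\mathcal E$ absorbing the $O(n^{-1/2})$ shift in $d_n$. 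It therefore suffices to build SR protocols whose error converges to the same value. Inverting the profile then gives \eqref{app-sr:self-rate}, exactly as in the inversion subsection of Section~\ref{sec:second-order}.

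For achievability with $b\le0$, I would simply feed $m\le n$ target inputs directly into the resource and discard the rest, which gives error zero. For $b>0$, I would use additive symmetry. With shared uniform key bits (a one-time pad) and a shared random permutation, the sender places the $m$ target bits so that $n$ of them pass through $T^{\otimes n}$. The remaining $m-n=O(\sqrt n)$ bits must be conveyed as well. Here the plan is to mirror the packet protocol of Proposition~\ref{app-sr:packet}. The relevant statistic of $\mathrm{BSC}(q)^{\otimes m}$ on a uniformly permuted block is the number of flips, and $\mathrm{Bin}(m,q)$ is the law the receiver must reproduce. Because flips are symmetric under permutations and XOR masks, a channel whose output equals the input XOR a noise word $E$ with permutation-invariant law is determined by the weight law of $E$.

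So I want a simulated channel of the form $z=s\oplus E'$, where the weight of $E'$ is coupled to the true weight of the resource noise. My first attempt is to encode the $m$ bits jointly into $n$ resource uses by a linear map: take a random binary matrix or code of rate $m/n$ slightly above one, with the syndrome structure supplied by shared randomness. The receiver then estimates the coset and outputs a word whose error pattern is distributed as $\mathrm{Bin}(m,q)$-weight-uniform, using the quantile-coupling trick from the proof of Proposition~\ref{app-sr:packet}. The intended error of this construction is $\sup_t\{\Pr[\mathrm{Bin}(m,q)\text{-type event}]-\ldots\}$, namely a difference between the information density of the $n$ resource uses exceeding the threshold needed to pin down $m-n$ extra bits, and the target noise weight staying inside a matched window. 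The hope is that, after Berry--Esseen, this reproduces $\mathcal E(d)$ with $\nu=1$.

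The main obstacle is precisely this construction. Conveying $\Theta(\sqrt n)$ extra bits through a noisy channel while leaving the residual noise exactly binomially distributed uniformly over inputs is a joint source--channel problem. That is the difficulty the paper notes for BEC-to-BSC. What I would try first is a nested-lattice or coset scheme for list decoding of the extra bits. The receiver, given the list, selects the candidate of weight nearest a sample drawn from the target weight law, and the failure probability is controlled by the Gaussian tail of the noise weight. If an exact binomial output cannot be achieved, I would accept an approximation whose TV distance to $\mathrm{Bin}(m,q)$ vanishes. By \eqref{eq:contractivity}-type triangle inequalities this adds only $o(1)$, which is enough for the limit statement \eqref{app-sr:self-error}.
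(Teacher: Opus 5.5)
Your lower bound is correct and is the same as the paper's: the chain $e_{\SR}\ge e_{\EA}\ge e_{\NS}$, Proposition~\ref{prop:gaussian} with $r=1$ and $\nu=1$, and \eqref{eq:resonance}. The case $b\le0$ also matches, and so does your observation that after a shared permutation and one-time pad only the Hamming-weight law of the simulated noise matters.

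The gap is achievability for $b>0$, which is the real content of the proposition: you never construct an SR protocol. The rate-$m/n>1$ linear/coset scheme with list decoding and a quantile coupling has no error analysis. Nothing shows that its error equals the $\nu=1$ profile. It is exactly the joint source--channel problem you flag, and you leave it open.

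Your fallback is also misdirected, taken literally. For $b>0$ the limit you must attain is strictly positive. A simulated weight law that is $o(1)$-close to $\operatorname{Bin}(m,q)$ in TV would give vanishing error, contradicting the NS converse you just invoked. What must be controlled is a specific nonvanishing mismatch: a mean shift of order $\sqrt n$ in the weight. The question is how small that shift can be made.

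The missing idea is that no joint coding over the whole block is needed. The paper does the following:
\begin{enumerate}
\item Use $n-N_n$ resource coordinates directly, so their noise is exactly $\mathrm{Ber}(q)$.
\item Spend only $N_n=b\sqrt n/(\rho-1)+O(1)$ resource uses on separate channel coding followed by SR simulation of about $L_n\approx\rho N_n$ uses of a \emph{noisier} $\mathrm{BSC}(t)$. Here $q<t<1/2$ and $1<\rho<C_q/C_t$ are fixed.
\item Both component rates are fixed and strictly below their first-order limits while $N_n\to\infty$. So this block costs only $o(1)$ worst-input error, and it supplies the $b\sqrt n$ extra target uses.
\item After a shared uniform permutation, the error is $\TV(A+B,A+D)$ with $A\sim\operatorname{Bin}(n-N_n,q)$, $B\sim\operatorname{Bin}(L_n,t)$ and $D\sim\operatorname{Bin}(L_n,q)$.
\item A local Gaussian comparison, valid in TV and not merely in distribution, gives the limit $2\Phi\left(b\rho(t-q)/(2(\rho-1)\sqrt{q(1-q)})\right)-1$.
\item Let $\rho\uparrow C_q/C_t$ and then $t\downarrow q$. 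Using $C_q-C_t=\ell_q(t-q)+O((t-q)^2)$ and $V_q=q(1-q)\ell_q^2$, the limit becomes $2\Phi(bC_q/(2\sqrt{V_q}))-1$, matching the converse.
\end{enumerate}

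One further point on inversion. This achievability gives only a pointwise limit for each fixed $b$, not the uniform $O(\log n/\sqrt n)$ estimate of Proposition~\ref{prop:gaussian}. So you cannot invert "exactly as in" Section~\ref{sec:second-order}. Use the pointwise argument instead: target length $n+b\sqrt n$ is eventually feasible for every $b$ below the critical value and eventually infeasible above it. This yields the $o(\sqrt n)$ remainder in \eqref{app-sr:self-rate}.
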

\begin{proof}
For $b\le0$, discard resource uses. Suppose $b>0$ and fix
$q<t<1/2$ and $1<\rho<C_q/C_t$. Channel coding followed by
shared-randomness simulation converts $N$ uses of $\mathrm{BSC}(q)$
to $\lfloor\rho N\rfloor$ uses of $\mathrm{BSC}(t)$ with vanishing
worst-input error as $N\to\infty$
\cite{polyanskiy2010channel,cao2024channel}. Choose both component
rates strictly within their first-order limits. Apply this construction
to
\[
 N_n=\frac{b\sqrt n}{\rho-1}+O(1),\qquad L_n=\rho N_n+O(1)
\]
resource and target positions, and use the other $n-N_n$ resource
coordinates directly. Choose the integer roundings to obtain at least
$\lfloor n+b\sqrt n\rfloor$ target positions, with only $O(1)$ excess,
and discard the excess after a shared uniform permutation.

Up to $o(1)$ TV error, the simulated noise before discarding is a
uniform permutation of independent Bernoulli noise: parameter $q$ on
$n-N_n$ positions and parameter $t$ on $L_n$ positions. Both it and the
ideal target noise are uniform conditional on Hamming weight. Their
TV distance is therefore the distance between the weight laws.
For independent $A_M\sim\operatorname{Bin}(M,q)$,
$B_M\sim\operatorname{Bin}(L_M,t)$ and
$D_M\sim\operatorname{Bin}(L_M,q)$, with $L_M/\sqrt M\to a$,
\begin{equation}\label{app-sr:binomial-shift}
 \TV(A_M+B_M,A_M+D_M)
 \longrightarrow
 2\Phi\!\left(\frac{a|t-q|}{2\sqrt{q(1-q)}}\right)-1.
\end{equation}
We justify this total variation limit directly. Binomial unimodality
gives
\[
 \TV(A_M,A_M+1)=\max_j\Pr\{A_M=j\}=O(M^{-1/2}).
\]
Replacing $B_M,D_M$ by nearest integers to their means costs
$O(\sqrt{(L_M+1)/M})=o(1)$ in TV. For an integer shift
$h_M/\sqrt M\to a(t-q)$, write $p_M$ for the mass function of
$A_M$. Stirling's formula on central windows gives
\[
 \ln\frac{p_M(A_M-h_M)}{p_M(A_M)}
 \ \Longrightarrow\ N(-v^2/2,v^2),\qquad
 v=\frac{a(t-q)}{\sqrt{q(1-q)}},
\]
and boundary events vanish. Applying the bounded continuous function
$x\mapsto(1-e^x)_+$ proves \eqref{app-sr:binomial-shift}.
Thus the weight comparison holds in total variation, as required for
the channel error, rather than only in distribution.

Our protocol thus has limiting error at most
\[
 2\Phi\!\left(
 \frac{b\rho(t-q)}{2(\rho-1)\sqrt{q(1-q)}}\right)-1.
\]
First let $\rho\uparrow C_q/C_t$, then $t\downarrow q$. Since
$C_q-C_t=\ell_q(t-q)+O((t-q)^2)$, this becomes the right side of
\eqref{app-sr:self-error}. The parameters $t$ and $\rho$ remain fixed
while $n\to\infty$, so the argument requires only coding and
simulation at fixed rates below their respective first-order limits.
The NS converse gives equality, and monotonicity in target length
gives \eqref{app-sr:self-rate}.
\end{proof}

The construction preserves the resource noise in all but $O(\sqrt n)$
coordinates. Coding and simulation are needed only on this smaller
block. Direct use supplies the remaining target coordinates.

\subsection{A tighter shared-randomness converse: BEC-to-BSC}
\label{app:sr-converse}
For this final example, let $T=\mathrm{BEC}(1-\alpha)$ and
$W=\mathrm{BSC}(q)$, with $0<\alpha<1$ and $0<q<1/2$.
Thus $\alpha$ is the probability that a resource bit is visible.
Put $r=\alpha/C_q$ and
$m_n(d)=\lfloor rn+d\sqrt n/C_q\rfloor$. An SR protocol uses the
same encoder for every erasure pattern. We first derive a constraint
that follows from this requirement.
The auxiliary-source construction follows the binary-Hamming
erasure-broadcast converse of \cite{kochman2020distortion}.
We account explicitly for conditioning on simultaneous success and
then average over ranges of erasure counts in the Gaussian limit.
The resulting converse is strictly stronger than the NS bound,
although it does not determine the optimal SR coefficient.

\begin{lemma}[Conditioned broadcast bound]\label{app-sr:broadcast}
Let $S$ be uniform on $\{0,1\}^m$, and let $R$ contain all independent
protocol seeds, so that $X^n$ is a deterministic function of $(S,R)$.
Choose uniform nested masks $A_1\subset A_2$, of sizes $k_1<k_2$,
independently of $(S,R)$. Set $B_i=X_{A_i}$ and
$C_0=(A_1,A_2,R)$. Suppose the two reconstructions
$\widehat S_i(B_i,C_0)$ obey
\[
 G=\{d_H(S,\widehat S_i)\le md_i\text{ for }i=1,2\},
 \qquad\Pr(G)=1-\eta>0,
\]
where $0\le d_i\le1/2$. For $\lambda=k_1/k_2$,
$\Delta=k_2-k_1$ and $0<\theta<1/2$,
\begin{align}\label{app-sr:finite}
 &mC_{\theta\star d_1}
 +\lambda m\{h(\theta\star d_2)-h(d_2)\}-k_1\nonumber\\
 &\quad\le
 \frac{\eta\lambda\Delta+(1-\lambda)h(\eta)}{1-\eta}
 +(1+\lambda)\log\frac1{1-\eta},
 \qquad \theta\star t=\theta+(1-2\theta)t.
\end{align}
\end{lemma}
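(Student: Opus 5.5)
The plan is to run the Kochman--Ordentlich--Polyanskiy auxiliary-source argument under the conditional law $P_G:=P(\cdot\mid G)$, so that both distortion constraints hold with certainty, and to absorb the effect of conditioning into the right-hand side of \eqref{app-sr:finite}.

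\textbf{Auxiliary variable.} Introduce $Y=S\oplus Z$, where $Z$ is iid $\mathrm{Bern}(\theta)$ on $m$ coordinates and is independent of everything else. Conditioning on $G$ does not touch $Z$. The log-likelihood ratio satisfies $\log(dP_G/dP)\le\log\frac1{1-\eta}$. Hence, under $P_G$, $S$ and $Y$ each have entropy at least $m-\log\frac1{1-\eta}$, and conditioning on $C_0$ does not change this.

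\textbf{First chain (weak receiver).} Write
\[
 k_1\ge H_G(B_1\mid C_0)=I_G(Y;B_1\mid C_0)+H_G(B_1\mid Y,C_0).
\]
For the mutual-information term, use $H_G(Y\mid B_1,C_0)\le H_G(Y\oplus\widehat S_1)$. On $G$ the word $Y\oplus\widehat S_1=Z\oplus(S\oplus\widehat S_1)$ has expected relative weight at most $\theta\star d_1$, because $Z$ is independent of $S\oplus\widehat S_1$. Subadditivity and concavity of $h$ on $[0,1/2]$ then bound this entropy by $mh(\theta\star d_1)$. This gives
\[
 I_G(Y;B_1\mid C_0)\ge mC_{\theta\star d_1}-\log\tfrac1{1-\eta}.
\]

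\textbf{Second chain (strong receiver).} Since $X^n$ is a function of $(S,R)$ and $R\subset C_0$, we have $H(B_2\mid S,C_0)=0$. Therefore
\[
 H(B_2\mid Y,C_0)\ge I(S;B_2\mid Y,C_0)=H(Y\mid B_2,C_0)-H(S\mid B_2,C_0)+[H(S\mid C_0)-H(Y\mid C_0)].
\]
The bracket is at least $-\log\frac1{1-\eta}$ under $P_G$. Mrs.~Gerber's lemma, applied conditionally on $(B_2,C_0)$, gives $H(Y\mid B_2,C_0)\ge mh(\theta\star h^{-1}(H(S\mid B_2,C_0)/m))$. The map $x\mapsto h(\theta\star h^{-1}(x))-x$ is nonincreasing, and on $G$ we have $H(S\mid B_2,C_0)\le mh(d_2)$ by the same distortion argument. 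Together these yield $H(B_2\mid Y,C_0)\ge m\{h(\theta\star d_2)-h(d_2)\}$ minus the correction term.

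\textbf{Linking the two chains.} Han's inequality for the uniform nested masks gives $H(B_1\mid Y,C_0)\ge\lambda H(B_2\mid Y,C_0)$, that is, a random $k_1$-subset of the $k_2$ revealed symbols keeps at least a fraction $\lambda$ of their conditional entropy.

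\textbf{Main obstacle.} This is the step I expect to be hardest. Han's inequality needs the masks to be independent of $(X^n,Y)$, and conditioning on $G$ destroys that independence. I would therefore apply Han under the unconditioned law $P$ and transfer it to $P_G$ using the mixture decomposition $P=(1-\eta)P_G+\eta P_{G^c}$ with the flag $\mathbf 1_G$. Two facts control the transfer. First, conditional entropy of a mixture lies between the weighted average and that average plus $h(\eta)$. Second, $H(X_{A_2\setminus A_1}\mid\cdots)\le\Delta$ on $G^c$. Tracking the transfer for $B_1$ and for $B_2$, which enter with weights $1$ and $\lambda$, should produce the excess
\[
 \frac{\eta\lambda\Delta+(1-\lambda)h(\eta)}{1-\eta}.
\]
The two $\log\frac1{1-\eta}$ losses, one from each chain with weights $1$ and $\lambda$, give the term $(1+\lambda)\log\frac1{1-\eta}$. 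Rearranging then yields \eqref{app-sr:finite}. I would verify the exact bookkeeping of these constants last, since it is where the argument is most fragile.
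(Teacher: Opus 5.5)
Your architecture is the paper's: the auxiliary $Y=S\oplus Z$ generated after conditioning, the Fano chain giving $I_G(Y;B_1\mid C_0)\ge mC_{\theta\star d_1}-\log\frac1{1-\eta}$, the Mrs.~Gerber chain giving $H_G(B_2\mid Y,C_0)\ge m\{h(\theta\star d_2)-h(d_2)\}-\log\frac1{1-\eta}$, and Han's inequality applied under the unconditioned law and transferred through the flag $F=\mathbf 1_G$. Those steps are sound. The gap is in the transfer step, which you left unverified, and the two facts you list do not yield the stated right-hand side. Put $W=(Y,C_0)$ and write $H_{G^c}$ for entropies under $P(\cdot\mid G^c)$. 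The exact identity is $H_P(B_i\mid W)=(1-\eta)H_G(B_i\mid W)+\eta H_{G^c}(B_i\mid W)+I_P(B_i;F\mid W)$, so Han under $P$ gives
\[
\begin{aligned}
(1-\eta)\bigl[H_G(B_1\mid W)-\lambda H_G(B_2\mid W)\bigr]
\ge{}&-\eta\bigl[H_{G^c}(B_1\mid W)-\lambda H_{G^c}(B_2\mid W)\bigr]\\
&-\bigl[I_P(B_1;F\mid W)-\lambda I_P(B_2;F\mid W)\bigr].
\end{aligned}
\]
Suppose you apply the bound ``weighted average $\le$ mixture entropy $\le$ weighted average $+\,h(\eta)$'' separately to $B_1$ (weight $1$) and $B_2$ (weight $\lambda$). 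Then the last bracket is only bounded by $h(\eta)$, not $(1-\lambda)h(\eta)$. The missing observation is that $B_1$ is a function of $(B_2,C_0)$. Data processing then gives $I_P(B_1;F\mid W)\le I_P(B_2;F\mid W)\le h(\eta)$, so the bracket is at most $(1-\lambda)h(\eta)$. This is what the paper uses. Equivalently, write Han as $\lambda H(X_{A_2\setminus A_1}\mid B_1,W)\le(1-\lambda)H(B_1\mid W)$ and transfer $B_1$ and the increment separately. Then $h(\eta)$ enters only through $B_1$, with weight $1-\lambda$.

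Second, the $G^c$ bracket must be bounded from \emph{above}. That bound comes from $H_{G^c}(B_2\mid W)\ge H_{G^c}(B_1\mid W)$ and $H_{G^c}(B_1\mid W)\le k_1$, which give $(1-\lambda)k_1=\lambda\Delta$. Your second fact, $H(X_{A_2\setminus A_1}\mid\cdots)\le\Delta$ on $G^c$, bounds the increment from above. It therefore bounds this bracket only from below, so it is the wrong inequality here; the numerical coincidence $(1-\lambda)k_1=\lambda\Delta$ hides this. With these two corrections your argument reproduces \eqref{app-sr:finite} exactly. The weaker version with $h(\eta)$ in place of $(1-\lambda)h(\eta)$ would still suffice downstream in Proposition~\ref{app-sr:gap}, since that term is $O(1)$ and is divided by $\sqrt n$. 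It is not, however, the inequality stated in the lemma.
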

\begin{proof}
Write $P$ for the original law, $Q=P(\cdot\mid G)$ and
$D_0=\log(1/(1-\eta))$. After conditioning, generate
$U=S\oplus N$ with fresh independent
$N\sim\operatorname{Ber}(\theta)^{\otimes m}$.
Since $S,U$ are uniform and independent of $C_0$ under $P$,
$D(Q\Vert P)=D_0$ implies
$H_Q(S\mid C_0),H_Q(U\mid C_0)\ge m-D_0$.
The binary Fano inequality and conditional vector Mrs.\ Gerber's
lemma \cite{wyner1973entropy} give, respectively,
\begin{align*}
 I_Q(U;B_1\mid C_0)&\ge mC_{\theta\star d_1}-D_0,\\
 I_Q(S;B_2\mid U,C_0)&\ge
 m\{h(\theta\star d_2)-h(d_2)\}-D_0.
\end{align*}
For the second inequality, use
$H_Q(S\mid B_2,C_0)\le mh(d_2)$ and the fact that
$h(\theta\star h^{-1}(t))-t$ is decreasing on $[0,1]$.
$h^{-1}$ is taken in $[0,1/2]$.

For any law $L$, define
$\mathcal D_L=H_L(B_1\mid U,C_0)-\lambda H_L(B_2\mid U,C_0)$.
Under $P$, uniform masks independent of $(X^n,U,R)$ give
$\mathcal D_P\ge0$: average subset-entropy per coordinate decreases
with subset size, by Han's subset-entropy inequality
\cite{han1978nonnegative}, or directly by the entropy chain rule
along a uniform random ordering.
For $Q^-=P(\cdot\mid G^c)$ the exact mixture identity is
\begin{align*}
 \mathcal D_P={}&(1-\eta)\mathcal D_Q+\eta\mathcal D_{Q^-}\\
 &+I_P(B_1;\mathbf1_G\mid U,C_0)
 -\lambda I_P(B_2;\mathbf1_G\mid U,C_0).
\end{align*}
As $B_1$ is a function of $(B_2,C_0)$, the second line is at most
$(1-\lambda)h(\eta)$. Also
$\mathcal D_{Q^-}\le(1-\lambda)k_1=\lambda\Delta$.
Thus
\[
 \mathcal D_Q\ge-
 \frac{\eta\lambda\Delta+(1-\lambda)h(\eta)}{1-\eta}.
\]
Finally, determinism of $B_2$ given $(S,C_0)$ implies
\[
 I_Q(U;B_1\mid C_0)+\lambda I_Q(S;B_2\mid U,C_0)
 =H_Q(B_1\mid C_0)-\mathcal D_Q\le k_1-\mathcal D_Q.
\]
Combining these bounds proves the claim. If $\eta=0$, use
$\mathcal D_P\ge0$ directly. The independence of the masks is used
only under the original law $P$. The mixture identity accounts for
its loss under conditioning.
\end{proof}

Define
\begin{equation}\label{app-sr:parameters}
 f_\theta=\frac{C_{\theta\star q}}{C_q},\qquad
 a_\theta=\frac{(1-2\theta)\ell_{\theta\star q}}{\ell_q},\qquad
 \kappa_\theta=f_\theta-a_\theta>0.
\end{equation}
To prove the strict inequality, use the positive series
\[
 A(t)=C_{(1-t)/2}
 =\frac1{\ln2}\sum_{j\ge1}\frac{t^{2j}}{2j(2j-1)}.
\]
For $z\in(0,1)$, $A(zt)/A(t)$ is a weighted average of $z^{2j}$.
Its derivative with respect to $\ln t$ is
$2\operatorname{Cov}(j,z^{2j})<0$. Hence $f_\theta$ increases with
$q$, and direct differentiation gives
$f_\theta'(q)=\ell_q(f_\theta-a_\theta)/C_q>0$.
In fact $0<a_\theta<f_\theta<1$.

\begin{proposition}[Strict separation at resonance]\label{app-sr:gap}
Suppose $\alpha(1-\alpha)=rV_q=\sigma^2>0$.
Fix $0<\theta<1/2$, write $\kappa=\kappa_\theta$, and put
\begin{equation}\label{app-sr:g}
 M=\Phi(1+\kappa/2)-\Phi(1-\kappa/2),\qquad
 g=\frac{\kappa M}{2(2+\kappa)}>0.
\end{equation}
Then, for every fixed real $d$,
\begin{equation}\label{app-sr:error-gap}
 \liminf_n e_{\SR}(n,m_n(d))
 \ge g-\frac{|d|}{\sigma\sqrt{2\pi}}.
\end{equation}
In particular, for $\epsilon=g/4$,
\begin{align}\label{app-sr:rate-gap}
 \limsup_n\frac{m_{\SR}^*(n,\epsilon)-rn}{\sqrt n}
 &\le-\frac{g\sigma\sqrt{2\pi}}{2C_q}<0,\nonumber\\
 \lim_n\frac{m_{\NS}^*(n,\epsilon)-rn}{\sqrt n}
 &=\frac{2\sigma}{C_q}
 \Phi^{-1}\!\left(\frac{1+\epsilon}{2}\right)>0.
\end{align}
\end{proposition}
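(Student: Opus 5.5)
The plan is to convert an SR protocol with small worst-row error into a pair of reconstructions, one at each of two erasure counts $k_1<k_2$ separated by order $\sqrt n$. Lemma~\ref{app-sr:broadcast} then shows that such a pair cannot simultaneously achieve the distortions that the target BSC dictates. First I would fix an SR protocol at $m=m_n(d)$ with error $\epsilon_n$ and take $S$ uniform on $\{0,1\}^m$. Because the error is worst-row, the joint law of $(S,\widetilde Z)$ is within $\epsilon_n$ of $(S,S\oplus N_q)$. The erasure pattern of $T^{\otimes n}$ is a uniform mask of size $K\sim\operatorname{Bin}(n,\alpha)$, independent of $(S,R)$, and the decoder sees $X_A$ and $R$. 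I would realise the random size-$k$ mask as the first $k$ elements of a uniform random ordering, so that masks of different sizes are nested.

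Writing $\widehat S_k$ for the decoder output when $|A|=k$, the requirement on the decoder is the following. The mixture over $K$ of the laws of $d_H(S,\widehat S_K)$ must be within $\epsilon_n$ in TV of $\operatorname{Bin}(m,q)$, because a BSC output is uniform given its Hamming weight and we may pass to the weight. By a quantile coupling of $K$ with the target weight $L$, as in Proposition~\ref{app-sr:packet}, counts $K$ in an upper window must have distortion at most about the matching lower quantile of $L/m$.

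Next I would parametrise windows on the Gaussian scale. Put $K=\alpha n+\sigma\sqrt n\,u$ and $L=mq+\sqrt{mq(1-q)}\,w$. The resonance hypothesis $\alpha(1-\alpha)=rV_q$ makes the capacity-matched count $k\approx mC_{d}$ correspond to the same standard-normal level, so $u\leftrightarrow -w$, up to the shift $d$ in location. I would take the two levels $u_1=1-\kappa/2$ and $u_2=1+\kappa/2$, and define $d_i$ as the corresponding target quantiles. With this choice the event $G$ of Lemma~\ref{app-sr:broadcast} fails with probability $\eta$ bounded by the coupling failure, up to $\epsilon_n$ and Berry--Esseen terms. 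I would then apply the lemma conditionally on the masks landing in small neighbourhoods of $k_1,k_2$, and average over such pairs across the window of mass $M=\Phi(u_2)-\Phi(u_1)$.

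Linearising the left side of \eqref{app-sr:finite} in $d_i-q=O(n^{-1/2})$, I expect the following. The term $mC_{\theta\star d_1}$ contributes slope $f_\theta$ times $C_q$, and $\lambda m\{h(\theta\star d_2)-h(d_2)\}$ contributes slope $(1-a_\theta)$. Their sum reproduces $mC_q$ to first order, plus an extra $\kappa_\theta$ times the gap between the two levels. Hence the left side exceeds $k_1$ by about $\kappa(u_2-u_1)\sigma\sqrt n$ minus the allowed slack. Since $\lambda\Delta=\Theta(\sqrt n)$, the right side is about $\eta\lambda\Delta$. This forces $\eta\gtrsim\kappa/(2+\kappa)$ on each window pair of the relevant type. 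Integrating over the window of mass $M$ and halving (for the two-sided choice of pairs) should give the constant $g$.

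The location shift $d$ moves the target quantile by $d/\sigma$ standard units. Its effect on the bound is at most the normal density bound times $|d|/\sigma$, which gives the term $|d|/(\sigma\sqrt{2\pi})$ in \eqref{app-sr:error-gap}. The rate statement \eqref{app-sr:rate-gap} then follows. Choosing $\epsilon=g/4$ forces $g-|d|/(\sigma\sqrt{2\pi})\le g/4$, which fails unless $d\le -\tfrac34 g\sigma\sqrt{2\pi}$; monotonicity in $m$ and a slightly weaker constant give the displayed bound. The NS limit is exactly \eqref{eq:resonance} with $V_T=\sigma^2$, divided by $C_W=C_q$.

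The main obstacle is step three: making the conditioning on $G$ rigorous uniformly across windows, with the $\eta$-dependent right side of \eqref{app-sr:finite} evaluated when $\eta$ is not small. The first thing I would try is to apply the lemma to the protocol restricted to pairs $(k_1,k_2)$ drawn jointly from the random ordering. I would choose $d_i$ slightly loosened by $n^{-1/2}\delta$ and invoke Berry--Esseen to control all Gaussian approximations to $o(1)$. Only then would I pass to the liminf.
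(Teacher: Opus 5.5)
Your outline has the right overall shape: nested masks, Lemma~\ref{app-sr:broadcast}, a linearisation producing $\kappa_\theta$, and the $|d|/(\sigma\sqrt{2\pi})$ correction. However, the step that turns the protocol error into an upper bound on the failure probability of $G$ at prescribed counts is missing its key ingredient. The worst-row error controls only the mixture over $K$ of the distortion laws. On its own it says nothing about which counts carry which distortions. The quantile coupling of Proposition~\ref{app-sr:packet} is an achievability construction, not a constraint obeyed by an arbitrary SR protocol.

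What forces large counts to come with small distortion is a counting bound you never state. Put $J_q(S,z)=mC_q-\ell_q\{d_H(S,z)-mq\}$, so that $\E_S2^{J_q(S,z)}=1$ for every $z$. Given the mask and seed, a decoder seeing $k$ bits has at most $2^k$ outputs. A union bound and Markov's inequality then give $\Pr\{J_q(S,Z)>k+u\mid\text{mask, seed}\}\le2^{-u}$. After normalising by $\sqrt n$, the score $Y$ is dominated by the count $X$ in the limit. Hence $\Pr\{Y\le y<X\}=F_Y(y)-F_X(y)\le e+|d|/(\sigma\sqrt{2\pi})$, because $\mathcal L(Y)$ is within $e$ of $N(d,\sigma^2)$ while $X\sim N(0,\sigma^2)$. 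This inequality is what bounds the conditional failure probability on a count window $I$ lying above the threshold $y$ by $(e+|d|/(\sigma\sqrt{2\pi}))/\Pr\{X\in I\}$. Without it, your claim that $\eta$ is controlled by ``the coupling failure'' has no justification.

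Your window geometry also does not produce the stated $g$. Write the counts in a neighbourhood as $x_i=y_i+\delta_i$, with fixed thresholds $y_i$ and $\delta_i\ge0$; if $x_i<y_i$, failure is certain there. The linearised left side of \eqref{app-sr:finite} is then $A_\theta=\kappa(y_2-y_1)-f_\theta\delta_1-(1-f_\theta)\delta_2$. With your levels $1\mp\kappa/2$, the separation is only $\kappa\sigma$. For $A_\theta$ to stay positive on the whole neighbourhood, the averaging neighbourhoods must therefore have width below $\kappa^2\sigma$. Their mass is then $O(\kappa^2)$ rather than $M\approx\kappa\phi(1)$, and the constant falls short of $g$. ``Integrating over the window of mass $M$ and halving'' does not recover it.

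The constant $g$ comes instead from two separate windows $I_i=[y_i,y_i+\kappa\sigma]$, with $y_1=-\sigma-\kappa\sigma/2$ and $y_2=\sigma-\kappa\sigma/2$. They are centred at $\mp\sigma$ and each has limiting mass $M$. On $I_1\times I_2$ one checks $A_\theta\ge\frac\kappa2(x_2-x_1)$. Hence the inverted bound $[A_\theta]_+/(x_2-x_1+[A_\theta]_+)$ is at least $\kappa/(2+\kappa)$. This uses the factor $1/(1-\eta)$ on the right of \eqref{app-sr:finite}, which your estimate $\eta\lambda\Delta$ drops.

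At each fixed count pair, bound the joint failure by the sum of the two marginal failures, each depending only on its own count. Then average over the product of the two window-restricted count laws, not over pairs drawn from one random ordering, since only one count is ever realised. Combined with the counting bound, this gives $\kappa/(2+\kappa)\le2(e+|d|/(\sigma\sqrt{2\pi}))/M$, which is \eqref{app-sr:error-gap}. Your deduction of \eqref{app-sr:rate-gap} from this bound, and the NS value from \eqref{eq:resonance}, are fine.
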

\begin{proof}
For the visible-bit count $K$ and reconstruction $Z$, set
\[
 J_q(S,Z)=mC_q-\ell_q\{d_H(S,Z)-mq\},\qquad
 X_n=\frac{K-n\alpha}{\sqrt n},\quad
 Y_n=\frac{J_q(S,Z)-n\alpha}{\sqrt n}.
\]
Then $X_n\Rightarrow N(0,\sigma^2)$, whereas the ideal target score
converges to $\Gamma_d=N(d,\sigma^2)$. For each fixed output $z$,
$\E_S2^{J_q(S,z)}=1$. Given a mask with $K=k$ and the seed, a
deterministic decoder has at most $2^k$ reconstruction words.
A union bound and Markov's inequality therefore give
\[
 \Pr\{J_q(S,Z)>k+u\mid\text{mask,seed}\}\le2^{-u}.
\]
Allowing the score to take values on the extended real line, pass to
a subsequential joint limit $(X,Y)$. The preceding bound implies
$X\sim N(0,\sigma^2)$ and $Y\le X$ almost surely. The limit may
have mass at $Y=-\infty$. Data processing and lower semicontinuity
of total variation give
\begin{equation}\label{app-sr:limit-error}
 e:=\TV(\mathcal L(Y),\Gamma_d)
 \le\liminf\{\text{protocol errors on the subsequence}\}.
\end{equation}

We next pass from the finite broadcast bound to a constraint on this
joint limit. For counts $k_i=n\alpha+x_i\sqrt n$ and score thresholds
$y_i$, choose distortion bounds corresponding to
$J_q\ge n\alpha+y_i\sqrt n$. For each mask, use the original
protocol's decoder, ignoring the other mask. Each marginal failure
probability then depends only on its own count, as required for the
averaging below. Taylor expansion of the left side of
\eqref{app-sr:finite}, divided by $\sqrt n$, gives
\[
 A_\theta=a_\theta y_1+(1-a_\theta)y_2
              -f_\theta x_1-(1-f_\theta)x_2.
\]
The terms involving $d$ cancel. The right side, divided by $\sqrt n$,
converges to $\eta(x_2-x_1)/(1-\eta)$. Consequently the sum of the
two marginal failure probabilities is at least
\begin{equation}\label{app-sr:pair}
 \frac{[A_\theta]_+}{x_2-x_1+[A_\theta]_+}.
\end{equation}
We use this bound after averaging over two separated count intervals.
No pointwise convergence of conditional laws is needed. If simultaneous
success has probability zero at a count pair, the sum of marginal
failure probabilities is at least one, and the bound is immediate.
For all other count pairs, apply
\eqref{app-sr:finite} to each finite count pair and invert its
increasing right side in $\eta$. The joint failure probability is
at most the sum of the two marginal failure probabilities. The inverse
converges to \eqref{app-sr:pair} uniformly on compact separated
intervals, since the Taylor estimates are uniform there and the
limiting inverse stays below one. Now average the inequality over
the product of the two normalized count distributions. The CLT and
Portmanteau theorem pass this averaged bound to the joint limit:
write it as an upper bound on the sum of success probabilities,
apply Portmanteau to the open events $Y>y_i$, and take complements.
The count-interval boundaries have zero Gaussian probability.
The argument permits the encoder to depend arbitrarily on blocklength.

Now take
\[
 y_1=-\sigma-\kappa\sigma/2,\qquad
 y_2=\sigma-\kappa\sigma/2,\qquad
 I_i=[y_i,y_i+\kappa\sigma].
\]
Both count intervals have probability $M$ in the limit. If
$x_i=y_i+\delta_i\in I_i$, then
\begin{align*}
 A_\theta-\frac\kappa2(x_2-x_1)
 =\kappa\sigma
 -(a_\theta+\kappa/2)\delta_1
 -(1-a_\theta-\kappa/2)\delta_2\ge0.
\end{align*}
The two coefficients are positive and sum to one. The averaged
version of \eqref{app-sr:pair} therefore gives
\[
 \Pr\{Y\le y_1\mid X\in I_1\}
 +\Pr\{Y\le y_2\mid X\in I_2\}
 \ge\frac\kappa{2+\kappa}.
\]
Since $Y\le X$, we have
\[
 \Pr\{Y\le y<X\}=F_Y(y)-F_X(y)
 \le e+\sup_z|F_{\Gamma_d}(z)-F_{\Gamma_0}(z)|
 \le e+\frac{|d|}{\sigma\sqrt{2\pi}}.
\]
Each conditional failure probability on the preceding two intervals
is at most this quantity divided by $M$.
This proves \eqref{app-sr:error-gap}, using
\eqref{app-sr:limit-error} on a subsequence approaching the liminf.
At $d=-g\sigma\sqrt{2\pi}/2$, the error lower bound is
$g/2>\epsilon$. Target marginalization implies the SR limsup bound
in \eqref{app-sr:rate-gap}. The NS value follows from resonance in
Theorem~\ref{thm:main}.
\end{proof}

For a concrete binary pair, take $q=1/100$,
$\alpha=1-V_q/C_q$ and $\theta=1/10$. Then
$\alpha(1-\alpha)=rV_q$ and
\[
 T=\mathrm{BEC}(1-\alpha),\quad
 W=\mathrm{BSC}(0.01),\quad
 1-\alpha\simeq0.473,\quad r\simeq0.573,
\]
with $g\simeq1.857\times10^{-3}$. At
$\epsilon=g/4\simeq4.643\times10^{-4}$, the SR coefficient is at most
$-1.264\times10^{-3}$, whereas the NS coefficient is approximately
$6.321\times10^{-4}$.
The constants have not been optimized. The result gives a strict
second-order separation without assuming that an SR coefficient
exists or determining its optimal value. The unrestricted EA
coefficient for this pair remains open. The SR converse uses the
requirement of one encoder for different noise realizations, as in
broadcast converses for source--channel coding
\cite{kochman2020distortion}.

\section{Exact conversion of additive channels}
\label{app:additive}
For additive channels, the binary-experiment description is exact
at every blocklength and requires no auxiliary communication.
Let $G,H$ be finite abelian groups, let $P,R$ be noise distributions
on these groups, and define
$T_P(y\mid x)=P(y-x)$ and $W_R(z\mid s)=R(z-s)$.
Then
\begin{equation}
 \begin{split}
 e_{\NS}(n,m;T_P,W_R)
 &=\min_{M:\,MU_{G^n}=U_{H^m}}
                         \TV(MP^{\otimes n},R^{\otimes m})\\
 &=\sup_{0\le y\le1}
 \{\alpha_y(R^{\otimes m},U_{H^m})
                   -\alpha_y(P^{\otimes n},U_{G^n})\}.
 \end{split}
 \label{eq:additive}
\end{equation}
Here the minimum is over stochastic matrices, and $U$ denotes the
uniform law on the indicated group. This formula applies to every
noise distribution, including those with zero dispersion or zero
capacity. Positive capacities are needed only when we express the
asymptotic rate as a capacity ratio.

To prove the first equality, start with a stochastic map $M$ that
preserves the uniform reference and set
\[
 N(x,z\mid s,y)=|G|^{-n}M_{z-s,y-x},
 \qquad x,y\in G^n,\quad s,z\in H^m.
\]
Stochasticity gives the sender marginal $|G|^{-n}$, while uniform
preservation gives the receiver marginal $|H|^{-m}$. Thus $N$ is
non-signalling. Contracting it with the resource channel gives an
additive channel with noise $MP^{\otimes n}$. Every row therefore
has the same total variation error.

Conversely, average any NS channel independently over simultaneous
translations of $(x,y)$ and $(s,z)$. The channel laws are invariant
under these translations, and convexity of total variation ensures
that the error cannot increase. The averaged channel depends only
on $y-x$ and $z-s$, so the NS conditions force both marginals to
be uniform.
Setting $M_{b,a}=|G|^nN(x,z\mid s,y)$ for $a=y-x,b=z-s$
recovers a stochastic map that preserves the uniform reference.
This proves the first equality in \eqref{eq:additive}. The
testing-curve identity \eqref{eq:testing-gap} proves the second.

The uniform input achieves capacity, and
\[
 C_{T_P}=\log|G|-H(P),\qquad
 V_{T_P}=\Var_{A\sim P}\log(|G|P(A)).
\]
Every capacity-achieving input induces a uniform output. The
information density in each row consequently has the distribution
$\log(|G|P(A))$ for $A\sim P$. Hence the minimum and maximum
dispersions agree, even when the capacity-achieving input is not
unique. For BSCs, \eqref{eq:additive} compares binomial testing
curves, which can be evaluated exactly by randomizing at a threshold.
When the capacities and dispersions are positive,
Theorem~\ref{thm:main} gives their Gaussian limit.

\subsection{Zero-dispersion additive channels}
The exact characterization \eqref{eq:additive} does not require
positive dispersion. An additive channel has $V_T=0$ if and
only if its noise is uniform on its support $S\subseteq G$.
In this case $C_T=\log(|G|/|S|)$ and
\begin{equation}
 \alpha_y(P^{\otimes n},U_{G^n})
       =\min\{1,2^{nC_T}y\},\qquad 0\le y\le1.
 \label{eq:zero-additive-testing}
\end{equation}
To see this, zero variance means that $\log(|G|P(a))$ is constant
on the support of $P$. For the product distribution, the likelihood-ratio is therefore $2^{nC_T}$ on $S^n$ and zero elsewhere. This
gives \eqref{eq:zero-additive-testing}.
Such channels need not be noiseless: uniform noise on $\{0,1\}$
in $\mathbb Z_3$ gives capacity $\log(3/2)>0$ and zero dispersion.

If both additive channels have zero dispersion and positive capacities,
substitution in \eqref{eq:additive} gives the exact formulas
\begin{equation}
 \begin{aligned}
 e_{\NS}(n,m;T,W)&=[1-2^{nC_T-mC_W}]_+,\\
 m^*_{\NS}(n,\epsilon;T,W)
   &=\left\lfloor\frac{nC_T-\log(1-\epsilon)}{C_W}\right\rfloor.
 \end{aligned}
 \label{eq:zero-additive-exact}
\end{equation}
The second-order coefficient is therefore zero.

If one channel is additive with zero dispersion and the other has
positive dispersion, we can reduce the problem to coding or
simulation through a noiseless channel. Write $K$ for the
zero-dispersion channel. Comparing
\eqref{eq:zero-additive-testing} with the testing curve of a noiseless
alphabet gives the exact NS conversions
\[
 \id_{\lceil2^{kC_K}\rceil}\ \longrightarrow\ K^{\otimes k}
 \ \longrightarrow\ \id_{\lfloor2^{kC_K}\rfloor}.
\]
The logarithms of these two alphabet sizes differ from $kC_K$ by
at most $O(1)$. The usual coding and simulation expansions
\cite{polyanskiy2010channel,matthews2012linear,cao2024channel}
then yield, with $r=C_T/C_W$ and $C_T,C_W>0$,
\begin{equation}
 m^*_{\NS}(n,\epsilon;T,W)=rn+
 \begin{cases}
  \dfrac{\sqrt{V_T}}{C_W}\Phi^{-1}(\epsilon)\sqrt n+O(\log n),
       &V_W=0<V_T,\\[2mm]
  \dfrac{\sqrt{rV_W}}{C_W}\Phi^{-1}(\epsilon)\sqrt n+O(\log n),
       &V_T=0<V_W.
 \end{cases}
 \label{eq:zero-additive-mixed}
\end{equation}
The positive-dispersion channel need not be additive, provided its
capacity-achieving dispersion is unique. The zero-dispersion channel
must be additive. Composing coding or simulation protocols with
the displayed exact conversions gives both the upper and lower
bounds. We therefore do not need to substitute zero variance into
a Gaussian estimate.

When the zero-dispersion channel is $\id_2$, the coding and simulation
theorems give the same second-order expansion for all three
assistance classes
\cite{polyanskiy2010channel,matthews2012linear,cao2024channel}
under the positive unique-dispersion assumption on the noisy channel:
for $A\in\{\SR,\EA,\NS\}$,
\begin{align*}
 m_A^*(n,\epsilon;T,\id_2)
 &=nC_T+\sqrt{nV_T}\Phi^{-1}(\epsilon)+O(\log n),\\
 m_A^*(n,\epsilon;\id_2,W)
 &=\frac n{C_W}
 +\frac{\sqrt{V_W}}{C_W^{3/2}}\Phi^{-1}(\epsilon)\sqrt n+O(\log n).
\end{align*}
Together, these statements settle the zero-dispersion cases for
additive channels. They do not settle the general nonadditive
boundary $V_{\min}=0$.

\subsection{Finite-blocklength certificates and explicit remainder bounds}
\label{app:finite-resonance}
\begin{figure}[t]
\centering
\begin{tikzpicture}
\begin{axis}[width=.72\linewidth,height=6.2cm,
 title={ $\mathrm{BSC}(0.1)\to\mathrm{BSC}(0.05)$},
 title style={font=\small},
 xmode=log,xmin=10,xmax=500,ymin=.48,ymax=.77,
 xlabel={Resource blocklength $n$},ylabel={Target uses per resource use},
 xtick={10,25,100,500},xticklabels={10,25,100,500},
 ytick={.5,.6,.7,.744117352},
 yticklabels={$0.5$,$0.6$,$0.7$,$r$},
 tick label style={font=\footnotesize},label style={font=\small},
 axis line style={ink!60},tick style={ink!60},
 grid=major,grid style={ink!10},
 legend style={at={(.97,.03)},anchor=south east,font=\scriptsize,
 draw=none,fill=white,fill opacity=.9,text opacity=1},
 clip=false]
 \addplot[plotorange,thick,mark=*,mark size=.8pt] coordinates {
 (10,.6) (15,.666666667) (20,.65) (25,.72) (30,.7)
 (35,.714285714) (40,.725) (45,.711111111) (50,.72)
 (55,.727272727) (60,.716666667) (65,.723076923)
 (70,.728571429) (75,.72) (80,.725) (85,.729411765)
 (90,.722222222) (95,.726315789) (100,.73) (120,.733333333)
 (150,.733333333) (180,.733333333) (200,.735) (250,.74)
 (300,.74) (400,.74) (500,.742)
 };
 \addlegendentry{Exact formula, $\epsilon=\epsilon_*$}
 \addplot[ink,dashed,thick] coordinates {(10,.744117352) (500,.744117352)};
 \addlegendentry{Second order, $\epsilon=\epsilon_*$}
 \addplot[plotblue,thick,mark=*,mark size=.8pt] coordinates {
 (10,.5) (15,.6) (20,.65) (25,.64) (30,.666666667)
 (35,.657142857) (40,.675) (45,.688888889) (50,.68)
 (55,.690909091) (60,.7) (65,.692307692) (70,.7)
 (75,.693333333) (80,.7) (85,.705882353) (90,.7)
 (95,.705263158) (100,.71) (120,.708333333)
 (150,.72) (180,.722222222) (200,.72) (250,.724)
 (300,.726666667) (400,.73) (500,.73)
 };
 \addlegendentry{Exact formula, $\epsilon=0.01$}
 \addplot[plotblue,dashed,domain=10:500,samples=60]
 {.744117351665-.232761412567/sqrt(x)};
 \addlegendentry{Second order, $\epsilon=0.01$}
 \addplot[only marks,mark=o,mark size=3pt,plotorange,thick]
 coordinates {(25,.72)};
\end{axis}
\end{tikzpicture}
\caption{NS finite-blocklength rates evaluated numerically from the exact formula for
$\mathrm{BSC}(0.1)\to\mathrm{BSC}(0.05)$, joined as guides to the eye,
and second-order rate approximations. At
$\epsilon_*\simeq0.042$, the dashed prediction is the capacity
ratio $r$ itself. The circled point is the exact $n=25,m=18$
certificate. Integer and higher-order effects remain visible.
The Gaussian approximation is not a bound.}
\label{fig:finite-bsc}
\end{figure}
For $0<p<1/2$, let $A_{n,p}(u)$ be the piecewise-linear curve joining
$(0,0)$ and
\[
 u_{n,j}=2^{-n}\sum_{i=0}^j\binom ni,\qquad
 A_{n,p}(u_{n,j})=F_{n,p}(j):=
 \sum_{i=0}^j\binom ni p^i(1-p)^{n-i},\quad 0\le j\le n.
\]
This is the testing curve of
$(\operatorname{Ber}(p)^{\otimes n},U_{\{0,1\}^n})$.
Its linear segments correspond to randomization within a Hamming
shell.
Consequently, \eqref{eq:additive} gives
\begin{equation}
 e_{\NS}(n,m;\mathrm{BSC}(p),\mathrm{BSC}(q))
 =\max_{0\le j\le m}
 \{F_{m,q}(j)-A_{n,p}(u_{m,j})\}.
 \label{eq:bsc-finite-testing}
\end{equation}
It suffices to check the target knots. Between two consecutive knots,
the target curve is affine and the source curve is concave, so their
difference is convex and attains its maximum at an endpoint.
For rational $p,q$, every quantity in \eqref{eq:bsc-finite-testing}
is rational. The resulting error bounds are therefore exact
finite-blocklength certificates, rather than asymptotic approximations.

For $p=1/10$, $q=1/20$, $n=25$ and $m=18$, the maximum occurs at
$u=2^{-18}$. The target test accepts only the zero-error word. The
source test accepts shells $0,1$ and $102/300$ of shell $2$. Hence
\begin{equation}
 \begin{split}
 e_{\NS}(25,18)
 &=\left(\frac{19}{20}\right)^{18}
   -\frac{102}{25}\left(\frac9{10}\right)^{23}
 \in(0.035,0.036).
 \end{split}
 \label{eq:short-block-certificate}
\end{equation}
Rational bounds on logarithms give $18<25C_p/C_q<19$ and
$\sqrt{V_p/(rV_q)}>1.19$. The latter inequality also shows that
$\epsilon_*>0.04$: evaluating the Gaussian deficit at
$\xi=-\sqrt{V_p}$ and using convexity of the normal density $\phi$
on $[1,1.19]$ gives
\[
 \epsilon_*\ge\int_1^{1.19}\phi(t)\,dt
 \ge0.19\phi(1.095)>0.04.
\]
Thus the target length $m=18=\lfloor25r\rfloor$ is achievable at
the exact cancellation error $\epsilon_*$, without relying on
rounded Gaussian probabilities. Integer and higher-order corrections
need not have a favourable sign at every blocklength: for the same pair,
$e_{\NS}(50,\lfloor50r\rfloor)=e_{\NS}(50,37)>0.069$.

For $p=2/5$, $q=9/20$, $n=20$ and $m=81$, the maximum in
\eqref{eq:bsc-finite-testing} occurs at $u=u_{81,34}$. Put
\[
 \theta=\frac{u_{81,34}-u_{20,6}}{u_{20,7}-u_{20,6}}.
\]
Then
\[
 \begin{split}
 e_{\NS}(20,81)
 &=F_{81,9/20}(34)-F_{20,2/5}(6)
   -\theta\binom{20}{7}(2/5)^7(3/5)^{13}\\
 &\in(0.007,0.008).
 \end{split}
\]
Since $80<20C_p/C_q<81$, this example achieves the capacity-ratio
length rounded \emph{up}.
The accompanying script \texttt{verify\_resonance.py} checks both
maxima and all displayed rational brackets by exact arithmetic.

For $0<p,q<1/2$, we can also obtain a Gaussian upper bound on the
finite-blocklength error with explicit constants. Define
\[
 \Delta=mC_q-nC_p,\qquad
 B_p=0.4748\,\frac{p^2+(1-p)^2}{\sqrt{p(1-p)}},
\]
\[
 G_{n,m}(a)=\sup_{t\in\mathbb R}
 \left\{\Phi\!\left(\frac{t}{\sqrt{nV_p}}\right)
 -\Phi\!\left(\frac{t-a}{\sqrt{mV_q}}\right)\right\}.
\]
For every $n,m\ge1$ and $\gamma>0$,
\begin{equation}
 e_{\NS}(n,m)
 \le\min\left\{1,\,
 G_{n,m}(\Delta+\gamma)
 +\frac{B_p}{\sqrt n}+\frac{B_q}{\sqrt m}+2^{-\gamma}\right\}.
 \label{eq:explicit-gaussian-upper}
\end{equation}
Indeed, at reference budget $u=2^{-h}$, the spectrum bounds
\eqref{eq:spectrum} give
\[
 A_{m,q}(u)-A_{n,p}(u)
 \le\Pr[J_q^{(m)}\ge h-\gamma]
       -\Pr[J_p^{(n)}\ge h]+2^{-\gamma},
\]
where $J_p^{(n)}$ is the sum of $n$ independent information densities
$\log(2\operatorname{Ber}(p)(X))$. Its mean is $nC_p$ and its
variance is $nV_p$. The standardized third absolute moment of one
summand is $(p^2+(1-p)^2)/\sqrt{p(1-p)}$. Applying the iid
Berry--Esseen bound \cite{shevtsova2011constants} to both tails and
setting $t=h-nC_p$ proves \eqref{eq:explicit-gaussian-upper}.
Using left limits of the distribution functions accounts for the
atoms of these lattice laws.

In particular, at $m=\lfloor rn\rfloor\ge1$, the bound implies
\begin{equation}
 \begin{split}
 e_{\NS}(n,m)\le\epsilon_*
 &+\frac{|\ln(m/(rn))|}{2\sqrt{2\pi\mathrm e}}
  +\frac{\gamma}{\sqrt{2\pi mV_q}}\\
 &+\frac{B_p}{\sqrt n}+\frac{B_q}{\sqrt m}+2^{-\gamma}.
 \end{split}
 \label{eq:capacity-error-remainder}
\end{equation}
To obtain this bound, use $\Delta\le0$, the maximum normal density, and
$\sup_x|x\phi(x)|=1/\sqrt{2\pi\mathrm e}$ to control the change
from $mV_q$ to $rnV_q$. For $n\ge2$, taking
$\gamma=\tfrac12\log n$ gives
an explicit $O(\log n/\sqrt n)$ error remainder. This gives a
rigorous asymptotic error bound, but the constants are coarse for
the short blocks considered above. For those blocks, the exact
testing curves give sharper certificates.

\section{Exact finite-blocklength BEC-to-BSC conversion}
\label{app:bec-bsc-exact}
We now derive the exact finite-blocklength formula used for the NS
curves in Figure~\ref{fig:resonance-rates}(a). The proof characterizes
the feasible NS channels themselves, rather than only comparing joint
distributions for a prescribed input law.
Let $K\sim\operatorname{Bin}(n,1-\delta)$ be the number of unerased
resource symbols, and define
\[
 v_j=\sum_{i=0}^j\binom mi,\qquad
 F_{m,q}(j)=\sum_{i=0}^j\binom mi q^i(1-q)^{m-i}
 \quad(0\le j\le m).
\]

\begin{proposition}[Exact BEC-to-BSC conversion]
\label{prop:bec-bsc-exact}
For $n,m\ge1$, $0\le\delta\le1$, and $0\le q\le1/2$,
\begin{equation}
 e_{\NS}(n,m;\mathrm{BEC}(\delta),\mathrm{BSC}(q))
 =\max_{0\le j\le m}
 \left\{F_{m,q}(j)-\E\min\{1,v_j2^{K-m}\}\right\}.
 \label{eq:bec-bsc-finite-testing}
\end{equation}
The $j=m$ term is zero, so the right side is nonnegative.
The proof constructs an explicit NS channel for every feasible noise
law, without auxiliary communication.
\end{proposition}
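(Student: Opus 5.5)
The plan is to use symmetry to reduce the NS feasible set to a binary experiment, and then apply the testing-curve identity \eqref{eq:testing-gap}. This is the route used for additive channels in Appendix~\ref{app:additive}.

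The target $\mathrm{BSC}(q)^{\otimes m}$ is additive on $\{0,1\}^m$. I would first average any NS protocol over simultaneous translations $(s,z)\mapsto(s\oplus t,z\oplus t)$. Its error cannot increase, by convexity of TV and invariance of the target. After averaging, the simulated channel is additive with some noise law $\widetilde R$ on $\{0,1\}^m$, every row has the same error, and that error is $\TV(\widetilde R,\mathrm{Ber}(q)^{\otimes m})$.

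The BEC resource is not additive, so the resource side needs a separate step. I would average over coordinate permutations of $x^n,y^n$ acting jointly. I would also average over bit flips of those $x_i$ that are visible, that is, flips of $(x_i,y_i)$ when $y_i\neq e$. An erased output carries no input information, so after this symmetrization the sender marginal $E(x\mid s)$ can be taken uniform. The protocol should then depend on $y^n$ only through the erasure pattern together with the XOR of the visible outputs with the inputs.

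Take $p=(\mathrm{BEC}(\delta)^{\otimes n}$ with uniform input$)$ as the joint law and $q_{\rm ref}=$ uniform input $\otimes\, R_Y$ as the reference. Proposition~\ref{prop:converse} with a target uniform on $\{0,1\}^m$ then gives the converse, and the target reference $R_Z$ becomes uniform. The key observation is that the likelihood ratio of the resource pair, with $R_Y$ chosen as the law giving each erasure pattern its BEC probability and visible bits uniform, equals $2^{K}$ where $K$ is the number of unerased symbols. Hence the resource testing curve is $\alpha_y=\E\min\{1,2^Ky\}$. The target experiment $(\mathrm{Ber}(q)^{\otimes m},U)$ has knots at $y=v_j2^{-m}$ with values $F_{m,q}(j)$.

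Two points need care in the converse. First, the reference $R_Y$ here is chosen, not arbitrary. That is legitimate because Proposition~\ref{prop:converse} holds for every $R_Y$. Second, the induced $R_Z$ must be uniform, which follows after the translation averaging. Once this is settled, \eqref{eq:testing-gap} gives the lower bound $\sup_y\{\alpha_y(\text{target})-\alpha_y(\text{resource})\}$. Since the target curve is piecewise linear and the resource curve is concave, the argument used for \eqref{eq:bsc-finite-testing} shows that the supremum is attained at the knots $y=v_j2^{-m}$. This yields the formula \eqref{eq:bec-bsc-finite-testing}.

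For achievability, I would take an optimal stochastic map $M$ with $M q_{\rm ref}=U$, defined on the sufficient statistic consisting of the erasure pattern and the visible noise (which is zero), mapping to the noise $z\oplus s$. I would then define
\[
 N(x,z\mid s,y)=2^{-n}M_{z\oplus s,\,(\text{pattern},\,y\oplus x\text{ on visible})},
\]
with the convention that $y\oplus x$ is taken only on unerased coordinates. The sender marginal is $2^{-n}$ by stochasticity of $M$. The receiver marginal is $\sum_x 2^{-n}M_{z\oplus s,\cdot}$, and summing over $x$ makes the visible part uniform, so this is $(MR_Y)$ evaluated at $z\oplus s$. Reference preservation makes that uniform, hence independent of $s$.

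I expect the main obstacle to be this receiver marginal. The sum over $x$ must reproduce exactly the reference law $R_Y$ conditioned on $y$, and $y$ fixes the erasure pattern, not a weighting by its probability. This suggests the resource reference should be uniform on visible bits but with a free law on patterns. The map $M$ may then depend on the pattern arbitrarily, provided $\sum$ over the visible noise of $M$ is uniform for each pattern.

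I would therefore impose preservation pattern by pattern: for each pattern with $k$ visible symbols, require $M$ applied to the uniform law on $\{0,1\}^k$ to give $U_{\{0,1\}^m}$. This per-pattern constraint is exactly what the NS condition requires, as in \eqref{eq:all-reference-lifting}. Under this constraint, for each pattern the best achievable acceptance at budget $y$ is $\min\{1,2^ky\}$. A monotone transport then achieves all levels simultaneously, giving $\E\min\{1,v_j2^{K-m}\}$. Checking that this per-pattern optimization matches the converse value is the step I would verify most carefully.
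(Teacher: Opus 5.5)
Your converse does not reach the claimed value. With $R_Y$ the BEC output law under uniform input, the likelihood ratio is indeed $2^K$ on the support. However, the testing curve of a pair is the Neyman--Pearson curve: it spends the budget on atoms of largest ratio first, that is, on patterns with many visible bits, and never spends it beyond saturation. It is therefore not $\E\min\{1,2^Ky\}$. That expression is what you get when the budget $y$ is imposed separately inside every erasure pattern. Already for $n=1$, $0<\delta<1$ and $0<y\le(1-\delta)/2$, the true curve is $2y$, while $\E\min\{1,2^Ky\}=(2-\delta)y$. The single-reference bound from Proposition~\ref{prop:converse} is then strictly weaker than \eqref{eq:bec-bsc-finite-testing}. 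Take $n=m=2$, $q=0$, $\delta=1/2$: the resource curve has slopes $4,2,1$ over reference masses $1/16,1/4,1/4$, so $\sup_y\{\min\{1,4y\}-\alpha_y\}=3/8$, attained at $y=1/4$. The right side of \eqref{eq:bec-bsc-finite-testing} is $1-9/16=7/16$. So the step you flagged at the end is not a detail to check: as written, the converse fails.

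What is missing is that the receiver NS condition must be used at every fixed $y$, not through one averaged reference. First, the flip must act on every $x_i$, and on $y_i$ only when it is unerased. Flipping only the visible $x_i$ makes the sender's relabelling depend on $y$ and breaks the sender marginal. After this symmetrization, every NS box has the form $2^{-n}Q_E(z\oplus s\mid u)$ with $2^{-k}\sum_uQ_E(w\mid u)=2^{-m}$ for each pattern $E$. Only the row $R_E=Q_E(\cdot\mid0^k)$ is used, and the constraint forces $R_E(w)\le2^{k-m}$. Conversely, your construction of $N$, which matches the paper's row completion, realizes any such family. Hence the achievable noise laws are exactly the mixtures $\sum_k\lambda_kR_k$ with $R_k\le2^{k-m}$ pointwise. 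The converse is then immediate from $R(B)\le\sum_k\lambda_k\min\{1,2^{k-m}|B|\}$. Alternatively, you can rescue Proposition~\ref{prop:converse} by letting the pattern weights of $R_Y$ depend on the maximizing knot $y^*=v_{j^*}2^{-m}$. Put zero weight on patterns with $2^ky^*\ge1$ and weight proportional to $\lambda_k2^k$ on the rest; the Neyman--Pearson curve at $y^*$ then equals $\E\min\{1,2^Ky^*\}$. For the matching upper bound, the paper minimizes the TV distance over the capped mixtures by max-flow/min-cut. Your ``monotone transport achieving all levels simultaneously'' is valid only after a minimax exchange $\min_R\max_f=\max_f\min_R$, as in the proof of \eqref{eq:testing-gap}, where the transport is carried out for one fixed test $f$. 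Read as a direct construction of the optimal $R$, it fails. For $m=1$, the greedy fill in Hamming order gives $R(1)=\lambda_0/2$ and error $q-\lambda_0/2>0$ whenever $q>\lambda_0/2$, whereas the formula gives $0$.
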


\begin{proof}
Write $x\in\{0,1\}^n$ for the resource input, $s,z\in\{0,1\}^m$
for the target input and output, and $y\in\{0,1,\mathrm e\}^n$
for the resource output. Average an NS channel independently over
source bit flips $(x,y)\mapsto(x\oplus a,a\cdot y)$, with erasures
unchanged, and target translations $(s,z)\mapsto(s\oplus b,z\oplus b)$.
The channel laws are covariant under these operations. Averaging
therefore preserves the NS constraints and, by convexity, cannot
increase the worst-row error. If $E$ is the erasure set of $y$ and
$k=n-|E|$, an averaged channel has the form
\[
 N(x,z\mid s,y)=2^{-n}Q_E(z\oplus s\mid u),
 \qquad u=(y_i\oplus x_i)_{i\notin E}\in\{0,1\}^k.
\]
The two NS constraints are equivalent to
\begin{equation}
 \sum_wQ_E(w\mid u)=1,\qquad
 2^{-k}\sum_uQ_E(w\mid u)=2^{-m}.
 \label{eq:bec-pattern-ns}
\end{equation}
The first identity gives the sender marginal $2^{-n}$. To check
the second, each visible word $u$ occurs for exactly $2^{n-k}$
choices of $x$, so the receiver marginal is $2^{-m}$. Conversely,
the symmetries and NS constraints force both marginals to be uniform,
which gives the two identities.

On an actual BEC output, every unerased bit agrees with the input.
The induced channel therefore uses only the row
$R_E(w)=Q_E(w\mid0^k)$. Equation~\eqref{eq:bec-pattern-ns} requires
$R_E(w)\le2^{k-m}$ for every $w$. This pointwise bound is also
sufficient: for $k\ge1$, set
\begin{equation}
 Q_E(w\mid0^k)=R_E(w),\qquad
 Q_E(w\mid u)=\frac{2^{k-m}-R_E(w)}{2^k-1}\quad(u\ne0^k).
 \label{eq:bec-row-completion}
\end{equation}
Each row is a probability distribution, and their uniform average
is uniform on the target alphabet. For $k=0$, the bound forces
$R_E=U_m$, so no further rows are needed. Thus every family satisfying
the pointwise bounds extends to an NS box on all its inputs.
The simulated channel is additive, with noise law
\[
 R=\sum_{E\subseteq[n]}\delta^{|E|}(1-\delta)^{n-|E|}R_E.
\]
Patterns with the same number $k$ of visible bits have the same
pointwise bound. We may average their rows without changing the
noise law. Conversely, we may assign the same row to all such
patterns. The feasible noise laws are therefore exactly
\begin{equation}
 R=\sum_{k=0}^n\lambda_kR_k,\qquad
 \lambda_k=\Pr\{K=k\},\qquad
 R_k\in\mathcal P(\{0,1\}^m),\quad R_k(w)\le c_k:=2^{k-m}.
 \label{eq:bec-capped-mixture}
\end{equation}
Every row of the simulated channel has error $\TV(R,P)$, where
$P(w)=q^{|w|}(1-q)^{m-|w|}$.

It remains to minimize the total variation distance over these
mixtures. For this, use the max-flow/min-cut theorem of
Ford--Fulkerson \cite{ford1956maximal}. Construct a network with
a source, one vertex for each $k$, one vertex for each word $w$,
and a sink. The edges from the source to $k$, from $k$ to $w$,
and from $w$ to the sink have capacities $\lambda_k$,
$\lambda_kc_k$, and $P(w)$, respectively.
The maximum flow equals the largest attainable overlap
$\sum_w\min(P(w),R(w))$ in \eqref{eq:bec-capped-mixture}.
Indeed, any feasible mixture gives a flow by retaining only its
overlap with $P$. Conversely, extend the mass leaving each vertex
$k$ to a total of $\lambda_k$, without exceeding the capacities
of its edges to the word vertices. This is possible because
$2^mc_k=2^k\ge1$. The completed rows give a feasible mixture whose
overlap with $P$ is at least the original flow.

Fix the set $A$ of word vertices on the source side of a cut.
Minimizing over the side chosen for each $k$ vertex gives the cut
capacity
\[
 P(A)+\sum_k\lambda_k\min\{1,c_k|A^c|\}.
\]
The max-flow/min-cut theorem therefore proves
\begin{equation}
 \min_R\TV(P,R)
 =\max_{B\subseteq\{0,1\}^m}
   \left\{P(B)-G(|B|)\right\},\qquad
 G(l)=\sum_k\lambda_k\min\{1,2^{k-m}l\}.
 \label{eq:bec-lorenz-flow}
\end{equation}
For a fixed cardinality $l$, the largest value of $P(B)$ is the sum
of the $l$ largest noise probabilities. When $q\le1/2$, these
probabilities are ordered by Hamming weight. Between consecutive
shell endpoints $v_j$, the sum is affine in the integer $l$,
whereas $G(l)$ is concave. Their difference therefore attains its
maximum at a shell endpoint, or at $l=0$, where it is zero.
At $l=v_j$, the two terms are $F_{m,q}(j)$ and
$\E\min\{1,v_j2^{K-m}\}$. This proves
\eqref{eq:bec-bsc-finite-testing}.
\end{proof}

For numerical evaluation, each expectation in
\eqref{eq:bec-bsc-finite-testing} needs only a binomial tail and a
truncated exponential moment. With
$a_j=m-\lfloor\log_2v_j\rfloor$,
\[
 \E\min\{1,v_j2^{K-m}\}
 =\Pr\{K\ge a_j\}
   +v_j2^{-m}\sum_{k<a_j}2^k\Pr\{K=k\}.
\]
Thus each error evaluation requires only $m+1$ testing budgets.
Since the error is nondecreasing in $m$, the largest admissible
target length can be found by binary search. The script
\texttt{plot\_resonances.py} reproduces
Figure~\ref{fig:resonance-rates} and checks selected NS points by exact
rational arithmetic at $\delta=435775832558017/10^{15}$.
The exact certificates refer to this rounded rational parameter.
The threshold root used for the plot is determined numerically.

\end{document}